\documentclass{IEEEtran}
\usepackage{amsmath}
\usepackage{amsthm}
\usepackage{amssymb}
\usepackage{epsfig}
\usepackage{subfigure}
\usepackage{cite}
\newtheorem{lem}{Lemma}
\IEEEoverridecommandlockouts
\begin{document}
\title{A Non-stationary Service Curve Model for Estimation of Cellular Sleep Scheduling}
\author{\IEEEauthorblockN{Nico Becker and Markus Fidler}
\thanks{N. Becker and M. Fidler are with the Institute of Communications Technology, Leibniz Universit\"at Hannover, 30167 Hanover, Germany,
(E-mail: nico.becker@ikt.uni-hannover.de and markus.fidler@ikt.uni-hannover.de).}
\thanks{This work was supported by an ERC Starting Grant (UnIQue, StG 306644). Parts of this work have been presented at ITC 27~\cite{becker:tsc}.}}
\maketitle
\begin{abstract}
While steady-state solutions of backlog and delay have been derived for essential wireless systems, the analysis of transient phases still poses significant challenges. Considering the majority of short-lived and interactive flows, transient startup effects, as caused by sleep scheduling in cellular networks, have, however, a substantial impact on the performance. To facilitate reasoning about the transient behavior of systems, this paper contributes a notion of non-stationary service curves. Models of systems with sleep scheduling are derived and transient backlogs and delays are analyzed. Further, measurement methods that estimate the service of an unknown system from observations of selected probe traffic are developed. Fundamental limitations of existing measurement methods are explained by the non-convexity of the transient service and further difficulties are shown to be due to the super-additivity of network service processes. A novel two-phase probing technique is devised that first determines the shape of a minimal probe and subsequently obtains an accurate estimate of the unknown service. In a comprehensive measurement campaign, the method is used to evaluate the service of cellular networks with sleep scheduling (2G, 3G, and 4G), revealing considerable transient backlog and delay overshoots that persist for long relaxation times.
\end{abstract}
%
%
\section{Introduction}
\label{sec:introduction}
The majority of flows in today's computer networks are short-lived~\cite{mellia:shortlivedtcp} and hence dominated by various transient effects, such as TCP slow start, that can have a significant impact on their performance. Another case in point is energy saving in wireless networks to increase the mobiles' battery lifetime by deactivating power-intensive parts, including the display and the radio interface. In cellular radio, mobiles enter different types of sleep states and wake up according to a defined schedule to receive information in the downlink or in case an uplink transmission is requested. The feature is referred to as discontinuous reception (DRX)~\cite{3gpp:MAC:rel8}.

An example of DRX is illustrated in Fig.~\ref{fig:sleepschedule}. Initially, the mobile is in idle state and wake up is triggered by a channel request at $P_0$. Activating the channel takes  $T_0$ units of time before data can be transmitted. After the data transfer, the mobile enters sleep state at $P_1$ and wakes up again according to a defined sleep cycle after $T_1$. The sleep, respectively, wake-up duration $T_i$ can be deterministic or random. Similarly, the service during data transfer can have a constant or a variable service rate, e.g., due to wireless transmission outages. Generally while the mobile is asleep, data are buffered for later transmission, causing a non-negligible transient backlog overshoot and corresponding delays.

The effects of DRX on power consumption and battery lifetime of mobiles have been evaluated, e.g., for 2G and 3G in~\cite{perrucci:impact2g3g}, and for 4G in~\cite{huang:closeexamination}. Apart from energy saving, the re-activation of a dormant connection requires, however, time~\cite{huang:closeexamination, becker:lte}, e.g., for the exchange of signalling messages. This gives rise to an important trade-off between energy saving and delay~\cite{ra:energydelaytradeoff}, that motivated significant work on the optimization of the parameters of the DRX mode for certain traffic types in 3G~\cite{yang:dynamicpowersaving3g, yang:modelingumtspowersaving} and 4G~\cite{zhou:performanceltedrx, bhamber:analyticLTEpowersavingburstytraffic, wu:performancedrx, zhou:ltedrxm2m} networks. An established method is to model the DRX states by a semi-Markov chain to analyze the stationary mean wake-up delay~\cite{yang:modelingumtspowersaving, zhou:performanceltedrx, bhamber:analyticLTEpowersavingburstytraffic, wu:performancedrx,zhou:ltedrxm2m}. Stationary queueing delays are also derived for the DRX mode using an M$\mid$G$\mid$1 model with vacations~\cite{yang:dynamicpowersaving3g}.
\begin{figure}
\centering
\includegraphics[width=0.9\columnwidth]{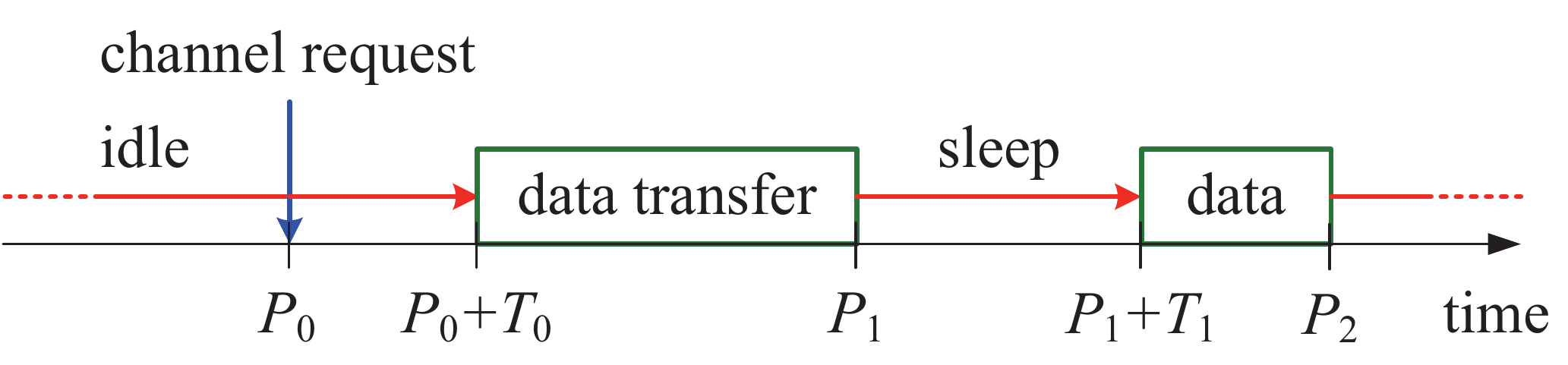}
\caption{Sleep scheduling as implemented in discontinuous reception (DRX).}
\label{fig:sleepschedule}
\end{figure}

The derivation of transient performance measures, however, causes fundamental difficulties. As an example consider the basic M$\mid$M$\mid$1 queue, where the stationary state distribution follows readily from a set of linear balance equations, e.g.,~\cite{ross:probabilitymodels}. The transient behavior, on the other hand, is expressed by a set of differential equations for which mainly approximate or numerical solutions are known~\cite{zhang:transientmm1}. As a consequence, transient solutions of queueing systems are sparse~\cite{wang:transientatm, souza1998algorithm, horvath2012transient} or tailored to specific problems like TCP congestion control~\cite{mellia:shortlivedtcp}.

A framework that does without an assumption of stationarity is the deterministic network calculus~\cite{cruz:networkdelaycalculus, leboudec:networkcalculus, chang:performanceguarantees} that is a theory of time-invariant linear systems under a min-plus algebra~\cite[pp. xiv-xviii]{leboudec:networkcalculus}. Examples of time-invariant linear systems include, e.g., links with a constant capacity or leaky-bucket traffic shapers. Non-linear, time-variant, and possibly non-stationary systems are replaced by time-invariant linear bounds that consider the worst-case behavior. Hence, transient phases are not excluded, however, only maximal backlogs and delays can be expressed, as we will show in Sec.~\ref{sec:systemmodel}. The stochastic network calculus\cite{chang:performanceguarantees, burchard:endtoendstatisticalcalculus, li:effectivebandwidthcalculus2, ciucu:networkservicecurvescaling2, fidler:momentcalculus, jiang:stochasticnetworkcalculus, fidler:netcalcsurvey, ciucu:goodvalue, fidler:netcalcguide}, on the other hand, takes time-variant systems into account, typically by either assuming stationary random processes or using stationary bounds.

In this work, we use the notion of time-variant systems~\cite{chang:dynamicserviceguarantees, agrawal:timevaryingservice} to model non-stationary service characteristics. Time-variant systems are described by bivariate instead of univariate processes that enable considering changes over time. We contribute a non-stationary service curve model and derive solutions for systems with sleep scheduling that provide insights into their transient behavior and lay a basis to reason about the impact of sleep scheduling on the service quality. Measures of interest include the transient overshoot and the relaxation time until the steady-state is approached.

Secondly, we examine methods for estimation of a system's service curve from measurements of probe traffic. We refine known measurement methods to enable the estimation of non-stationary service curves, where we discover specific limitations that are explained by the non-convexity and super-additivity of the service. We devise a novel minimal probing technique that estimates a non-stationary service curve within a defined measure of accuracy. Using this technique, we conduct a comprehensive measurement campaign to evaluate the transient uplink service characteristics of cellular 2G (EDGE), 3G (HSPA), and 4G (LTE) sleep scheduling.

The remainder of this work is structured as follows. In Sec.~\ref{sec:systemmodel}, we define non-stationary service curves, show a method for construction, and derive models of systems with sleep scheduling. In Sec.~\ref{sec:estimation}, we investigate the measurement-based estimation of non-stationary service curves. We reveal difficulties that arise and devise a new minimal probing method. In Sec.~\ref{sec:measurements}, we use our minimal probing method to estimate the service of productive cellular networks with sleep scheduling. We discuss further related works in the respective sections. Sec.~\ref{sec:conclusion} presents brief conclusions.
%
%
\section{Service Models and Performance Bounds}
\label{sec:systemmodel}
In this section, we define regenerative service processes, where regeneration points mark the start of new transient phases, as the basic model of non-stationary systems and show a first application that evaluates the performance of deterministic sleep scheduling (Sec.~\ref{sec:systemmodelregenerative}). Considering systems with random service and random sleep scheduling, we derive non-stationary service curves as a stochastic characterization of the regenerative service processes (Sec.~\ref{sec:systemmodelrandom}). While we restrict the exposition to non-stationary systems, we note that non-stationary traffic can be dealt with in the same way.
%
%
\subsection{Regenerative Service Processes}
\label{sec:systemmodelregenerative}
We consider systems with cumulative arrivals $A(t)$, where $A(t)$ denotes the number of bits that arrive in the time interval $(0,t]$. By convention, there are no arrivals for $t \le 0$ so that we generally consider $t \ge 0$. Clearly, $A(t)$ is a non-negative, non-decreasing function, and $A(0)=0$. Shorthand notation $A(\tau,t) = A(t) - A(\tau)$ is used to denote the arrivals in $(\tau,t]$ where $t \ge \tau \ge 0$. Trivially, $A(t,t)=0$ for all $t \ge 0$. Similarly, $D(t)$ denotes the cumulative departures from the system.

The service that is provided by the system is characterized by a time-variant service process $S(\tau,t)$ that specifies the service as a bivariate function of the interval $(\tau,t]$. It establishes the departure guarantee~\cite{chang:performanceguarantees,chang:timevaryingfiltering,chang:dynamicserviceguarantees,agrawal:timevaryingservice}
\begin{equation}
D(t) \ge \inf_{\tau \in [0,t]} \{ A(\tau) + S(\tau,t) \} =: A \otimes S(t).
\label{eq:serviceprocess}
\end{equation}
By convention, $S(\tau,t)$ is non-negative and $S(t,t) = 0$ for all $t \ge 0$. The operator $\otimes$ that is defined by Eq.~\eqref{eq:serviceprocess} is known as convolution under a min-plus algebra~\cite{chang:performanceguarantees,leboudec:networkcalculus}. We note that $\otimes$ is associative but not commutative in general. For linear systems\footnote{Given pairs of arrivals and corresponding departures $A_i(t), D_i(t)$ with index $i$. A system is min-plus linear if any min-plus linear combination of arrivals $\min \{A_i(t),A_j(t)\} + c$ results in the corresponding linear combination of the departures $\min \{D_i(t),D_j(t)\} + c$ where $c$ is an arbitrary constant.}, such as a work-conserving link with a time-variant capacity, Eq.~\eqref{eq:serviceprocess} holds with equality~\cite[Ex. 5.2.4]{chang:performanceguarantees}. Further examples of Eq.~\eqref{eq:serviceprocess} include scheduling with cross-traffic~\cite{fidler:momentcalculus} and networks of systems where a network service process $S^{net}(\tau,t)$ is computed from the service processes $S^i(\tau,t)$ of the individual systems $i=1 \dots n$ by recursive insertion of Eq.~\eqref{eq:serviceprocess} as~\cite{chang:performanceguarantees}
\begin{equation}
S^{net}(\tau,t) = S^1 \otimes S^2 \otimes \dots \otimes S^n(\tau,t).
\label{eq:networkserviceprocess}
\end{equation}
The notion of a network service process characterizes a network of systems $S^1, S^2, \dots, S^n$ by a single equivalent system $S^{net}$, implying an immediate extension of results for single systems to networks. Given this quality, we will use the terms system and network interchangeably.

The service guarantee of Eq.~\eqref{eq:serviceprocess} enables the derivation of performance bounds. An upper bound of the backlog $B(t) = A(t) - D(t)$ at time $t$ follows by insertion of Eq.~\eqref{eq:serviceprocess} as
\begin{equation}
B(t) \le \sup_{\tau \in[0,t]} \{A(\tau,t) - S(\tau,t)\}.
\label{eq:backlogbound}
\end{equation}
The definition of $B(t)$ considers all data that are in the system including data in buffers as well as data in transmission~\cite{leboudec:networkcalculus}. The first-come first-serve (fcfs) delay at time $t$ defined as $W(t) = \inf\{w \ge 0: A(t) \le D(t+w) \}$ can be derived similarly as
\begin{equation}
W(t) \le \inf \biggl\{w \ge 0: \! \sup_{\tau \in[0,t]} \{A(\tau,t) - S(\tau,t+w) \} \le 0 \biggr\}.
\label{eq:delaybound}
\end{equation}

As opposed to the state-of-the-art, that assumes stationary service processes, we consider the service as a non-stationary, regenerative process with regeneration points $\mathbb{P} = \{P_0,P_1,P_2,\dots\}$ where $P_0=0$ and $P_i < P_{i+1}$ for all $i \ge 0$. We divide $S(\tau,t)$ into segments
\begin{equation}
S_i(\tau,t) = S(\tau+P_i,t+P_i)
\label{eq:regenerativeservicsample}
\end{equation}
for all $0 \le \tau \le t \le P_{i+1} - P_i$ and $i \ge 0$, so that $S_i(\tau,t)$ is the service process between the $i$th and the $(i+1)$th regeneration point. The defining characteristic of a regenerative process~\cite{ross:probabilitymodels} is that the $S_i(\tau,t)$ are statistical replicas, i.e.,
\begin{equation}
\mathsf{P}[S_i(\tau,t) \le x] = \mathsf{P}[S_j(\tau,t) \le x]
\label{eq:statisticalreplica}
\end{equation}
for all $i,j,x \ge 0$, and $0 \le \tau \le t \le \min \{P_{i+1}\!-\!P_i, P_{j+1}\!-\!P_j \}$. Owing to Eq.~\eqref{eq:statisticalreplica}, we omit the index $i$ in the sequel. Also, we will not explicitly mention the constraint $t \le P_{i+1}-P_i$ and assume that the next regeneration point is generally spaced sufficiently apart.

In the following, we present a first application to deterministic sleep scheduling, as introduced in Fig.~\ref{fig:sleepschedule}. We define the respective time-variant service process (Sec.~\ref{sec:deterministicsleeping}) and derive performance bounds thereof (Sec.~\ref{sec:performancebounds}). Further, we include a related time-invariant model of a link with vacations from the literature to show the general limitations of the time-invariant approach (Sec.~\ref{sec:linkwithvacations}). Finally, we include exact results for a special case to verify the tightness of the bounds (Sec.~\ref{sec:exactmarkovmodel}).
%
%
\subsubsection{Deterministic Sleep Scheduling}
\label{sec:deterministicsleeping}
We consider a system that if idle goes to sleep state according to a defined protocol, see Fig.~\ref{fig:sleepschedule}. Wake up is scheduled deterministically $T$ units of time after entering the sleep state. The transmission rate in sleep state is zero and otherwise it is $R$. Taking advantage of the fact that each transition to sleep state is a regeneration point, we define the time-variant service process for $t \ge \tau \ge 0$ as
\begin{equation*}
S^{tlr}(\tau,t) =
\begin{cases}
0, & t \le T \\
R (t-T), & t > T, \tau \le T \\
R (t-\tau), & t > T, \tau > T
\end{cases}
\end{equation*}
that can also be expressed as
\begin{equation}
S^{tlr}(\tau,t) = R[t-\max\{\tau,T\}]_+
\label{eq:transientlatencyrate}
\end{equation}
where $[x]_+ = \max \{0,x\}$ is the non-negative part of $x$. Eq.~\eqref{eq:transientlatencyrate} takes the form of a latency-rate function where the latency is transient, abbreviated by superscript $tlr$.
\subsubsection{Performance Bounds}
\label{sec:performancebounds}
To evaluate the influence of the sleep cycle $T$ on the performance of different types of traffic, we specify the traffic arrivals by a general statistical envelope function $\mathcal{A}^{\varepsilon}(t)$ that satisfies the sample path guarantee
\begin{equation}
\mathsf{P}[A(\tau,t) \le \mathcal{A}^{\varepsilon}(t-\tau),\, \forall \tau \in [0,t] ] \ge 1-\varepsilon
\label{eq:statisticalenvelopedefinition}
\end{equation}
for all $t \ge 0$. Above, $\varepsilon \in (0,1]$ is a probability of overflow. With Eq.~\eqref{eq:statisticalenvelopedefinition}, statistical performance bounds follow readily by substitution of $\mathcal{A}^{\varepsilon}(t-\tau)$ for $A(\tau,t)$, e.g., the backlog bound Eq.~\eqref{eq:backlogbound} yields
\begin{equation}
\mathsf{P} \biggl[B(t) \le \sup_{\tau \in [0,t]} \{\mathcal{A}^{\varepsilon}(t-\tau) - S(\tau,t)\}\biggr] \ge 1-\varepsilon .
\label{eq:statisticalbacklogboundhalf}
\end{equation}
In previous work~\cite{becker:tsc}, we used an established method~\cite{jiang:stochasticnetworkcalculus} for construction of $\mathcal{A}^{\varepsilon}(t)$ that is based on Chernoff's theorem and the union bound. We improve this envelope by applying a recent technique~\cite{jiang:noteonsnetcalc, poloczek:schedulingmartingales} that uses Doob's martingale inequality~\cite{doob:stochasticprocesses} instead. An envelope for processes with independent and identically distributed (iid) increments is
\begin{equation}
\mathcal{A}^{\varepsilon}(t) = \frac{1}{\theta} ( \ln \mathsf{M}_A(\theta,t) - \ln \varepsilon) ,
\label{eq:statisticalenvelope}
\end{equation}
where $\mathsf{M}_A(\theta,t) = \mathsf{E}[e^{\theta A(\tau,\tau+t)}] = (\mathsf{M}_A(\theta,1))^t$ for $\tau,t \ge 0$ is the moment generating function (MGF) of $A$ and $\theta > 0$ is a free parameter. We defer the derivation to the appendix.

For a numerical study, we consider the special case of a discrete time equivalent of a stationary Poisson arrival process that enables an exact solution for comparison. In each time-slot, the probability of a packet arrival is an independent Bernoulli trial with parameter $\alpha \in [0,1]$ such that the number of packet arrivals $N(t)$ in an interval of $t$ time-slots is binomial with the same parameter $\alpha$. The individual packet sizes $Y(i)$ with index $i = 1,2,\dots$ are iid geometric random variables with parameter $\beta \in (0,1]$. Parameter $\alpha$ has the interpretation of an average arrival rate and $1/\beta$ is the average size of packets. For the special case of a system with constant service rate $R=1$, $\alpha/\beta$ is the utilization and $\alpha < \beta$ is required for stability. The respective MGFs are $\mathsf{M}_N(\vartheta,t) = (\alpha e^{\vartheta} + 1-\alpha)^t$ and $\mathsf{M}_Y(\theta) = \beta e^{\theta}/(1-(1-\beta)e^{\theta})$ for $\theta \in [0,-\ln(1-\beta))$~\cite{ross:probability}. The cumulative arrival process is the doubly stochastic process $A(\tau,t) = \sum_{i=N(\tau)+1}^{N(t)} Y(i)$. It has MGF $\mathsf{M}_A(\theta,t-\tau) = \mathsf{M}_N(\ln\mathsf{M}_Y(\theta),t-\tau)$~\cite{ross:probability, fidler:netcalcguide} so that by insertion of $\mathsf{M}_N(\vartheta,t)$ and $\mathsf{M}_Y(\theta)$ we obtain
\begin{equation}
\mathsf{M}_A(\theta,t) = \left( \frac{\alpha \beta e^{\theta}}{1-(1-\beta)e^{\theta}} + 1-\alpha \right)^t.
\label{eq:poissonmgf}
\end{equation}
Clearly, $\mathsf{M}_A(\theta,t) = (\mathsf{M}_A(\theta,1))^t$ as $A(t)$ has iid increments.

In Fig.~\ref{fig:backlog_time_variant_vs_invariant_martingale}, we illustrate the progression of the backlog bound Eq.~\eqref{eq:statisticalbacklogboundhalf} over time, labeled time-variant service. The curve shows an evident transient overshoot. The parameters of the service process given by Eq.~\eqref{eq:transientlatencyrate} are $T=100$ and $R=1$. For the Poisson arrival process we use the MGF Eq.~\eqref{eq:poissonmgf} with parameters $\alpha=0.09$ and $\beta=0.3$ corresponding to a utilization of $0.3$. We choose $\varepsilon=10^{-9}$ and optimize the free parameter $\theta > 0$ in Eq.~\eqref{eq:statisticalenvelope} numerically.
\begin{figure}
\hspace{-10pt}
\subfigure[time-variant vs. time-invariant]{
\includegraphics[width=0.51\columnwidth]{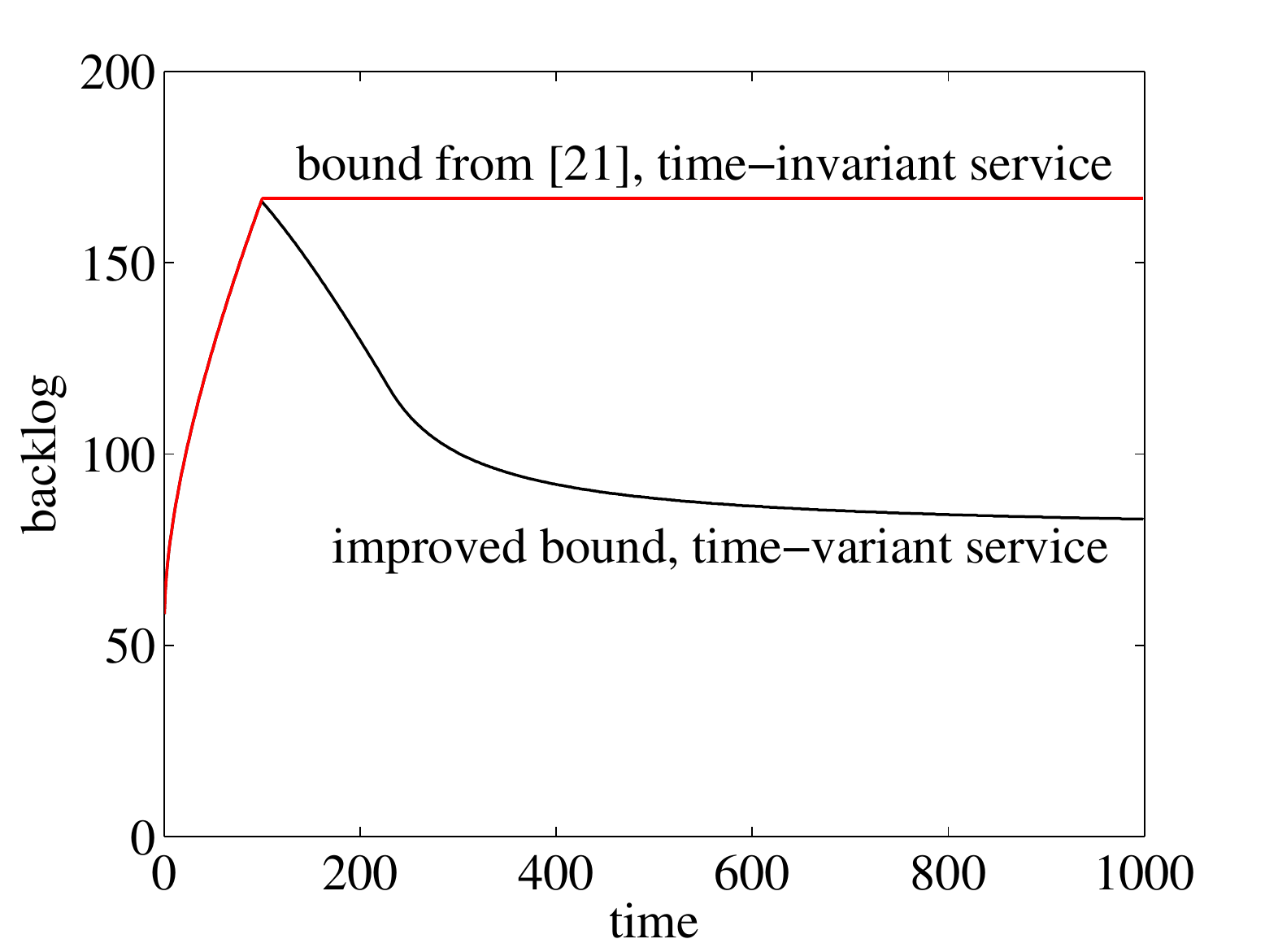}
\label{fig:backlog_time_variant_vs_invariant_martingale}
}
\hspace{-10pt}
\subfigure[time-variant bounds vs. quantile]{
\includegraphics[width=0.51\columnwidth]{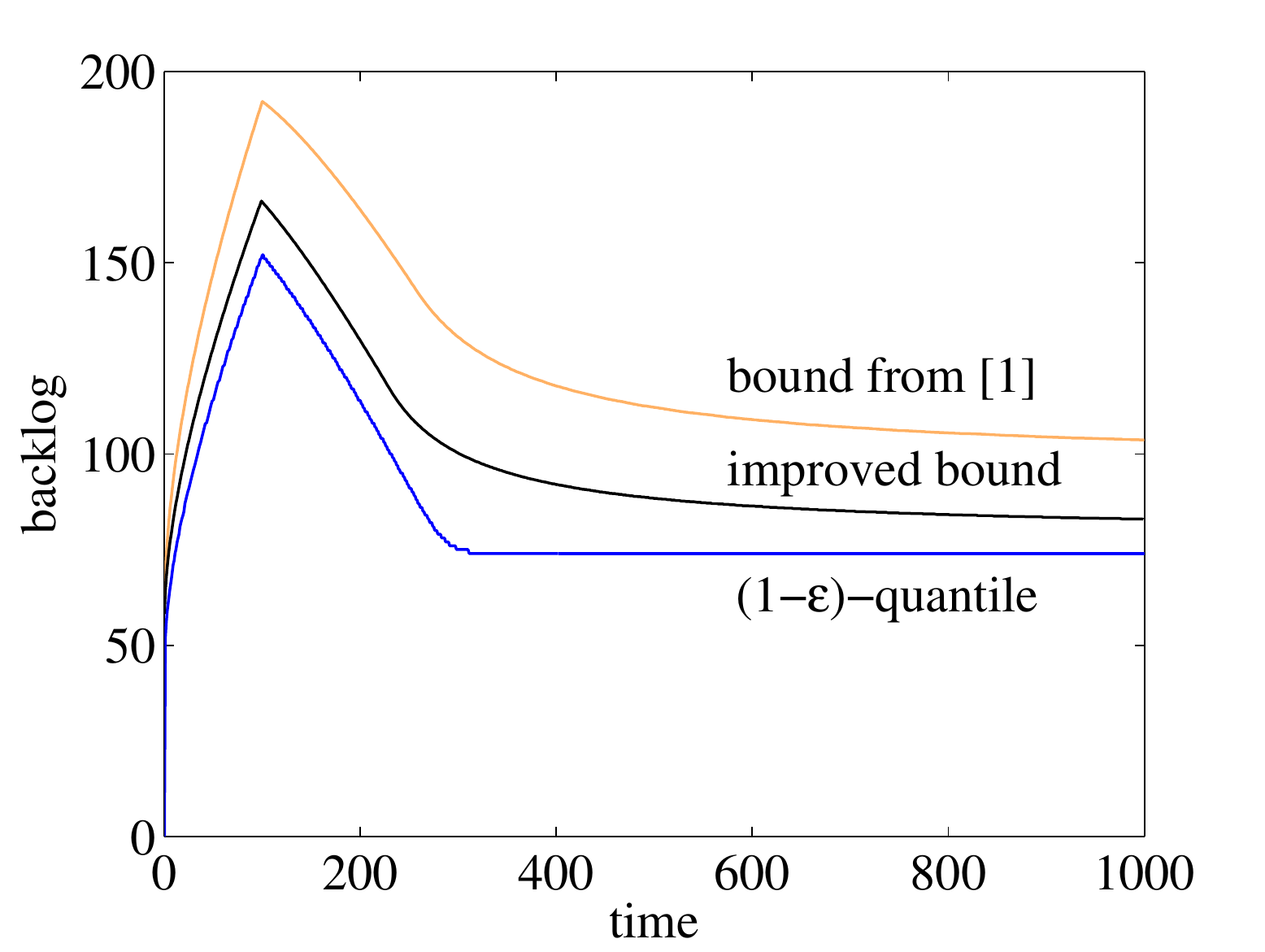}
\label{fig:backlog_martingale_chernoff_exact}
\hspace{-10pt}
}
\caption{Progression of the transient backlog over time. The time-variant service model correctly estimates the shape of the $(1-\varepsilon)$-quantile.}
\label{fig:backlog_qt}
\end{figure}
\subsubsection{Work-conserving Link with Vacations~\cite{chang:performanceguarantees}}
\label{sec:linkwithvacations}
A related model of a work-conserving link that takes deterministically bounded vacations whenever there is no data to transmit is proposed in~\cite[Th. 2.3.16]{chang:performanceguarantees}. The model is formulated in the deterministic network calculus where the service is expressed by the time-invariant, univariate function
\begin{equation}
S^{slr}(\tau,t) = R [t-\tau-T]_+ = S^{slr}(t-\tau)
\label{eq:stationarylatencyrate}
\end{equation}
that depends only on the width of the interval $(\tau,t]$ but not on its absolute position in time. Eq.~\eqref{eq:stationarylatencyrate} has the well-known type of a latency-rate function, however, the latency is stationary as it is not tied to any regeneration points as in Eq.~\eqref{eq:transientlatencyrate}.

Fig.~\ref{fig:backlog_time_variant_vs_invariant_martingale} shows that both, the time-variant and the time-invariant service model, reveal the same growth of the backlog bound until service starts at $T=100$. How the transient backlog is cleared after $T$ and eventually converges to a stationary backlog bound is, however, only explained by the time-variant model. To see why the time-invariant model cannot include this relaxation, note that the $\sup$ in Eq.~\eqref{eq:statisticalbacklogboundhalf} is generally non-decreasing in $t$ if $S$ is a univariate function.
\subsubsection{Exact Markov Model}
\label{sec:exactmarkovmodel}
To evaluate the tightness of the performance bounds, we take advantage of the memorylessness of the arrivals and set up a Markov model that enables an exact (albeit numerical) transient solution. While an exact solution is feasible for the special case of the discrete-time Poisson process, we note that the envelope approach extends to a variety of non-trivial arrival processes including self-similar, long-range dependent~\cite{rizk:fbm,liebeherr:heavytailed}, and heavy-tailed processes~\cite{liebeherr:heavytailed}. Martingale bounds are also available for Markov modulated~\cite{poloczek:schedulingmartingales}, and autoregressive processes~\cite{chang:performanceguarantees, poloczek:schedulingmartingales}.

The state of the Markov chain $K(t) \ge 0$ represents the number of arrivals that are in the system at $t$. For $t \le T$ the transition matrix $\mathbf{Q}(t)$ is composed of the probabilities $q_{i,i} = 1-\alpha$, $q_{i,i+1} = \alpha$, and all other $q_{i,j} = 0$. For $t > T$ the service starts so that $q_{i,i-1} = (1-\alpha)\beta$, $q_{i,i} = (1-\alpha)(1-\beta) + \alpha\beta$, $q_{i,i+1} = \alpha(1-\beta)$, and all other $q_{i,j} = 0$. The Markov chain starts in state $K(0) = 0$, i.e., the initial state distribution $\mathbf{P}(0)$ is the column vector $(1,0,0,\dots)$. The state distribution for $t > 0$ follows by repeated insertion of $\mathbf{P}(t) = \mathbf{Q}(t) \mathbf{P}(t-1)$.

It follows that the state distribution is binomial for $t \le T$, whereas for $t > T$ the distribution makes a transition and for $t \rightarrow \infty$ attains the geometric stationary state distribution~\cite{arita:discretemm1}
\begin{equation*}
\mathsf{P}[K(\infty)=k] = \frac{\beta - \alpha}{\beta (1-\alpha)} \left(\frac{\alpha(1-\beta)}{(1-\alpha)\beta}\right)^k .
\end{equation*}
The backlog distribution can be computed as
\begin{equation}
\mathsf{P}[B(t)=b] = \sum_{k=1}^{\infty} \mathsf{P}[B(t)=b|K(t)=k] \, \mathsf{P}[K(t)=k]
\label{eq:exactbacklog}
\end{equation}
for $b > 0$ and $\mathsf{P}[B(t)=0]=\mathsf{P}[K(t)=0]$ for $b=0$. Using the memorylessness of the geometric distribution, the conditional backlog in state $k$ is the sum of $k$ iid geometric random variables that is negative binomial, i.e., for $k,b > 0$
\begin{equation*}
\mathsf{P}[B(t)=b|K(t)=k] = \binom{b-1}{k-1} \beta^{k} (1-\beta)^{b-k} .
\end{equation*}

For comparison with Fig.~\ref{fig:backlog_time_variant_vs_invariant_martingale}, we depict in Fig.~\ref{fig:backlog_martingale_chernoff_exact} the $(1-\varepsilon)$-quantile of $B(t)$ as defined by Eq.~\eqref{eq:exactbacklog} as well as the previous bound~\cite[Eq.~(7)]{becker:tsc}. We observe that the bounds that are derived from the time-variant service model provide good estimates that recover the shape of the quantile well. The deviations are due to inequalities that are invoked in the derivation of the envelope functions, where the use of Doob's martingale inequality shows a clear improvement.

\begin{figure}
\hspace{-10pt}
\subfigure[$\alpha \in \{0.09,0.12,\dots,0.21\}$]{
\includegraphics[width=0.51\columnwidth]{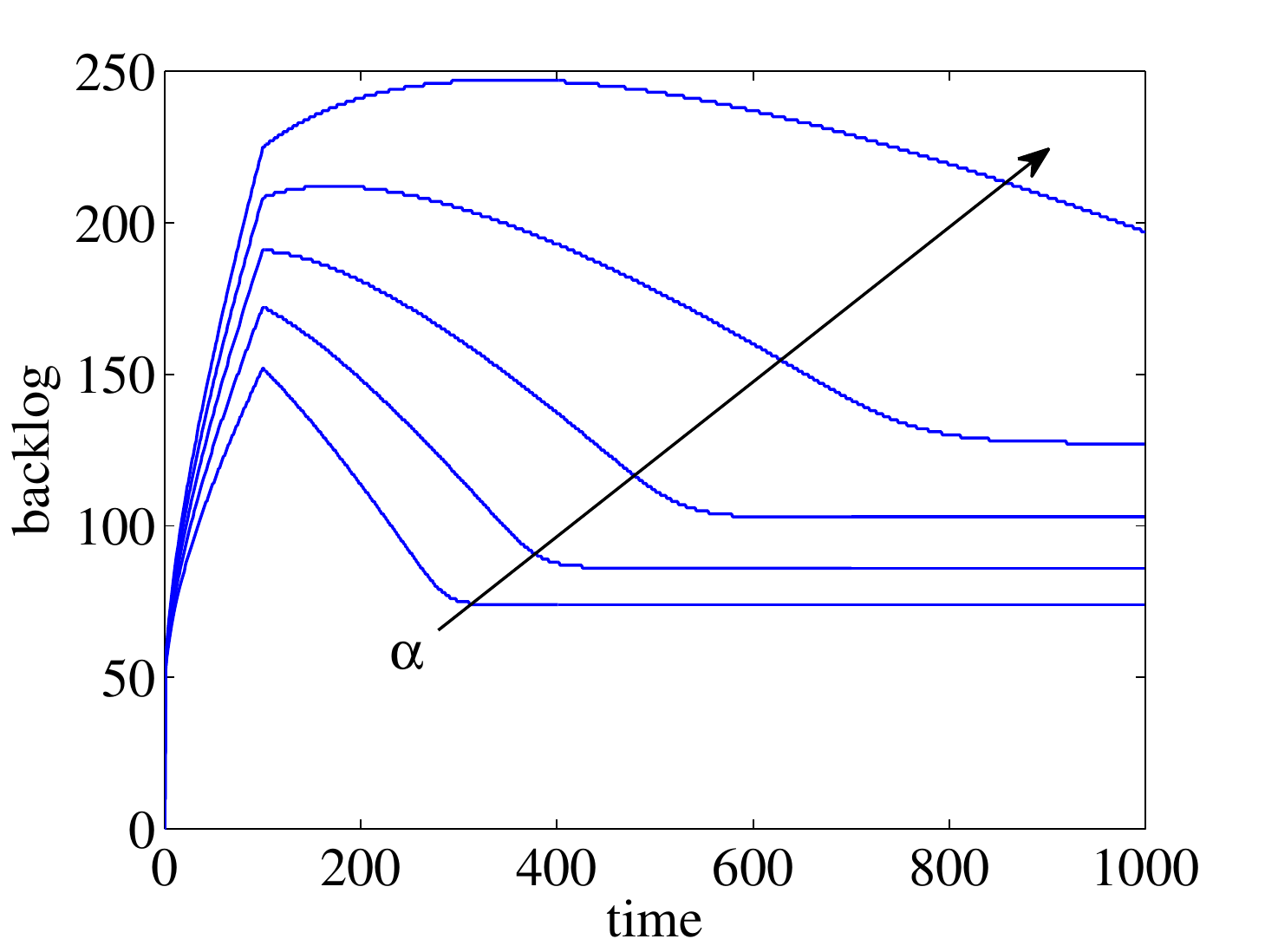}
\label{fig:load_qt}
}
\hspace{-10pt}
\subfigure[$T \in \{0,25,\dots,150\}$]{
\includegraphics[width=0.51\columnwidth]{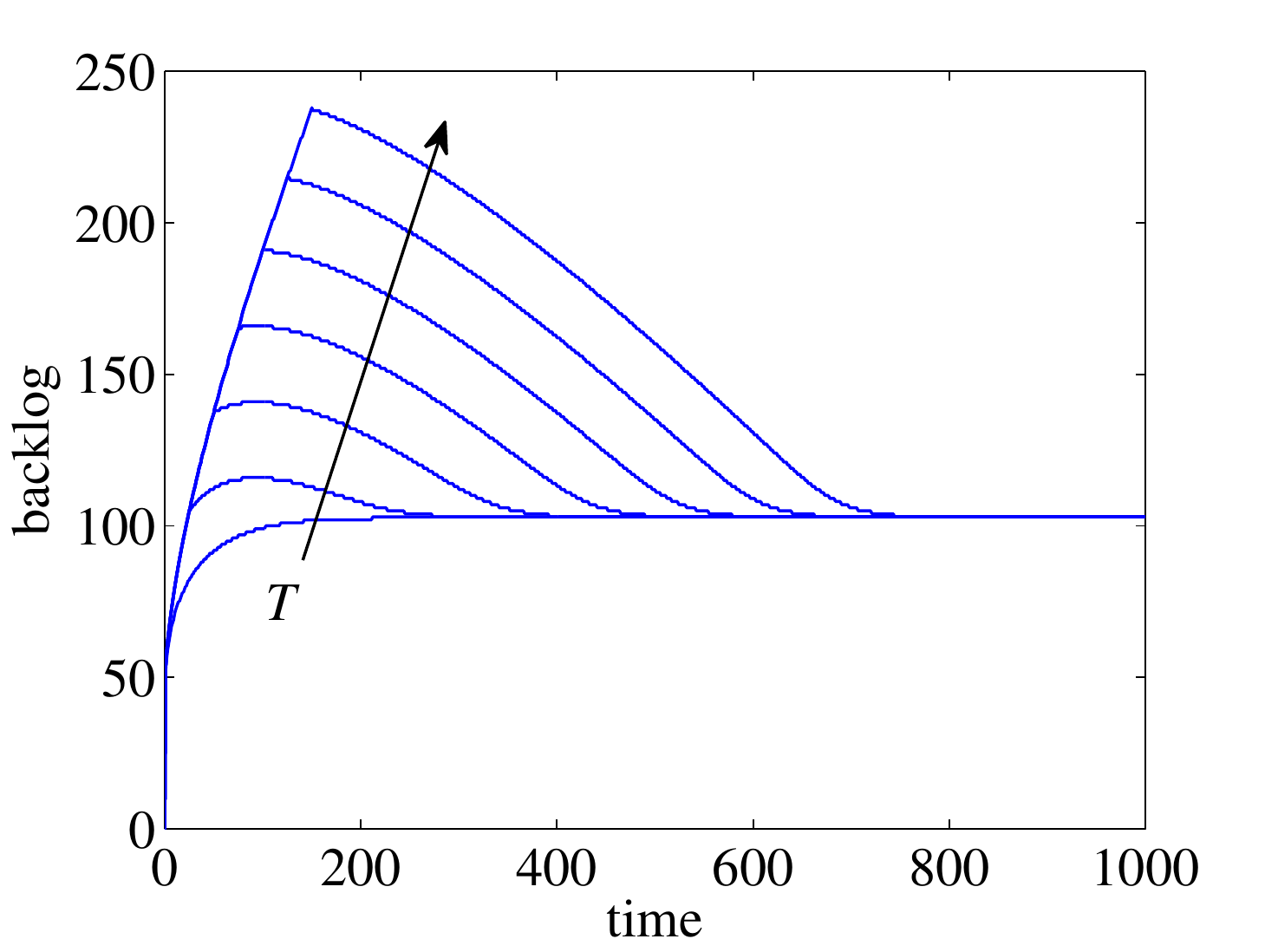}
\label{fig:convergence_qt}
\hspace{-10pt}
}
\caption{Impact of arrival rate $\alpha$ and sleep cycle $T$ on the backlog quantile.}
\label{fig:parameters_qt}
\end{figure}
Fig.~\ref{fig:parameters_qt} evaluates the impact of the arrival rate $\alpha$ and the sleep cycle $T$. The remaining parameters are as in Fig.~\ref{fig:backlog_qt}. The measures of interest~\cite{wang:transientatm} are the maximum overshoot compared to the steady-state and the relaxation time, i.e., the time that is required to reach the steady-state within a defined range. Fig.~\ref{fig:load_qt} shows that $\alpha$ has a significant impact on both quantities. Interestingly, if $\alpha$ is large, the maximum overshoot occurs after $T$, i.e., during the transition from binomial to geometric state distribution. The relaxation time reaches values that are larger than $T$ by an order of magnitude.
%
%
\subsection{Non-stationary Service Curves}
\label{sec:systemmodelrandom}
Based on the concept of time-variant, regenerative service process, we consider the service as a non-stationary random process and derive a basic stochastic characterization. We define a lower service envelope $\mathcal{S}^{\varepsilon}(\tau,t)$ that satisfies
\begin{equation}
\mathsf{P}[S(\tau,t) \ge \mathcal{S}^{\varepsilon}(\tau,t),\, \forall \tau \in [0,t]] \ge 1-\varepsilon ,
\label{eq:bivariateenvelope}
\end{equation}
for all $t \ge 0$ where $\varepsilon \in (0,1]$ is a probability of underflow. Compared to the state-of-the-art definition of univariate service and arrival envelope functions~\cite{burchard:endtoendstatisticalcalculus}, such as Eq.~\eqref{eq:statisticalenvelopedefinition}, the essential difference is that Eq.~\eqref{eq:bivariateenvelope} specifies a time-variant, bivariate envelope function. We emphasize that this generalization is crucial for modelling transient changes over time, as already shown for the deterministic case in Sec.~\ref{sec:deterministicsleeping} and~\ref{sec:linkwithvacations}. It brings further basic differences about, e.g., the convolution is commutative for univariate but not for bivariate functions.

A system with service envelope $\mathcal{S}^{\varepsilon}(\tau,t)$ provides the service guarantee
\begin{equation}
\mathsf{P} [D(t) \ge A \otimes \mathcal{S}^{\varepsilon}(t)] \ge 1-\varepsilon ,
\label{eq:servicecurve}
\end{equation}
for all $t \ge 0$. We refer to $\mathcal{S}^{\varepsilon}(\tau,t)$ as \emph{non-stationary service curve}. To see that Eq.~\eqref{eq:servicecurve} follows from Eq.~\eqref{eq:bivariateenvelope} for a system Eq.~\eqref{eq:serviceprocess}, add $A(\tau)$ to both sides of Eq.~\eqref{eq:bivariateenvelope} to obtain
\begin{equation}
\mathsf{P}[A(\tau) + S(\tau,t) \ge A(\tau) + \mathcal{S}^{\varepsilon}(\tau,t),\, \forall \tau \in [0,t]] \ge 1-\varepsilon .
\label{eq:bivariateenvelope2}
\end{equation}
Since Eq.~\eqref{eq:bivariateenvelope2} makes a sample path argument for all $\tau \in [0,t]$, it also holds that
\begin{equation*}
\mathsf{P}\biggl[\inf_{\tau \in [0,t]} \{A(\tau) + S(\tau,t)\} \ge \inf_{\tau \in [0,t]} \{A(\tau) + \mathcal{S}^{\varepsilon}(\tau,t)\}\biggr] \ge 1-\varepsilon
\end{equation*}
where we finally substitute $D(t) = \inf_{\tau \in [0,t]} \{A(\tau) + S(\tau,t)\}$ using Eq.~\eqref{eq:serviceprocess} to arrive at Eq.~\eqref{eq:servicecurve}.

For a given service process $S(\tau,t)$, a non-stationary service curve $\mathcal{S}^{\varepsilon}(\tau,t)$ can be derived as
\begin{equation}
\mathcal{S}^{\varepsilon}(\tau,t) = -\frac{1}{\theta(\tau,t)} ( \ln \mathsf{M}_S(-\theta,\tau,t) + \rho (t-\tau) - \ln(\rho\varepsilon)\!) ,
\label{eq:statisticalserviceenvelope}
\end{equation}
where $\mathsf{M}_S(-\theta,\tau,t) = \mathsf{E}[e^{-\theta S(\tau,t)}]$ is the negative MGF, respectively, Laplace transform and $\theta(\tau,t) > 0$ and $\rho \in (0,1/\varepsilon]$ are free parameters. The derivation is in the appendix.
%
%
\subsubsection{Random Sleep Scheduling}
\label{sec:randomsleepscheduling}
The concept of non-stationary service curve enables the stochastic analysis of systems. We consider a work-conserving system with random sleep scheduling and random service increments $Z(t)$ for $t \ge 0$. When entering sleep state, the system regenerates and wakes up after a random time $T \ge 0$, i.e., $Z(t)=0$ for $t \in [0,T]$. The service process is computed as $S(\tau,t) = \sum_{\upsilon=\tau+1}^t Z(\upsilon)$ for all $t > \tau \ge 0$ and $S(t,t)=0$ for all $t \ge 0$. To derive the MGF of $S(\tau,t)$, we first consider the number of usable time-slots in $(\tau,t]$, i.e., after time $T$
\begin{equation*}
U(\tau,t) = [t-\max\{\tau,T\}]_+ .
\end{equation*}
The MGF of $U(\tau,t)$ is composed of three terms
\begin{multline}
\mathsf{M}_U(\theta,\tau,t) = \\ e^{\theta (t-\tau)} \mathsf{P}[T \le \tau] + \sum_{\upsilon = \tau+1}^{t} e^{\theta(t-\upsilon)} \mathsf{P}[T = \upsilon] + \mathsf{P}[T > t] ,
\label{eq:slotsafterT}
\end{multline}
that correspond to the cases where the start of the service $T$ occurs before and including $\tau$, within $(\tau,t]$, and after $t$, respectively. Given the service increments $Z(t)$ for $t > T$ are iid with MGF $\mathsf{M}_Z(\theta)$, the MGF of the service process is
\begin{equation}
\mathsf{M}_S(\theta,\tau,t) \!=\! \mathsf{E}\bigl[(\mathsf{M}_Z(\theta))^{U(\tau,t)}\bigr] \!=\! \mathsf{M}_U(\ln \mathsf{M}_Z(\theta),\tau,t) .
\label{eq:servicemgf}
\end{equation}

For an implementation, we use memoryless processes, as solutions for this specific case may also be derived, e.g., from a Markov model. This enables us to compute certain reference results in Sec.~\ref{sec:estimation}. Semi-Markov models are also used in~\cite{yang:modelingumtspowersaving, zhou:performanceltedrx, bhamber:analyticLTEpowersavingburstytraffic, wu:performancedrx,zhou:ltedrxm2m} to analyze mean wake-up delays. We note that the service curve in Eq.~\eqref{eq:statisticalserviceenvelope} is, however, not limited to memoryless processes. In detail, we model $T$ as a geometric random variable with parameter $p$, where $\mathsf{P}[T = \upsilon] = p(1-p)^\upsilon$, and $Z(t)$ for $t > T$ following a basic wireless outage model~\cite{fidler:netcalcguide} as iid Bernoulli trials with parameter $q$.

Regarding Eq.~\eqref{eq:slotsafterT}, we have $\mathsf{P}[T \le \tau] = 1-(1-p)^{\tau+1}$ and
\begin{equation*}
\sum_{\upsilon = \tau+1}^{t} e^{\theta(t-\upsilon)} \mathsf{P}[T = \upsilon] = e^{\theta t} p \sum_{\upsilon = \tau+1}^{t} \bigl(e^{-\theta} (1-p)\bigr)^\upsilon ,
\end{equation*}
where we substitute $y = e^{-\theta} (1-p)$ and compute
\begin{equation*}
\sum_{\upsilon = \tau+1}^{t} y^{\upsilon} = \frac{y^{\tau+1}-y^{t+1}}{1-y}.
\end{equation*}
Having obtained a solution of Eq.~\eqref{eq:slotsafterT}, the MGF of the service process follows from Eq.~\eqref{eq:servicemgf} by insertion of $\mathsf{M}_Z(\theta) = q e^{\theta} + 1-q$ for the Bernoulli service increment process. Finally, the non-stationary service curve is computed from Eq.~\eqref{eq:statisticalserviceenvelope}.

Fig.~\ref{fig:servicecurves} illustrates how $\mathcal{S}^{\varepsilon}(\tau,t)$ identifies the characteristic features of sleep scheduling. The parameters are $p=0.1$ and $q=0.5$, i.e., the mean transient latency is $\mathsf{E}[T] = (1-p)/p = 9$ and the mean stationary service rate is $\mathsf{E}[Z] = q = 0.5$. The service curves are computed for $\varepsilon = 10^{-6}$, $\rho = 10^{-4}$, and $\theta(\tau,t)$ is optimized numerically.

\begin{figure}
\hspace{-10pt}
\subfigure[$\tau \in \{0,100,200,\dots\}$]{
\includegraphics[width=0.51\columnwidth]{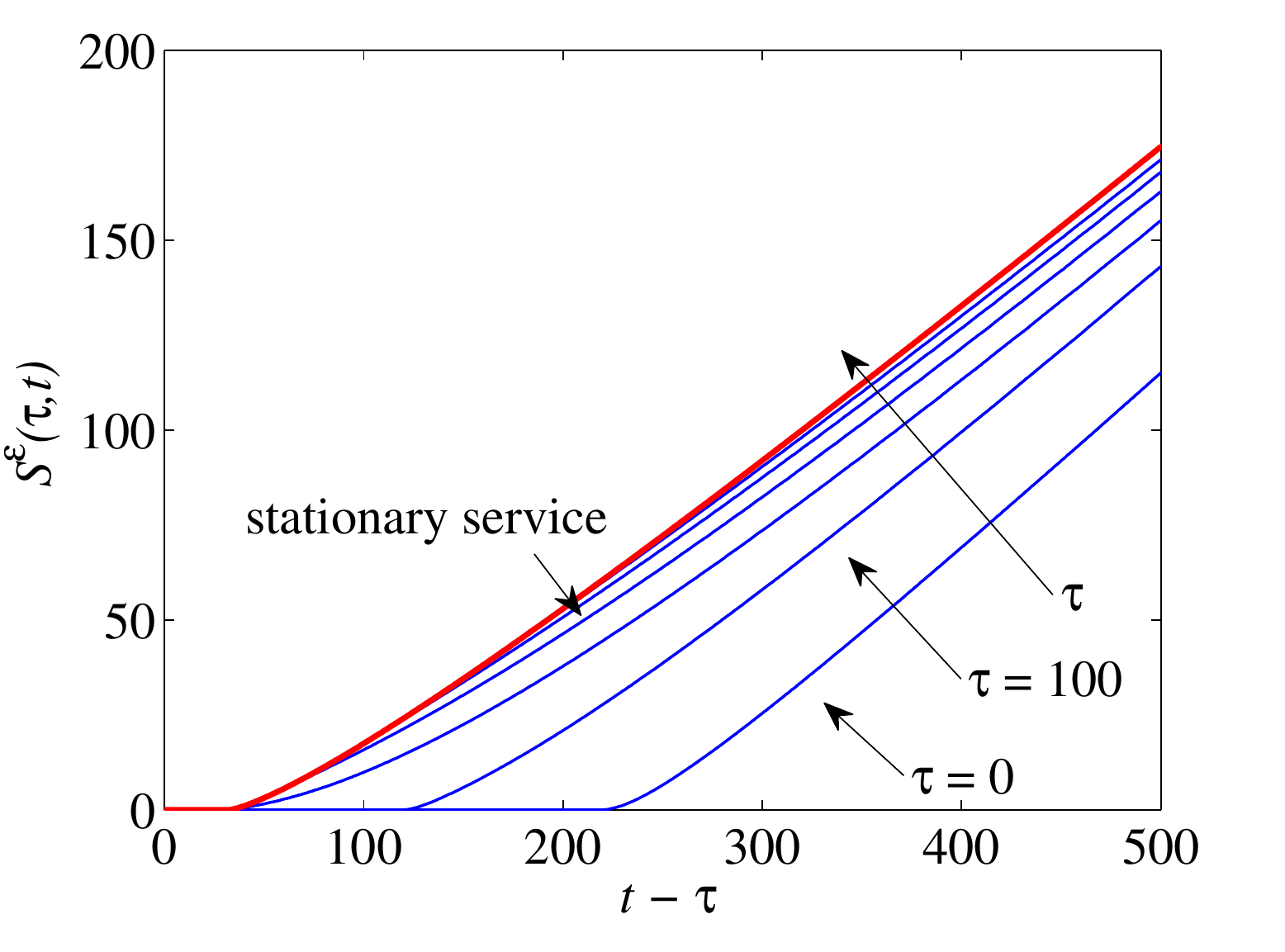}
\label{fig:servicecurvesforward}
}
\hspace{-10pt}
\subfigure[$t \in \{100,200,\dots\}$]{
\includegraphics[width=0.51\columnwidth]{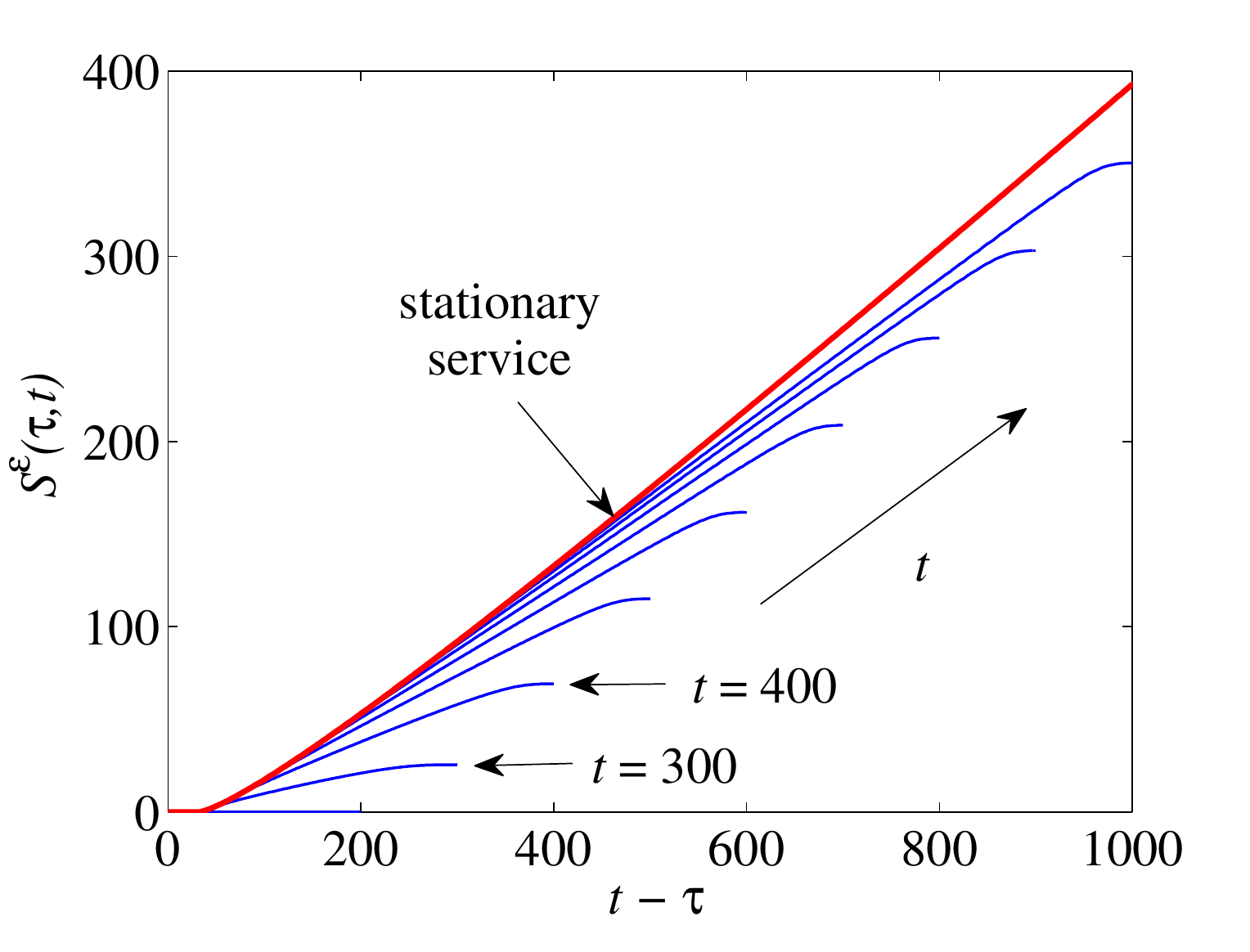}
\label{fig:servicecurvesbackwards}
\hspace{-10pt}
}
\caption{Non-stationary service curves of random sleep scheduling.}
\label{fig:servicecurves}
\end{figure}
In Fig.~\ref{fig:servicecurvesforward}, we show how the service in an interval of width $t-\tau$ increases with increasing distance $\tau$ from the last regeneration point. For small $\tau$ a significant impact of the initial transient phase is noticed. For large $\tau$ we observe that $\mathcal{S}^{\varepsilon}(\tau,t)$ converges towards a stationary service curve that is computed for the same service increment process, however, without sleep scheduling. Fig.~\ref{fig:servicecurvesbackwards} displays the service curves for fixed $t$ and variable $\tau$. We emphasize that the transient phase is reflected in the non-convex shape of the curves, rightwards where $\tau$ approaches zero. In contrast, the initial delay at the origin, that also applies to the stationary service curve, is caused by the outages of the Bernoulli service increment process. For small intervals $t-\tau$ the process results in a service of zero with non-negligible probability. For $t = 100$ and $t=200$ the effects bring about a service curve of zero.
\subsubsection{Performance Bounds}
The presentation of $\mathcal{S}^{\varepsilon}(\tau,t)$ in Fig.~\ref{fig:servicecurvesbackwards} conforms with the formulation of statistical performance bounds, where $t$ is fixed and all $\tau \in [0,t]$ are evaluated. A statistical backlog bound follows with Eqs.~\eqref{eq:servicecurve} and~\eqref{eq:statisticalenvelopedefinition} as
\begin{equation}
\mathsf{P} \Bigl[B(t) \le \sup_{\tau \in [0,t]} \{\mathcal{A}^{\varepsilon}(t-\tau) - \mathcal{S}^{\varepsilon}(\tau,t)\}\Bigr] \ge 1-2\varepsilon
\label{eq:statisticalbacklogbound}
\end{equation}
and a first-come first-served delay bound as
\begin{multline}
\!\!\!\!\!\mathsf{P} \Bigl[W(t) \le \inf \Bigl\{w \! \ge \! 0: \! \sup_{\tau \in [0,t]} \{\mathcal{A}^{\varepsilon}(t-\tau) - \mathcal{S}^{\varepsilon}(\tau,t+w)\} \le 0 \Bigr\}\Bigr] \\ \ge 1-2\varepsilon .
\label{eq:statisticaldelaybound}
\end{multline}
Intuitively, the backlog and delay bound are the maximal vertical and horizontal deviation of $\mathcal{A}^{\varepsilon}(t-\tau)$ and $\mathcal{S}^{\varepsilon}(\tau,t)$, respectively. Backlog and delay bounds for Poisson arrivals as in Sec.~\ref{sec:systemmodelregenerative} with parameters $\alpha = 0.06$, $\beta = 0.3$, and $\varepsilon = 10^{-6}$ are shown in Fig.~\ref{fig:backlogdelay} for different parameters of the service $p$ and $q$. The long-term utilization is $\alpha/(\beta q)$. The free parameters $\theta$ and $\rho$ are optimized numerically. Compared to the stationary case, the transient overshoot of sleep scheduling is considerable.
\begin{figure}
\hspace{-10pt}
\subfigure[backlog]{
\includegraphics[width=0.51\columnwidth]{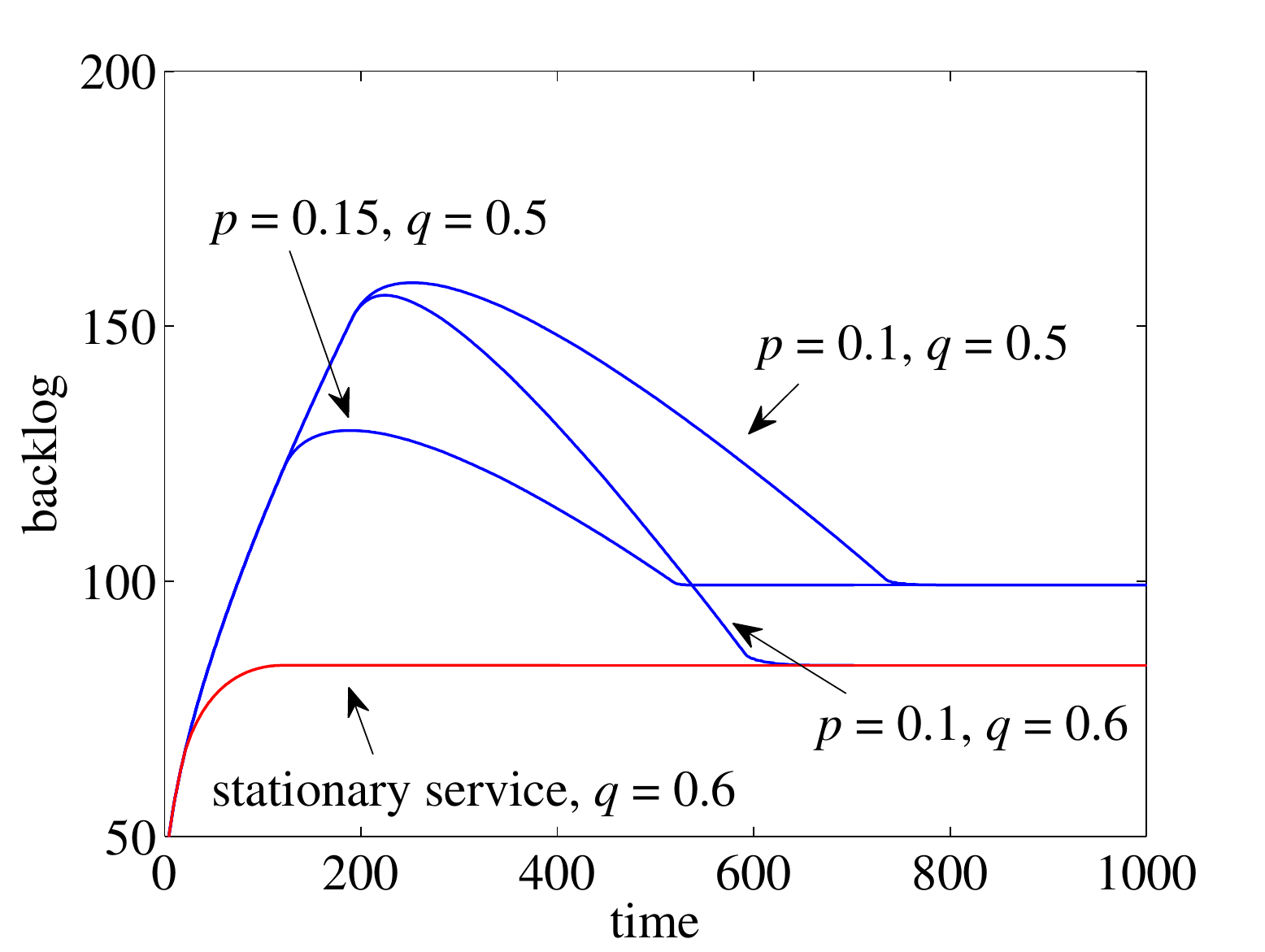}
}
\hspace{-10pt}
\subfigure[delay]{
\includegraphics[width=0.51\columnwidth]{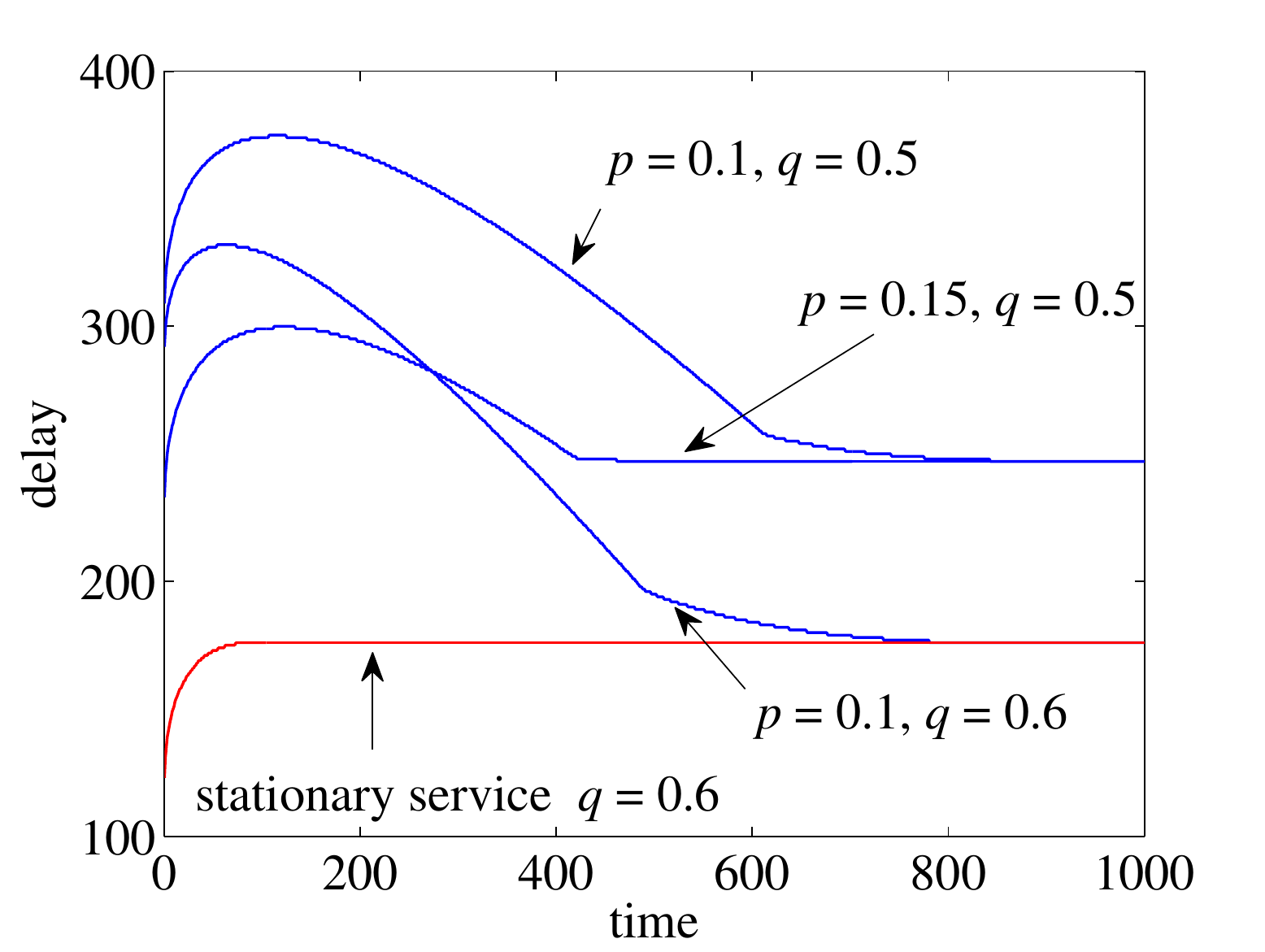}
\hspace{-10pt}
}
\caption{Transient backlog and delay of random sleep scheduling.}
\label{fig:backlogdelay}
\end{figure}
%
%
\section{Measurement-based Estimation Methods}
\label{sec:estimation}
Following the model-based approach in Sec.~\ref{sec:systemmodel}, we derived service curves of sleep scheduling. To identify a system's service curve without knowledge of its internals, we investigate measurement-based methods. We use these to evaluate full-fledged implementations of sleep scheduling in cellular networks, in Sec.~\ref{sec:measurements}.

The estimation of a system's service curve from measurements has received increasing attention in the past years~\cite{cetinkaya:egressadmissioncontrol2, valaee:adhocadmissioncontrol, alcuri:servicecurvemeasurement, bredel:netcalcroutermeasurements, hisakado:legendre, agharebparast:slopedomain, liebeherr:availbw, luebben:availbw2, luebben:availbwdiss}. The general approach can be viewed as system identification of a black box model~\cite{luebben:availbwdiss}, where the service curve of an unknown system is estimated from observations of its arrivals and departures. Usually, the black box model presumes only basic properties such as linearity and stationarity. The existing works can be classified accordingly to be based either on the assumption of a work-conserving system~\cite{cetinkaya:egressadmissioncontrol2, valaee:adhocadmissioncontrol, alcuri:servicecurvemeasurement, bredel:netcalcroutermeasurements} or a general min-plus linear system~\cite{hisakado:legendre, agharebparast:slopedomain, liebeherr:availbw, luebben:availbw2,luebben:availbwdiss} and either use a deterministic model of a time-invariant system~\cite{alcuri:servicecurvemeasurement, bredel:netcalcroutermeasurements, hisakado:legendre, agharebparast:slopedomain, liebeherr:availbw} or a stochastic model of a stationary system~\cite{cetinkaya:egressadmissioncontrol2, valaee:adhocadmissioncontrol, luebben:availbw2, luebben:availbwdiss}. Some of the works rather apply a gray box approach, where additional information about the system internals are used~\cite{alcuri:servicecurvemeasurement, bredel:netcalcroutermeasurements}, e.g., assuming a service curve of the latency-rate type, the latency and rate parameters are estimated from measurements. For a more detailed comparison see also~\cite{fidler:netcalcsurvey}. A closely related research area is available bandwidth estimation, e.g.,~\cite{melander:topp, strauss:spruce, jain:slops, ribeiro:pathchirp, nam:minimalbackloggingbwest} that seeks to estimate the long-term average unused capacity of a system, mostly assuming a deterministic fluid-flow model of a single link with fcfs multiplexing.

The different approaches can be further distinguished by using either passive measurements~\cite{cetinkaya:egressadmissioncontrol2, hisakado:legendre, liebeherr:availbw, alcuri:servicecurvemeasurement}, where traffic traces of the network's production traffic are used, or active probing~\cite{melander:topp, strauss:spruce, jain:slops, ribeiro:pathchirp, nam:minimalbackloggingbwest, valaee:adhocadmissioncontrol, agharebparast:slopedomain, liebeherr:availbw, bredel:netcalcroutermeasurements, luebben:availbw2, luebben:availbwdiss}, that injects artificial test traffic into the network. Active probing provides an important degree of freedom, as the shape of the probe traffic can be tailored to the specific purpose. Typical probes are packet pairs, i.e., two packets sent with a defined spacing~\cite{strauss:spruce}, packet trains, i.e., a sequence of packets sent at a constant rate~\cite{melander:topp, jain:slops, nam:minimalbackloggingbwest, agharebparast:slopedomain, liebeherr:availbw, bredel:netcalcroutermeasurements, luebben:availbw2, luebben:availbwdiss}, or packet chirps that are packet trains with an increasing rate~\cite{ribeiro:pathchirp, liebeherr:availbw}. The variety of probes that are used in practice raises the question of a potentially optimal probe. In this work, we state a condition for a minimal probe and show how it can be estimated.

While methods, such as rate scanning~\cite{luebben:availbw2}, are available for estimation of stationary service curves for the general class of min-plus linear systems with random service~\cite{luebben:availbw2, luebben:availbwdiss}, the measurement-based estimation of non-stationary service curves that are the subject of this work is not elaborated in the current literature. Like~\cite{hisakado:legendre, agharebparast:slopedomain, liebeherr:availbw, luebben:availbw2, luebben:availbwdiss},
we consider min-plus linear systems, where Eq.~\eqref{eq:serviceprocess} holds with equality, i.e.,
\begin{equation}
D(t) = \inf_{\tau \in [0,t]} \{ A(\tau) + S(\tau,t) \}
\label{eq:linearsystem}
\end{equation}
for all $t \ge 0$. The unknown $S(\tau,t)$ is a time-variant, random service process as in~\cite{luebben:availbw2, luebben:availbwdiss}. Compared to~\cite{luebben:availbw2, luebben:availbwdiss} we do, however, not assume stationarity of $S(\tau,t)$.

Based on Eq.~\eqref{eq:linearsystem}, the goal of service curve estimation can be phrased as an inversion problem, i.e., given measurements of $A(t)$ and $D(t)$, solve Eq.~\eqref{eq:linearsystem} for $S(\tau,t)$. Due to the infimum, the min-plus convolution has, however, no inverse operation in general and a solution can only be obtained for certain functions $A(t)$~\cite{liebeherr:availbw}. Finally, we seek to estimate a maximal non-stationary service curve $\mathcal{S}^{\varepsilon}(\tau,t)$, i.e., find a Pareto efficient function $\mathcal{S}^{\varepsilon}(\tau,t)$ that satisfies Eq.~\eqref{eq:servicecurve}.

We perform measurements assuming a regenerative service process as defined in Sec.~\ref{sec:systemmodelregenerative}, where repeated network probes can observe samples of the service process $S_i(\tau,t)$ at regeneration point $P_i$ as illustrated in Fig.~\ref{fig:sleepschedule}.

The remainder of this section is structured as follows: We adapt two known methods, rate scanning (Sec.~\ref{sec:ratescanning}) and burst response (Sec.~\ref{sec:burstresponse}), for estimation of non-stationary service curves. Fundamental limitations of these methods are identified that are explained by the non-convexity and the super-additivity of the service. We devise a new two-phase probing method (Sec.~\ref{sec:minimalprobing}) that first uses the burst response to determine the shape of a suitable probe, that is proven to be minimal under certain conditions. The probe is then used to obtain a service curve estimate with a defined accuracy.
%
%
\subsection{Rate Scanning}
\label{sec:ratescanning}
First, we consider the rate scanning method from~\cite{liebeherr:availbw, luebben:availbw2}. The method uses constant rate probes $A(t) = rt$ for a set of rates $r \in \mathbb{R}$ to excite the network for measurements. While~\cite{liebeherr:availbw} and~\cite{luebben:availbw2} consider deterministic and stationary service curves, respectively, we adapt the method to provide estimates of non-stationary service curves for transient phases.

For constant rate probes $A(t) = r t$ the backlog of a system follows from Eq.~\eqref{eq:linearsystem} as $B(t) = \sup_{\tau \in [0,t]} \{r (t-\tau) - S(\tau,t) \}$. Consequently, it holds that $B(t) \ge r (t-\tau) - S(\tau,t)$ for all $\tau \in [0,t]$. Solving for $S(\tau,t)$ provides the lower bound
\begin{equation}
S(\tau,t) \ge r (t-\tau) - B(t) ,
\label{ratescansamplepath}
\end{equation}
for all $\tau \in [0,t]$

Next, we use the definition of $(1-\xi)$-quantile
\begin{equation}
X^{\xi} = \inf \{x \ge 0: \mathsf{P}[X \le x] \ge 1-\xi \}
\label{eq:quantile}
\end{equation}
and denote $B^{\xi}(r,t)$ the backlog quantile at time $t$ as a function of $r$. By insertion of $B^{\xi}(r,t)$ into Eq.~\eqref{ratescansamplepath}, it holds that
\begin{equation*}
\mathsf{P}[S(\tau,t) \ge r(t-\tau) - B^{\xi}(r,t), \forall \tau \in [0,t]] \ge 1-\xi ,
\end{equation*}
which takes the form of Eq.~\eqref{eq:bivariateenvelope}. By application of the union bound, it follows for $t \ge \tau \ge 0$ that
\begin{equation}
\mathcal{S}_{rs}^{\varepsilon}(\tau,t) = \max_{r \in \mathbb{R}} \{r (t-\tau) - B^{\xi}(r,t)\}
\label{eq:ratescanestimate}
\end{equation}
is a non-stationary service curve as defined by Eq.~\eqref{eq:servicecurve} with probability $\varepsilon = \sum_{r \in \mathbb{R}} \xi$. Compared to~\cite{luebben:availbw2}, Eq.~\eqref{eq:ratescanestimate} uses the transient instead of the stationary backlog to estimate a non-stationary service curve.

In a practical implementation of rate scanning, the set of probing rates $\mathbb{R}$ has to be selected. Options are, e.g., linearly or geometrically spaced rates combined with suitable tests that determine the maximum probing rate~\cite{liebeherr:availbw,luebben:availbw2}. Then, estimates of the backlog quantile $B^{\xi}(r,t)$ for $t \ge 0$ are obtained from repeated measurements for each $r \in \mathbb{R}$. Finally, a service curve estimate $\mathcal{S}_{rs}^{\varepsilon}(\tau,t)$ follows from Eq.~\eqref{eq:ratescanestimate}.

For evaluation of the method, we consider a simulation of the random sleep scheduling model from Sec.~\ref{sec:randomsleepscheduling} with identical parameters $p=0.1$ and $q=0.5$. We perform rate scanning with ten rates $r \in \{0.05,0.1,\dots,0.5\}$. For each rate we obtain $10^5$ backlog samples from repeated experiments. From these, we extract an estimate of the backlog quantile $B^{\xi}(r,t)$ for $\xi = 10^{-4}$ so that $\varepsilon = \sum_{r\in\mathbb{R}} \xi = 10^{-3}$.

\begin{figure}
\hspace{-10pt}
\subfigure[Rate scanning]{
\includegraphics[width=0.51\columnwidth]{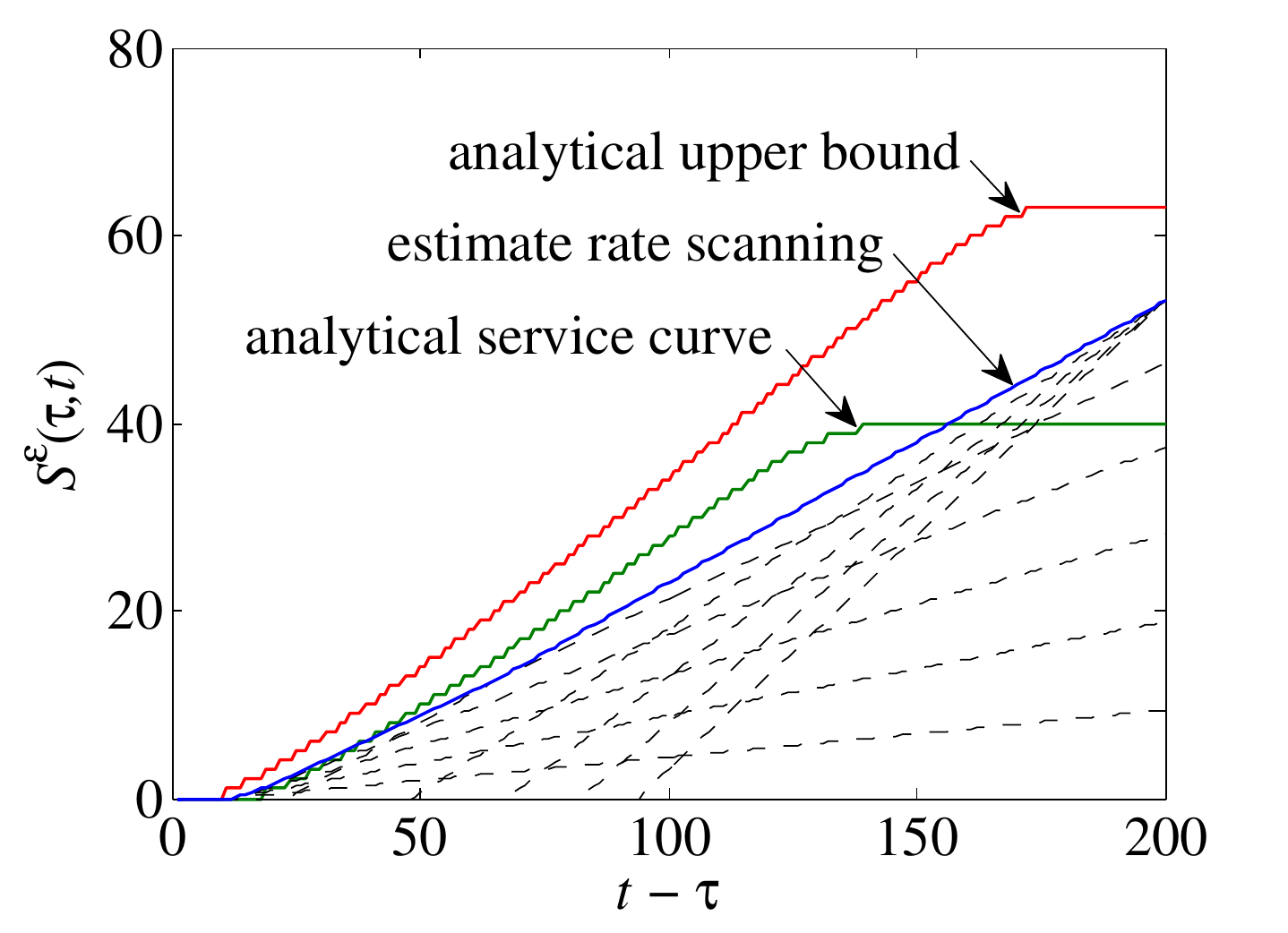}
\label{fig:ratescanning}
}
\hspace{-10pt}
\subfigure[Burst response]{
\includegraphics[width=0.51\columnwidth]{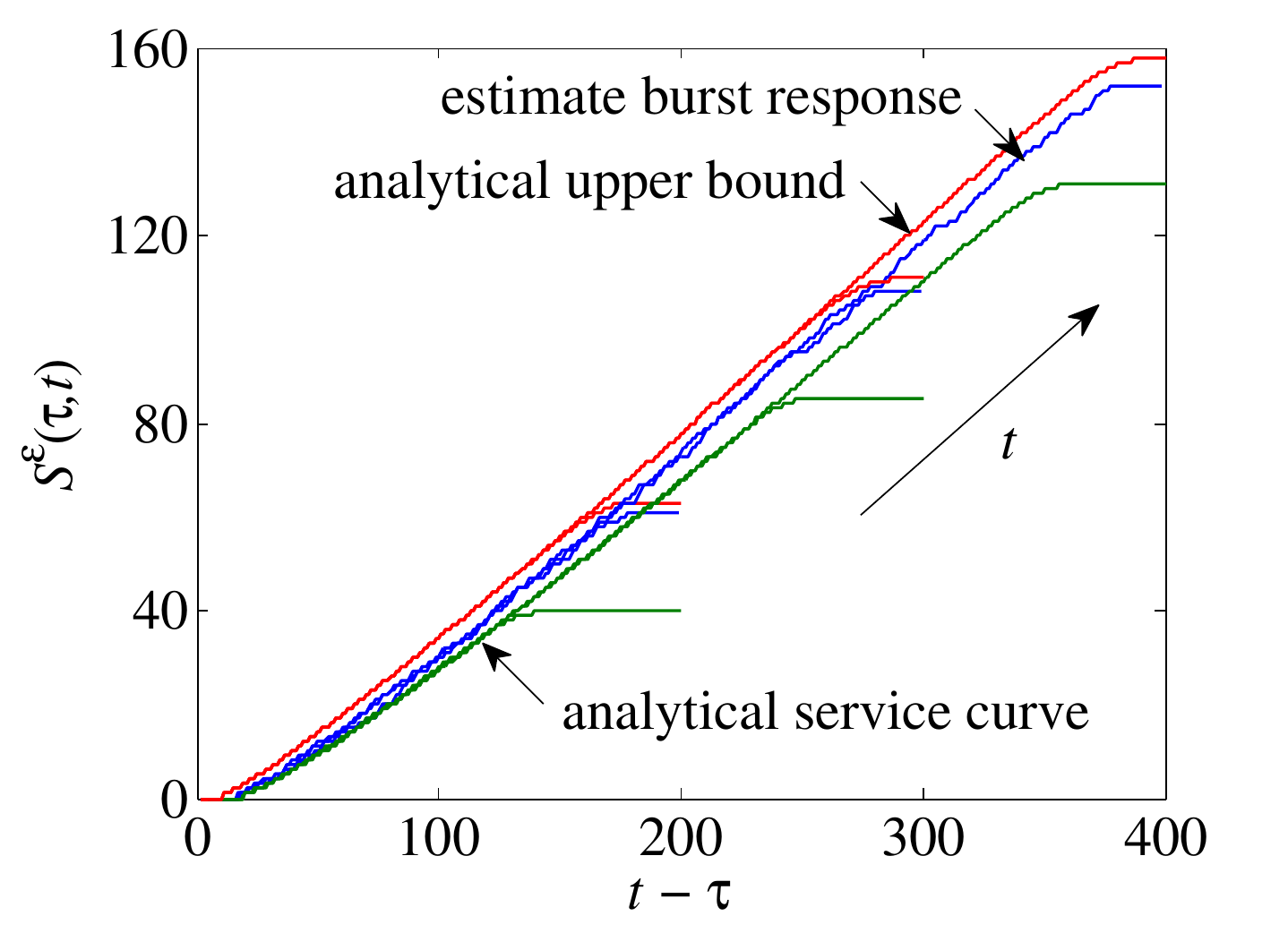}
\label{fig:burstprobing}
\hspace{-10pt}
}
\caption{Service curve estimates of random sleep scheduling. The estimate of rate scanning is the maximum of linear rate segments (dashed lines). By construction it can only recover a convex hull. Burst probing can estimate non-convex service curves and performs close to the analytical upper bound.}
\label{fig:servicecurveestimates}
\end{figure}
Fig.~\ref{fig:ratescanning} shows the linear segments gathered by each of the ten probing rates $r \in \mathbb{R}$ and the resulting estimate $\mathcal{S}_{rs}^{\varepsilon}(\tau,t)$ that is obtained as the maximum of the linear segments by Eq.~\eqref{eq:ratescanestimate}. We display bivariate functions for $t=200$ and vary $\tau \in [0,t]$ to consider the influence of the width of the interval $(\tau,t]$. As a reference, we include an analytical upper bound\footnote{For a Bernoulli service increment process with parameter $q$ that starts after a geometrically distributed time $T$, an upper bound of the service provided in $(\tau,t]$ is derived as the $(1-\varepsilon)$-quantile of a binomial distribution with parameters $t-\max\{\tau,T\}$ and $q$. We note that the first parameter, i.e., the number of trials $t-\max\{\tau,T\}$, is a random variable.}. Any function that exceeds the upper bound for some $\tau \in [0,t]$ violates the definition of service envelope Eq.~\eqref{eq:bivariateenvelope}. Further, we show an analytical service curve\footnote{The analytical service curve again applies the binomial distribution as in case of the upper bound but in addition makes a sample path argument using the union bound to satisfy Eq.~\eqref{eq:bivariateenvelope}.} that provides the lower guarantee specified by Eq.~\eqref{eq:servicecurve}. Both, the analytical upper bound as well as the analytical service curve exhibit a flat area in the upper right for large $t-\tau$, respectively, small $\tau$ that is due to the initial transient phase starting at $\tau=0$. In contrast, the service curve estimate of rate scanning cannot recover the non-convex part of the analytical results.

This shortcoming of rate scanning is fundamental as it is explained by the properties of Eq.~\eqref{eq:ratescanestimate} that has the form of a Legendre-Fenchel transform of the backlog~\cite{liebeherr:availbw}. The Legendre-Fenchel transform has a number of useful properties in the network calculus~\cite{agharebparast:slopedomain, hisakado:legendre, fidler:legendre}. It is also known as convex conjugate as the result is generally a convex function~\cite{rockafellar:convexanalysis}. For convex functions, the Legendre-Fenchel transform is its own inverse, whereas the bi-conjugate of a non-convex function can only recover the convex hull~\cite{rockafellar:convexanalysis}, as seen for the estimate of rate scanning in Fig.~\ref{fig:ratescanning}.
%
%
\subsection{Burst Response}
\label{sec:burstresponse}
\begin{figure*}
\begin{center}
\subfigure[Transient latency]{
\includegraphics[width=0.51\columnwidth]{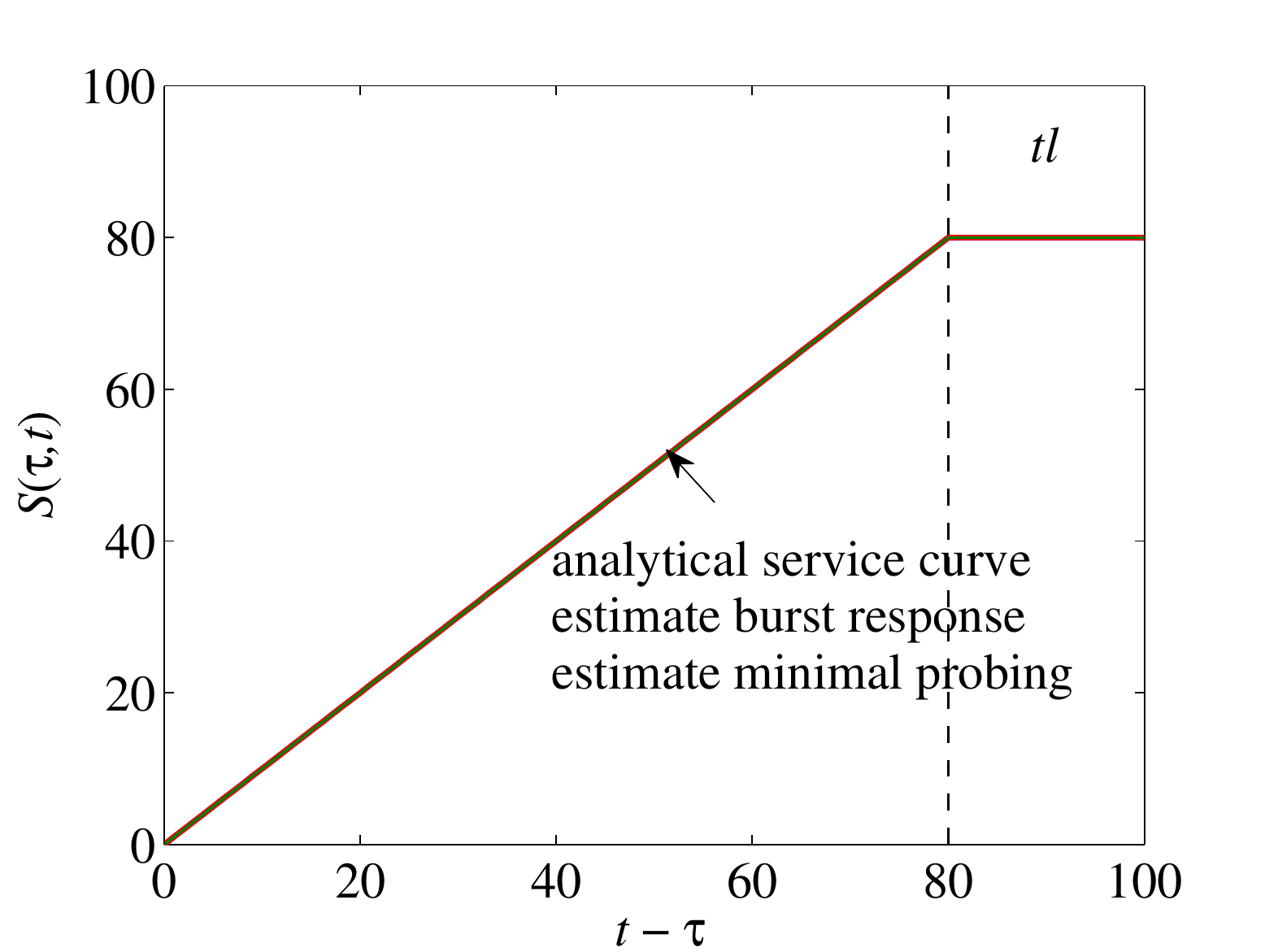}
\label{fig:transientlatencyrate}
}
\subfigure[Stationary latency]{
\includegraphics[width=0.51\columnwidth]{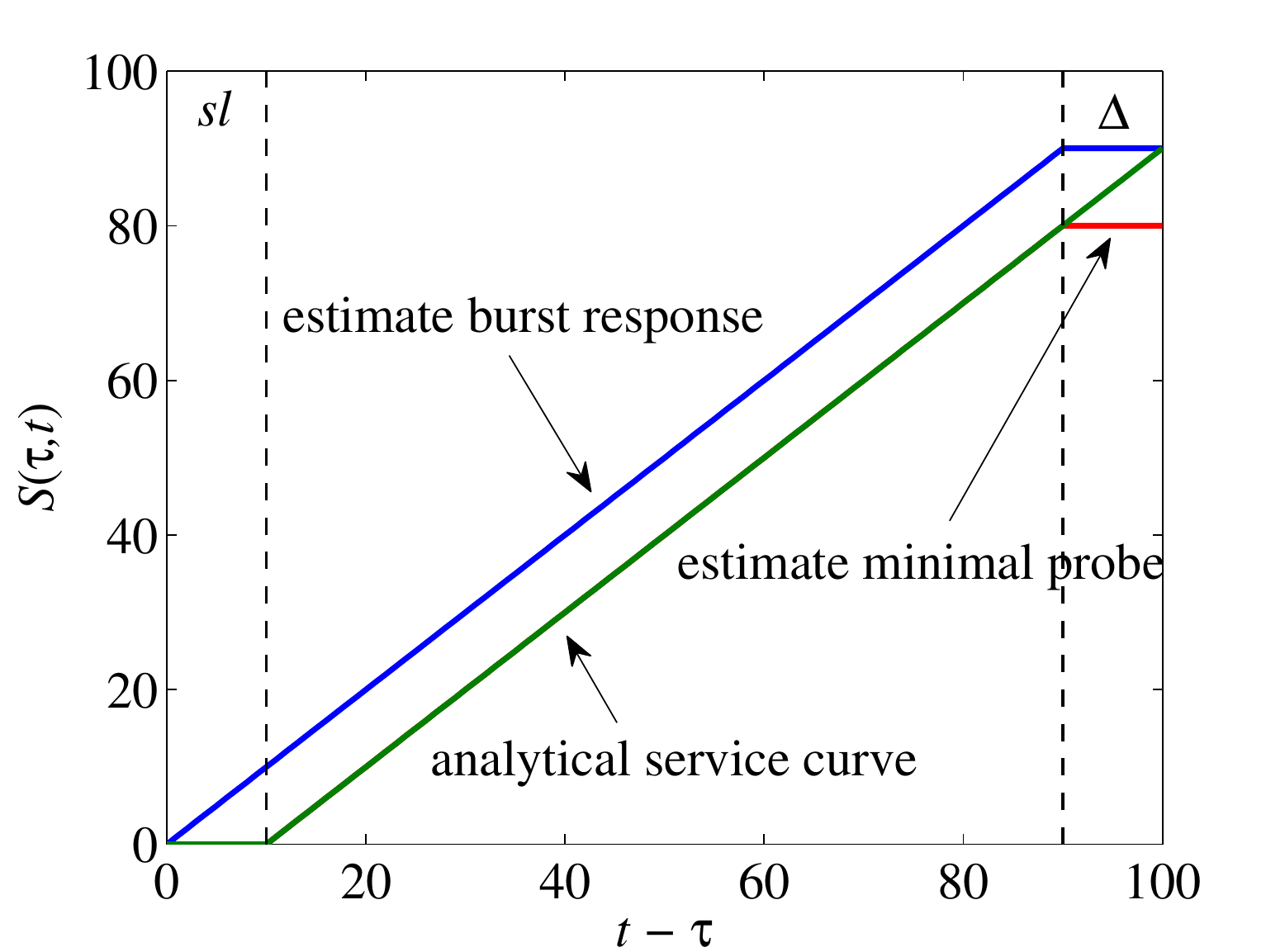}
\label{fig:stationarylatencyrate}
}
\subfigure[Transient and stationary latencies]{
\includegraphics[width=0.51\columnwidth]{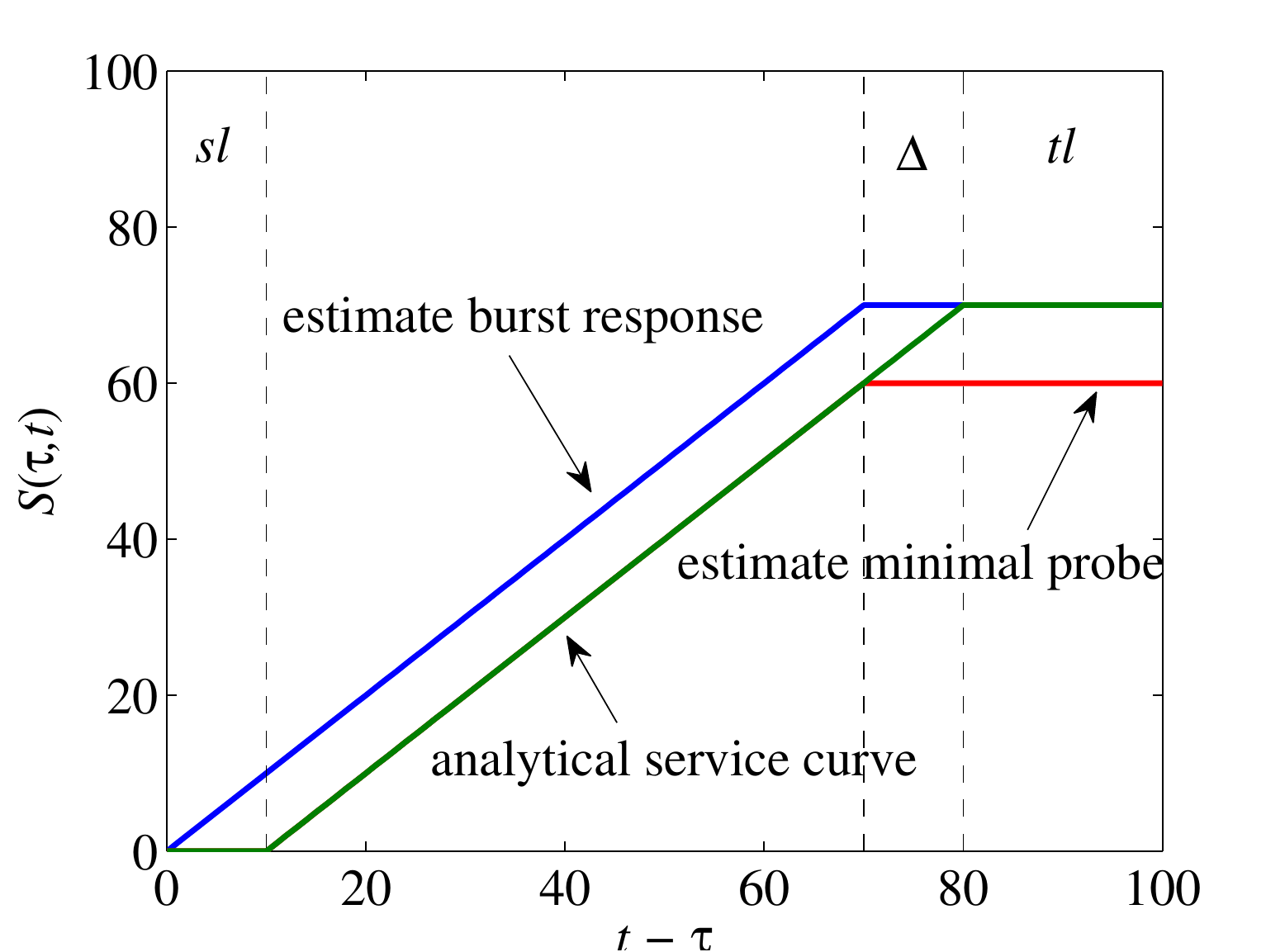}
\label{fig:transientstationarylatencyrate}
}
\caption{Service curve estimates of deterministic sleep scheduling. Latency-rate service curves with a transient latency, with a stationary latency, and with both are compared. The transient latency equals 20 and the stationary latency 10. In case of the transient latency solely, the latency-rate service curve is additive and burst probing recovers the exact result. In contrast, in case of a stationary latency, the service curve is super-additive and burst probing overestimates the service curve. Minimal probing (see Sec.~\ref{sec:minimalprobing}) provides a corresponding lower estimate that matches the service curve exactly in case of additivity.}
\label{fig:latencyrate}
\end{center}
\end{figure*}
Motivated by the min-plus systems theory of the network calculus~\cite{chang:performanceguarantees, leboudec:networkcalculus}, a canonical probe for system identification is the burst function that takes the role of the Dirac delta function in min-plus algebra. The burst function $\delta(t)$ is defined as
\begin{equation}
\delta(t) = \begin{cases} 0 & \text{for } t=0, \\ \infty & \text{for } t > 0. \end{cases}
\label{eq:burstfunction}
\end{equation}
The burst function is the neutral element of min-plus convolution, so that sending a burst probe $A(t) = \delta(t)$ reveals the service $S(0,t)$ for all $t \ge 0$ as the burst response of the system
\begin{equation}
D(t) = \inf_{\tau \in [0,t]} \{ \delta(\tau) + S(\tau,t) \} = S(0,t) .
\label{eq:burstresponse}
\end{equation}
For additive service processes as defined in~\cite[p.~6]{jiang:stochasticnetworkcalculus}, $S(\tau,t)$ can be readily obtained from Eq.~\eqref{eq:burstresponse} for all $t \ge \tau \ge 0$ as
\begin{equation}
S(\tau,t) = S(0,t) - S(0,\tau) .
\label{eq:additivity}
\end{equation}

For a stochastic analysis, we denote $\Omega$ the set of all causally possible sample paths $D_{\omega}(t)$. For each $\omega \in \Omega$ we use the additivity as specified by Eq.~\eqref{eq:additivity} to obtain the service process from the burst response given by Eq.~\eqref{eq:burstresponse} as
\begin{equation}
S_{\omega}(\tau,t) = D_{\omega}(t) - D_{\omega}(\tau)
\label{eq:servicesampleburstestimate}
\end{equation}
for all $\tau \in [0,t]$. We fix $t > 0$ and select a subset of the sample paths $\Psi_t \subseteq \Omega$ with probability $\mathsf{P}[\Psi_t] \ge 1-\varepsilon$. By definition of
\begin{equation}
\mathcal{S}_{br}^{\varepsilon}(\tau,t) = \inf_{\psi \in \Psi_t} \{S_{\psi}(\tau,t)\} ,
\label{eq:effectiveserviceburstrespones}
\end{equation}
it holds that $S_{\psi}(\tau,t) \ge \mathcal{S}_{br}^{\varepsilon}(\tau,t)$ for all $\tau \in [0,t]$ and all $\psi \in \Psi_t$. Further, we have $\mathsf{P}[\Psi_t] \ge 1-\varepsilon$ so that $\mathcal{S}_{br}^{\varepsilon}(\tau,t)$ satisfies Eq.~\eqref{eq:bivariateenvelope}. Hence, $\mathcal{S}_{br}^{\varepsilon}(\tau,t)$ is a non-stationary service curve that conforms to Eq.~\eqref{eq:servicecurve}.

In a practical probing scheme, we can only observe a finite set of sample paths $\Omega$ from repeated measurements of the departures $D_{\omega}(t)$ for $t \ge 0$. We fix $t > 0$ and estimate the service from Eq.~\eqref{eq:servicesampleburstestimate} for all $\tau \in [0,t]$ and $\omega \in \Omega$. To construct $\Psi_t$ we remove a set of minimal sample paths from $\Omega$. We define the minimal sample paths $\Phi$ to be the sample paths that attain the minimum
\begin{equation*}
S_{\min}(\tau,t) = \min_{\omega \in \Omega} \{S_{\omega}(\tau,t)\}
\end{equation*}
for $\tau \in [0,t]$ most frequently. Formally, that is $\Phi = \arg\max_{\omega \in \Omega} \{X_{\omega} \}$, where for all $\omega \in \Omega$
\begin{equation*}
X_{\omega} = \sum_{\tau=0}^{t-1} 1_{S_{\omega}(\tau,t) = S_{\min}(\tau,t)} .
\end{equation*}
The indicator function $1_{(\cdot)}$ is one if the argument is true and zero otherwise. We remove $\Phi$ to obtain $\Psi_t = \Omega \backslash \Phi$ and repeat the above steps for $\Psi_t$ as long as $\mathsf{P}[\Psi_t] \ge 1-\varepsilon$. Finally, we estimate the service curve from Eq.~\eqref{eq:effectiveserviceburstrespones} for all $\tau \in [0,t]$.

Fig.~\ref{fig:burstprobing} shows service curve estimates of burst probing for $t=200$, $300$, and $400$ together with analytical upper bounds and analytical reference service curves for the same system as Fig.~\ref{fig:ratescanning}. We observe that burst probing performs very well as it provides service curve estimates that are in between the analytical reference curves and closely track the analytical upper bound. Unlike rate scanning, burst probing recovers the flat, non-convex areas in the upper right of the analytical curves that are caused by the initial transient phase.

Aside from its good performance, burst probing has, however, limitations. Burst probes have been considered intrusive as they cause non-linear behavior of certain systems. An example are fcfs multiplexers, where a burst probe can preempt other traffic resulting in a too optimistic service estimate~\cite{liebeherr:availbw}.

Furthermore, we discover that the intuitive assumption of additive service processes~\cite{jiang:stochasticnetworkcalculus}, that is used as a basis of the method in Eq.~\eqref{eq:additivity}, is not justifiable in general. While sub-additivity is less an issue, as~\eqref{eq:additivity} provides a conservative estimate of $S(\tau,t)$ that yields a valid lower service curve in this case, we find that there are relevant systems, where super-additivity applies and~\eqref{eq:additivity} overestimates $S(\tau,t)$.

Formally, a function $f(s,t)$ is super-additive if $f(s,u) \ge f(s,t) + f(t,u)$ for all $u \ge t \ge s \ge 0$. Trivially, an additive function $f(s,u) = f(s,t) + f(t,u)$ for all $u \ge t \ge s \ge 0$ is also super-additive. To quantify the magnitude of super-additivity, we define the maximal deviation of a super-additive function $f(s,t)$ from additivity as
\begin{equation}
\Delta(s,u) := f(s,u) - \inf_{t \in [s,u]} \{f(s,t) + f(t,u)\} .
\label{eq:deviationfromadditivity}
\end{equation}

With respect to Eq.~\eqref{eq:additivity}, super-additivity implies that estimating $S(\tau,t)$ as $S (\tau,t) = S(0,t) - S(0,\tau)$ is too optimistic as in general only $S(\tau,t) \le S(0,t)-S(0,\tau)$ holds.
\subsubsection{Super-additive Service Processes}
\label{sec:superadditiveprocesses}
To provide insights on the occurrence and the impact of non-additivity, we first consider the instructive example of deterministic latency-rate functions as specified by $S^{tlr}(\tau,t)$ in Eq.~\eqref{eq:transientlatencyrate} for the case of a transient latency and $S^{slr}(\tau,t)$ in Eq.~\eqref{eq:stationarylatencyrate} for a stationary latency, respectively. In case of a transient latency, it can be verified that $S^{tlr}(\tau,t)$ is additive. For a stationary latency, however, we find that $S^{slr}(\tau,t)$ is super-additive but not generally additive. An example is $S^{slr}(0,\tau) + S^{slr}(\tau,t) = R (t-2T)$ for $t \ge \tau+T$ and $\tau \ge T$, whereas $S^{slr}(0,t) = R (t-T)$. The example also achieves the maximal deviation from additivity $\Delta(\tau,t) = RT$ for $t \ge \tau + 2T$.

For illustration, Fig.~\ref{fig:latencyrate} shows different types of latency-rate functions, labelled analytical service curve: with a transient latency of $T=20$ according to Eq.~\eqref{eq:transientlatencyrate} in Fig.~\ref{fig:transientlatencyrate}, as it applies for deterministic sleep scheduling; a stationary latency of $T=10$ according to Eq.~\eqref{eq:stationarylatencyrate} in Fig.~\ref{fig:stationarylatencyrate}, for example to model a constant propagation delay; and for a combined system with both a transient and a stationary latency in Fig.~\ref{fig:transientstationarylatencyrate}. The rate is $R=1$. In all cases, we fix $t=100$ and show the service curves for an increasing interval $t-\tau$.

As can be seen from the flat end of the curve in the region marked with {\it tl} in Fig.~\ref{fig:transientlatencyrate}, the transient latency has an effect only for large intervals $t-\tau > 80$, respectively, $\tau < 20$. The stationary latency, on the other hand, applies for all $\tau$ resulting in the right shift of the curve in Fig.~\ref{fig:stationarylatencyrate} marked with {\it sl}. Finally, Fig.~\ref{fig:transientstationarylatencyrate} shows both effects.

We also present estimates of the service curves as obtained from the burst response. Clearly, the burst response recovers the transient latency-rate function in Fig.~\ref{fig:transientlatencyrate} exactly. It overestimates, however, the stationary latency-rate function in Fig.~\ref{fig:stationarylatencyrate} and similarly in Fig.~\ref{fig:transientstationarylatencyrate}. The effect is due to the super-additivity induced by the stationary latency. In this case, the assumption of additivity used by Eq.~\eqref{eq:additivity} falsely assigns the stationary latency to large intervals $t-\tau$ only, as seen in the region marked with $\Delta$. The figures further include estimates obtained by a method that we introduce as minimal probing in Sec.~\ref{sec:minimalprobing}, where we also provide details on the curves.
\subsubsection{Super-additivity of $\min$ and $\otimes$}
\label{sec:superadditivity}
A second notable exception of additive service processes are networks of systems, where the end-to-end network service process $S^{net}(\tau,t)$ is derived by min-plus convolution of the service processes of the individual systems $S^i(\tau,t)$ for $i = 1,2,\dots n$, see Eq.~\eqref{eq:networkserviceprocess}. For additive $S^i$, the following Lem.~\ref{lem:additivity} proves that $S^{net}$ is super-additive in general and additive only in certain special cases. The result is significant as it refutes the assumption of additive service processes for the large class of tandem systems. As before, it implies that the burst probing method is too optimistic.

In Fig.~\ref{fig:superadditivity}, we evaluate the network service process $S^{net}(\tau,t)$ for a tandem of $n=2,3,4$ systems with random sleep scheduling as in~Sec.~\ref{sec:systemmodelrandom}, i.e., the service processes $S^i(\tau,t)$ are additive. We consider $10^5$ sample paths and show the distribution of the relative deviation of $S^{net}(0,400)$ from additivity, i.e., $\Delta(\tau,t)/S^{net}(\tau,t)$. Fig.~\ref{fig:superadditivity} confirms significant deviations from additivity.

The following lemmas formalize the results and show that the $\otimes$ and $\min$ operators in general maintain only super-additivity but not additivity. The proofs are in the appendix.
\begin{lem}[Super-additivity of $\otimes$]
\label{lem:additivity}
Given two bivariate functions $f(s,t)$ and $g(s,t)$ for $t \ge s \ge 0$ where $f(t,t), g(t,t) = 0$ for all $t \ge 0$. Define $h(s,t) = f \otimes g(s,t)$.
\begin{enumerate}
\renewcommand{\theenumi}{\roman{enumi}}
\item If $f$ and $g$ are super-additive, then $h$ is super-additive.
\item If $f$ and $g$ are additive and univariate, then $h$ is additive.
\end{enumerate}
\end{lem}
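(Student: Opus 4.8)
The plan is to work directly from the definition of min-plus convolution. Recall that $h(s,t) = f \otimes g(s,t) = \inf_{\tau \in [s,t]} \{ f(s,\tau) + g(\tau,t) \}$.

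\emph{Part (i): super-additivity is preserved.} Fix $u \ge t \ge s \ge 0$. I need to show $h(s,u) \ge h(s,t) + h(t,u)$. Let $\tau^\star \in [s,u]$ be a point attaining (or approaching, if the infimum is not attained) the infimum in $h(s,u)$, so $h(s,u) = f(s,\tau^\star) + g(\tau^\star,u)$. Split into two cases according to whether $\tau^\star \le t$ or $\tau^\star \ge t$. In the first case, $s \le \tau^\star \le t \le u$: I would use super-additivity of $g$ to write $g(\tau^\star,u) \ge g(\tau^\star,t) + g(t,u)$, giving $h(s,u) \ge \bigl(f(s,\tau^\star) + g(\tau^\star,t)\bigr) + g(t,u)$. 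The first parenthesized term is one of the candidates in the infimum defining $h(s,t)$ (since $\tau^\star \in [s,t]$), hence $\ge h(s,t)$; and $g(t,u) \ge h(t,u)$ because $g(t,u) = f(t,t) + g(t,u)$ uses the endpoint $\tau = t$ in the infimum for $h(t,u)$, recalling $f(t,t)=0$. The second case, $t \le \tau^\star \le u$, is symmetric: use super-additivity of $f$, namely $f(s,\tau^\star) \ge f(s,t) + f(t,\tau^\star)$, then bound $f(s,t) \ge h(s,t)$ (using $g(t,t)=0$) and $f(t,\tau^\star) + g(\tau^\star,u) \ge h(t,u)$. Combining the two cases yields $h(s,u) \ge h(s,t) + h(t,u)$. (If infima are not attained, run the same argument with an $\epsilon$-minimizer and let $\epsilon \to 0$.)

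\emph{Part (ii): additivity is preserved when $f,g$ are additive and univariate.} When $f,g$ are univariate and additive, write $f(s,t) = \phi(t-s)$ and $g(s,t) = \psi(t-s)$ where $\phi,\psi$ are additive univariate functions, hence linear on the nonnegative reals: $\phi(x) = ax$, $\psi(x) = bx$ for constants $a,b \ge 0$ (this follows from additivity $\phi(x+y) = \phi(x)+\phi(y)$ together with $\phi(0)=0$; monotonicity from the service-process conventions pins down measurability issues, or one simply takes this linear form as the intended meaning). Then $h(s,t) = \inf_{\tau \in [s,t]} \{ a(\tau-s) + b(t-\tau) \} = \min\{a,b\}\,(t-s)$, which is again linear, hence additive. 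Part (i) already gives super-additivity for free; the point here is to exhibit equality.

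\emph{Main obstacle.} The substantive step is Part (i), and within it the bookkeeping that each of the four lower bounds produced by splitting $g(\tau^\star,u)$ or $f(s,\tau^\star)$ is genuinely a term appearing in the relevant infimum — which is exactly where the boundary conventions $f(t,t)=g(t,t)=0$ are used to legitimately "insert" a trivial split at the point $t$. A minor technical point is handling the case where the infimum defining $h(s,u)$ is not attained, which I would dispatch uniformly with an $\epsilon$-argument rather than case-splitting on attainment. Part (ii) is routine once one commits to the linear form of univariate additive functions.
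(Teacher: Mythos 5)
Your proof is correct. Part (i) is essentially the paper's argument in different packaging: the paper starts from $h(s,t)+h(t,u)$, writes it as a double infimum over $\tau\in[s,t]$ and $\upsilon\in[t,u]$, and applies super-additivity of $f$ and of $g$ in turn to obtain the two partial bounds $h(s,t)+h(t,u)\le\inf_{\upsilon\in[t,u]}\{f(s,\upsilon)+g(\upsilon,u)\}$ and $h(s,t)+h(t,u)\le\inf_{\tau\in[s,t]}\{f(s,\tau)+g(\tau,u)\}$, whose combination is $h(s,u)$; your case split on whether the ($\epsilon$-)minimizer of $h(s,u)$ lands in $[s,t]$ or $[t,u]$ uses exactly the same two ingredients and the same insertion of a trivial split at $t$ via $f(t,t)=g(t,t)=0$, so it is a reorganization rather than a new idea (and the $\epsilon$-minimizer care is moot in the paper's discrete-time setting, where the infimum is over a finite set). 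Part (ii) is where you genuinely diverge: the paper never invokes linearity --- it uses additivity to merge $f(\tau-s)+f(\upsilon-t)$ into $f(\tau-s+\upsilon-t)$ (likewise for $g$) and then collapses the double infimum by the substitution $\varsigma=\tau+\upsilon$, proving additivity of $h$ for arbitrary additive univariate $f,g$. Your reduction to $\phi(x)=ax$, $\psi(x)=bx$ with the explicit formula $h(s,t)=\min\{a,b\}(t-s)$ is valid in discrete time (where additivity over the integers forces linearity) but, as you yourself flag, needs a regularity assumption in continuous time to rule out pathological solutions of Cauchy's equation; the paper's change-of-variables argument sidesteps that issue entirely and is the cleaner route for part (ii).
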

From Lem.~\ref{lem:additivity} the convolution of two additive bivariate functions $f(s,t)$ and $g(s,t)$ for $t \ge s \ge 0$ is super-additive, however, in general it is not additive. As a counterexample consider the additive functions $f(s,t) = t-s$ and $g(s,t) = 2(\lfloor t/2 \rfloor - \lfloor s/2 \rfloor)$. It follows that $h(s,t) = f \otimes g(s,t) = 2 \lfloor t/2 \rfloor - s$ is not additive. Additivity is, however, achieved in case of univariate functions, where Lem.~\ref{lem:additivity} ii) extends a known result for min-plus convolution of sub-additive univariate functions in~\cite[p. 142]{leboudec:networkcalculus}.
\begin{lem}[Super-additivity of $\min$]
\label{lem:additivitymin}
Given two super-additive bivariate functions $f(s,t)$ and $g(s,t)$ for $t \ge s \ge 0$. The minimum $h(s,t) = \min \{f(s,t), g(s,t)\}$ is super-additive.
\end{lem}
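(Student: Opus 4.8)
The plan is to prove super-additivity of $h(s,t) = \min\{f(s,t), g(s,t)\}$ directly from the definition, exploiting the fact that $\min$ of super-additive functions inherits the inequality almost for free. Concretely, I would fix arbitrary $u \ge t \ge s \ge 0$ and show $h(s,u) \ge h(s,t) + h(t,u)$. By definition $h(s,u)$ equals either $f(s,u)$ or $g(s,u)$; in the first case super-additivity of $f$ gives $h(s,u) = f(s,u) \ge f(s,t) + f(t,u)$, and then since $f(s,t) \ge \min\{f(s,t),g(s,t)\} = h(s,t)$ and likewise $f(t,u) \ge h(t,u)$, I conclude $h(s,u) \ge h(s,t) + h(t,u)$. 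The case $h(s,u) = g(s,u)$ is symmetric. Since this holds regardless of which of $f,g$ attains the minimum at $(s,u)$, the bound holds unconditionally.

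The only genuine subtlety — and the step I would be most careful about — is the boundary condition $h(t,t) = 0$ that is implicitly required of a bivariate "service-process-like" function (it is assumed for $f$ and $g$ in Lem.~\ref{lem:additivity} and should be noted here too). This follows immediately: $h(t,t) = \min\{f(t,t), g(t,t)\} = \min\{0,0\} = 0$, so no work is needed. One should also double-check that the argument does not secretly need monotonicity or non-negativity of $f,g$; it does not — only the super-additivity inequalities and the elementary fact that $\min\{a,b\} \le a$ and $\min\{a,b\} \le b$ are used. This makes the proof a two- or three-line case analysis.

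I do not expect any real obstacle here; unlike Lem.~\ref{lem:additivity}, there is no analogue of the "additivity is only preserved in the univariate case" phenomenon, because the lemma only claims preservation of super-additivity, and the $\min$ operation is order-reversing-friendly in exactly the way needed. If one also wanted to record (as the surrounding discussion suggests) that additivity is \emph{not} preserved by $\min$, a one-line counterexample such as $f(s,t) = t-s$ and $g(s,t) = 2(t-s)$, for which $h(s,t) = t-s$ is additive, fails to illustrate the point; instead $f(s,t) = t - s$ and $g(s,t) = \max\{t-s-1,\,2(t-s-1)\} $ type constructions, or more simply two additive functions whose pointwise minimum is strictly super-additive, e.g. $f(s,t)=2(t-s)$ and a piecewise-linear additive $g$, would work — but this is outside the statement to be proved and I would leave it to the remark.

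In summary, the proof is: fix $u\ge t\ge s\ge 0$; observe $h(t,t)=0$; then in either branch of the minimum defining $h(s,u)$, apply super-additivity of the attaining function and bound its two summands below by the corresponding values of $h$, yielding $h(s,u)\ge h(s,t)+h(t,u)$. The main "obstacle" is merely bookkeeping of which function attains the minimum, which dissolves once one notices the bound is symmetric in $f$ and $g$.
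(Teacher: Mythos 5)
Your proof is correct and is essentially the same elementary argument as the paper's: the paper expands $h(s,t)+h(t,u)$ as a minimum over the four cross-sums and bounds it by $\min\{f(s,u),g(s,u)\}=h(s,u)$, while you run the identical inequalities in the other direction via a case split on which of $f,g$ attains the minimum at $(s,u)$. No gap; the remarks about $h(t,t)=0$ and the non-preservation of additivity are side observations outside the statement.
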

The minimum of two additive bivariate functions $f(s,t)$ and $g(s,t)$ for $t \ge s \ge 0$ is super-additive, as a special case of Lem.~\ref{lem:additivitymin}, but in general not additive. A counterexample is $f(s,t) = t-s$ and $g(s,t) = 2(\lfloor t/2 \rfloor - \lfloor s/2 \rfloor)$. Clearly $f$ and $g$ are additive, however, $h = \min \{f,g\} = 2 \lfloor t/2 \rfloor - s$ is not.
\begin{figure}
\hspace{-10pt}
\subfigure[Deviation from additivity]{
\includegraphics[width=0.51\columnwidth]{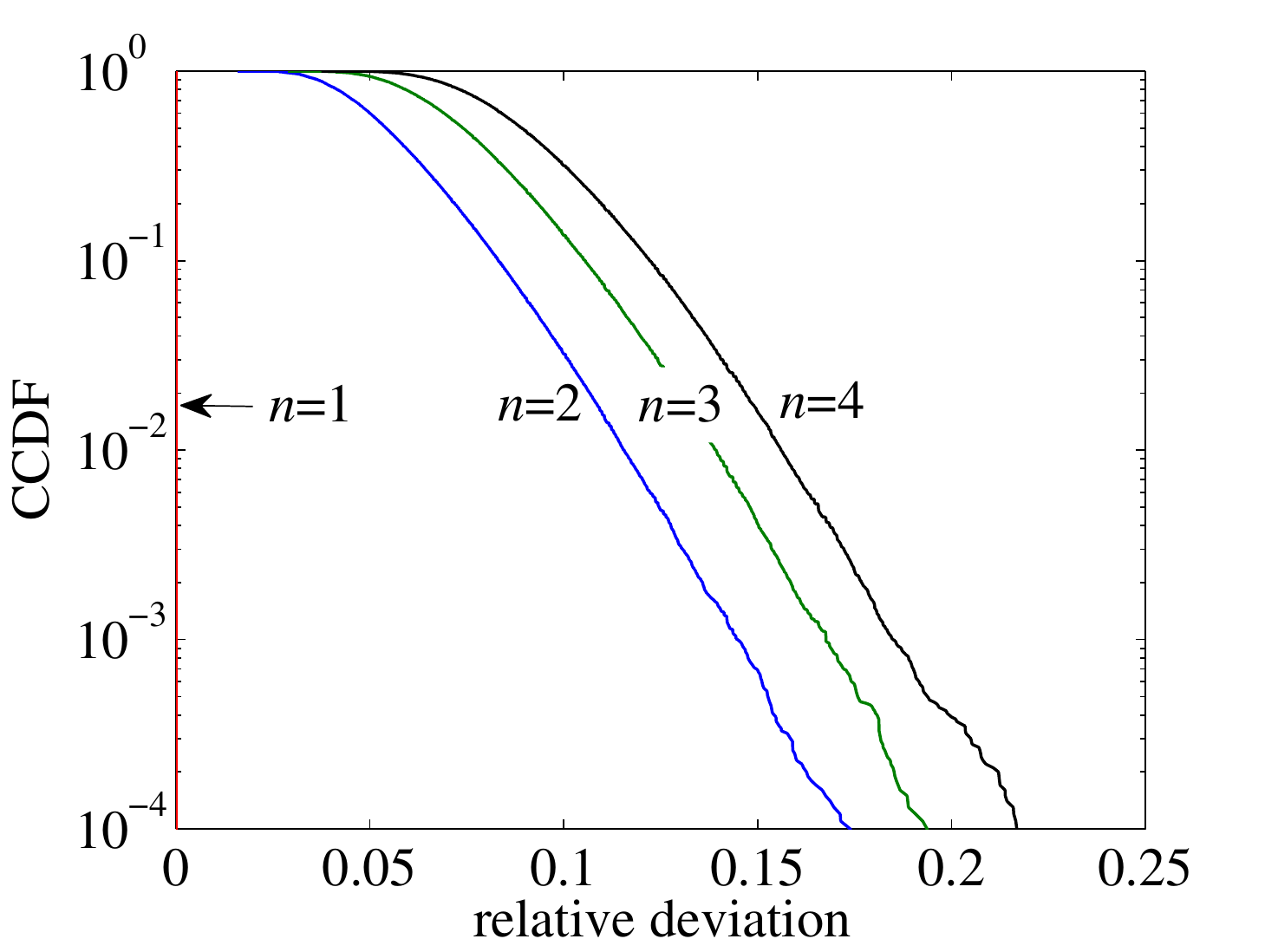}
\label{fig:superadditivity}
}
\hspace{-10pt}
\subfigure[Accuracy of minimal probing]{
\includegraphics[width=0.51\columnwidth]{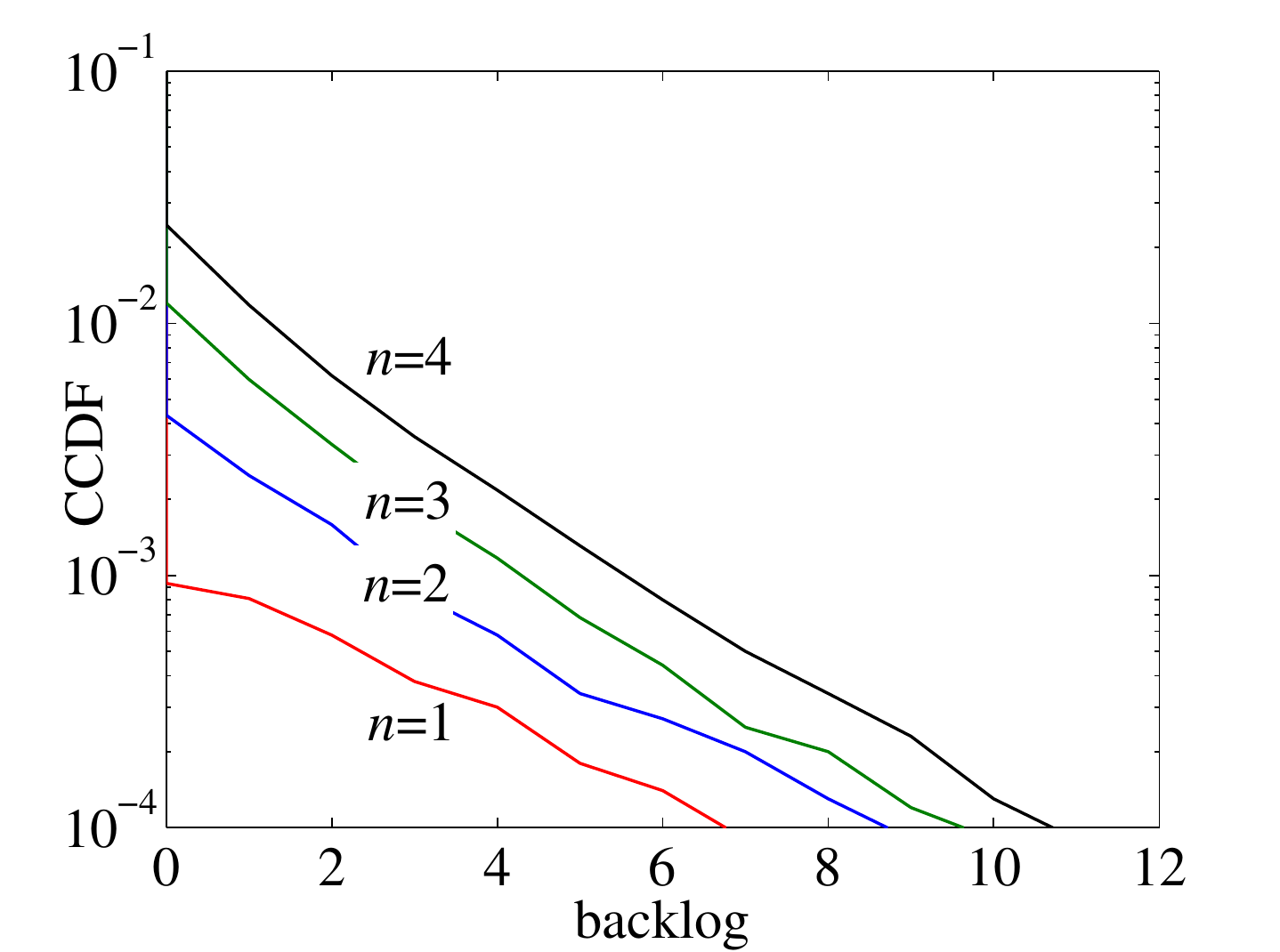}
\label{fig:minimalprobing}
\hspace{-10pt}
}
\caption{Network of $n$ systems with random sleep scheduling in series. (a) The network service process deviates from additivity. (b) Minimal probing achieves small backlogs, corresponding to a high accuracy of the estimate.}
\label{fig:deviationsfromadditivity}
\end{figure}
%
%
\subsection{Minimal Probing}
\label{sec:minimalprobing}
Given the limitations identified above, we devise a new probing method to characterize transient service processes that are neither convex nor additive. The method comprises two phases. In the first step, a minimal probe as well as an upper bound of the service are estimated using the burst response as in Sec.~\ref{sec:burstresponse}. In the second step, the minimal probe is used to estimate a non-stationary service curve with a defined accuracy. We show that the minimal probe reveals the service of the system, whereas any smaller or larger probe estimates only a lower bound. The importance of the probe traffic intensity is also discussed in~\cite{liebeherr:availbw}. Estimation methods for stationary systems that propose techniques for minimal backlogging are~\cite{valaee:adhocadmissioncontrol, nam:minimalbackloggingbwest}, where the transmission of each probe packet triggers the generation of a new one.
\subsubsection{Estimation using Arbitrary Probes}
First, we consider how to obtain a service curve estimate from an arbitrary probe $A(\tau)$. The estimate will then be used to derive conditions for the shape of a minimal probe. From Eq.~\eqref{eq:linearsystem} it follows that $D(t) \le A(\tau) + S(\tau,t)$ for all $\tau \in [0,t]$ so that
\begin{equation}
S(\tau,t) \ge D(t) - A(\tau),
\label{eq:estimategeneral}
\end{equation}
for all $\tau \in [0,t]$. An equivalent expression using the backlog is $S(\tau,t) \ge A(\tau,t)-B(t)$ for all $\tau \in [0,t]$. By insertion of the backlog quantile it follows that
\begin{equation}
\mathcal{S}^{\varepsilon}(\tau,t) = A(\tau,t) - B^{\varepsilon}(t)
\label{eq:estimateminimalprobing}
\end{equation}
satisfies Eq.~\eqref{eq:bivariateenvelope}, i.e., $\mathcal{S}^{\varepsilon}(\tau,t)$ is a non-stationary service curve as defined by Eq.~\eqref{eq:servicecurve}. To estimate $\mathcal{S}^{\varepsilon}(\tau,t)$, the probing procedure uses a defined probe $A(\tau,t)$ to obtain $B^{\varepsilon}(t)$ from repeated backlog measurements. In the special case $A(\tau,t) = r (t-\tau)$ the rate scanning method can be recovered.
\subsubsection{Definition of Minimal Probe}
So far, we did not constrain the shape of the probe and defining a suitable probe is non-trivial. Intuitively, a probe that is too small will provide little information about the service as the observed departures are mostly limited by the arrivals. A too large probe, on the other hand, will deteriorate the estimate, e.g., in the extreme case of a burst probe $A(\tau) = \delta(\tau)$, the lower bound in Eq.~\eqref{eq:estimategeneral} will only be useful for $\tau=0$ but not for $\tau > 0$. The same restriction applied to burst probing earlier.
The following lemma defines a necessary and sufficient condition for a minimal probe to obtain the true service from Eq.~\eqref{eq:estimategeneral}.
\begin{lem}[Minimal Probe]
\label{lem:minimalprobe}
Fix $t>0$ and define the minimal probe
\begin{equation}
A_{mp}(\tau) = S(0,t) - S(\tau,t) ,
\label{eq:minimalprobe}
\end{equation}
for $\tau \in [0,t]$. Eq.~\eqref{eq:estimategeneral} holds with equality if and only if $A(\tau) = A_{mp}(\tau)$ for all $\tau \in [0,t]$.
\end{lem}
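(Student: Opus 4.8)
The plan is to prove the two directions of the iff separately, working from the system equation~\eqref{eq:linearsystem} specialized to the probe $A(\tau)$, which determines the departures $D(t)$ and hence the backlog. First I would establish the ``if'' direction: assume $A(\tau) = A_{mp}(\tau) = S(0,t) - S(\tau,t)$ for all $\tau \in [0,t]$, and compute $D(t)$ from Eq.~\eqref{eq:linearsystem}. Plugging in, $D(t) = \inf_{\tau \in [0,t]} \{A_{mp}(\tau) + S(\tau,t)\} = \inf_{\tau \in [0,t]} \{S(0,t) - S(\tau,t) + S(\tau,t)\} = \inf_{\tau\in[0,t]}\{S(0,t)\} = S(0,t)$. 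So the departures of the minimal probe coincide with the burst response. Then for each fixed $\tau \in [0,t]$, Eq.~\eqref{eq:estimategeneral} reads $S(\tau,t) \ge D(t) - A(\tau) = S(0,t) - A_{mp}(\tau) = S(0,t) - (S(0,t) - S(\tau,t)) = S(\tau,t)$, which is an equality. Hence Eq.~\eqref{eq:estimategeneral} holds with equality for every $\tau \in [0,t]$.

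For the ``only if'' direction, suppose Eq.~\eqref{eq:estimategeneral} holds with equality for all $\tau \in [0,t]$, i.e., $S(\tau,t) = D(t) - A(\tau)$ for all $\tau \in [0,t]$. Evaluating at $\tau = 0$ and using $A(0) = 0$ gives $D(t) = S(0,t)$. Substituting back, $S(\tau,t) = S(0,t) - A(\tau)$, so $A(\tau) = S(0,t) - S(\tau,t) = A_{mp}(\tau)$ for all $\tau \in [0,t]$, as claimed. One subtlety to address is that $D(t)$ in Eq.~\eqref{eq:estimategeneral} is itself a function of the probe $A$ through Eq.~\eqref{eq:linearsystem}; I would note that the argument above is purely pointwise in $\tau$ for the fixed $t$ under consideration, so no circularity arises — the equality hypothesis pins down $A$ directly, and then one may verify consistency with Eq.~\eqref{eq:linearsystem} exactly as in the ``if'' direction.

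The main obstacle, such as it is, is bookkeeping rather than mathematics: one must be careful that $A_{mp}(\tau)$ as defined is genuinely a valid arrival function, i.e., non-negative, non-decreasing, with $A_{mp}(0) = 0$. Non-negativity and $A_{mp}(0)=0$ are immediate since $S(0,0)=0$ and, by the min-plus causality of a well-formed service process, $S(\tau,t) \le S(0,t)$ (this is the sub-additivity-type inequality noted after Eq.~\eqref{eq:burstresponse} — burst probing gives $S(\tau,t) \le S(0,t) - S(0,\tau) \le S(0,t)$). Monotonicity of $A_{mp}$ in $\tau$ follows from monotonicity of $S(\tau,t)$ being non-increasing in $\tau$ for fixed $t$, which holds because shrinking the interval $(\tau,t]$ cannot increase the service. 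I would state these as brief remarks so that the minimal probe is well-defined before invoking it, and then the two-line equality computations close the proof.
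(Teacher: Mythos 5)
Your proof is correct. The sufficiency direction is identical to the paper's: substituting $A_{mp}$ into Eq.~\eqref{eq:linearsystem} collapses the infimum to $S(0,t)$, and equality in Eq.~\eqref{eq:estimategeneral} follows pointwise. For necessity you take a genuinely different route. The paper argues by perturbation: it writes an arbitrary probe as $A(\tau) = A_{mp}(\tau) \pm f(\tau)$ with $f(0)=0$ and $f$ not identically zero, propagates the perturbation through the convolution to obtain $D(t)-A(\tau) = S(\tau,t) + \inf_{\upsilon \in [0,t]}\{\pm f(\upsilon)\} \mp f(\tau)$, and concludes that strict inequality must occur for some $\tau$. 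You instead read the equality hypothesis $A(\tau)+S(\tau,t)=D(t)$ for all $\tau$ directly, evaluate at $\tau=0$ using the convention $A(0)=0$ to pin down $D(t)=S(0,t)$, and solve for $A(\tau)$. Your version is shorter and avoids the sign bookkeeping; the paper's version has the side benefit of exhibiting the exact defect $\inf_{\upsilon}\{\pm f(\upsilon)\}\mp f(\tau)$ by which a non-minimal probe underestimates the service, which foreshadows the accuracy discussion in Sec.~\ref{sec:accuracy}. One caveat on your closing remarks: the bound $S(\tau,t)\le S(0,t)$ and the monotonicity you invoke to argue that $A_{mp}$ is a valid arrival function do not follow from the paper's formal definition of a service process (which only requires non-negativity and $S(t,t)=0$), and the inequality $S(\tau,t)\le S(0,t)-S(0,\tau)$ you cite is stated in the paper only for super-additive processes. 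This does not affect the lemma, which is a purely algebraic statement, but your well-definedness remarks would need the additional (natural) assumption that $S(\tau,t)$ is non-increasing in $\tau$ for fixed $t$.
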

\begin{proof}
By substitution of Eq.~\eqref{eq:minimalprobe} into Eq.~\eqref{eq:linearsystem} we have $D(t) = S(0,t)$ and finally by insertion into Eq.~\eqref{eq:estimategeneral} $D(t)-A_{mp}(\tau) = S(\tau,t)$ for all $\tau \in [0,t]$, which proves that Eq.~\eqref{eq:minimalprobe} is sufficient. To see that Eq.~\eqref{eq:minimalprobe} is necessary, consider any other probe $A(\tau)$ expressed as $A(\tau) = A_{mp}(\tau) \pm f(\tau)$ where $f(0) = 0$ and $\exists \tau \in (0,t]: f(\tau) \neq 0$. It follows by the same steps that $D(t)-A(\tau) = S(\tau,t) + \inf_{\upsilon \in [0,t]} \{\pm f(\upsilon) \} \mp f(\tau)$ implying that $\exists \tau \in [0,t]: D(t)-A(\tau) < S(\tau,t)$.
\end{proof}
While Lem.~\ref{lem:minimalprobe} proves the optimality of $A_{mp}(\tau)$, it depends on the unknown service and cannot be constructed a priori. To gather information on $A_{mp}(\tau)$, we initially consider the system's burst response $D(\tau) = \delta \otimes S(\tau) = S(0,\tau)$ and estimate Eq.~\eqref{eq:minimalprobe} by $\widetilde{A}_{mp}(\tau) = S(0,\tau)$ for $\tau \in [0,t]$. We use tilde to indicate that $\widetilde{A}_{mp}(\tau)$ is an approximation of $A_{mp}(\tau)$ that is exact only if $S(\tau,t)$ is additive. In the stochastic case, we employ $\mathcal{S}_{br}^{\varepsilon}(\tau,t)$ from Eq.~\eqref{eq:effectiveserviceburstrespones} to estimate
\begin{equation}
\widetilde{A}_{mp}(\tau) = \mathcal{S}_{br}^{\varepsilon}(0,t) - \mathcal{S}_{br}^{\varepsilon}(\tau,t) ,
\label{eq:minimalprobestochastic}
\end{equation}
for $\tau \in [0,t]$. We note that additivity of $\mathcal{S}_{br}^{\varepsilon}(\tau,t)$ cannot be assumed by construction of Eq.~\eqref{eq:effectiveserviceburstrespones}, see Lem.~\ref{lem:additivitymin}.
\subsubsection{Accuracy of the Estimates}
\label{sec:accuracy}
We show that the backlog at the end of the probe $\widetilde{A}_{mp}(\tau)$ is a measure of the estimation accuracy. Further, we verify that the backlog is bounded by the deviation of a super-additive service process from additivity.

First we investigate a time-variant, deterministic system, where Eq.~\eqref{eq:estimategeneral} provides a lower estimate of the service. The estimate of the minimal probe $\widetilde{A}_{mp}(\tau) = S(0,\tau)$ is obtained as the system's burst response from Eq.~\eqref{eq:burstresponse}. By insertion into Eq.~\eqref{eq:estimategeneral}, a lower service estimate is
\begin{equation*}
S(\tau,t) \ge \inf_{\upsilon \in [0,t]} \{S(0,\upsilon) + S(\upsilon,t)\} - S(0,\tau),
\end{equation*}
where we used Eq.~\eqref{eq:linearsystem} to compute $D(t)$ for $\widetilde{A}_{mp}(\tau)$. On the other hand, the burst response provides the upper estimate $S(\tau,t) \le S(0,t) - S(0,\tau)$ if $S(\tau,t)$ is super-additive. Taking the difference of the upper and the lower estimate, the service process is bounded in an interval of width
\begin{equation*}
\Delta(0,t) = S(0,t) -  \inf_{\upsilon \in [0,t]} \{S(0,\upsilon) + S(\upsilon,t)\}
\end{equation*}
that is the maximum deviation of $S(0,t)$ from additivity as defined in Eq.~\eqref{eq:deviationfromadditivity}. Further, since $S(0,t) = \widetilde{A}_{mp}(t)$ we have
\begin{equation}
\Delta(0,t) = \widetilde{A}_{mp}(t) - D(t) = B(t),
\label{eq:estimateaccuracydeterministic}
\end{equation}
i.e., the estimation accuracy is determined by the backlog at the end of the probe. If $S(\tau,t)$ is additive, $\Delta(0,t)=0$, i.e., both the lower and the upper estimate recover $S(\tau,t)$ exactly. Further, the backlog $B(\tau)$ that is caused by $\widetilde{A}_{mp}(\tau)$ is zero during the entire probe if $S(\tau,t)$ is additive.

In the stochastic case, we obtain a non-stationary service curve estimate from Eq.~\eqref{eq:estimateminimalprobing} as $\mathcal{S}_{mp}^{\varepsilon}(\tau,t) = \widetilde{A}_{mp}(\tau,t) - B^{\varepsilon}(t)$ using the probe defined in Eq.~\eqref{eq:minimalprobestochastic}. Further, we have by use of Eq.~\eqref{eq:minimalprobestochastic} that $\widetilde{A}_{mp}(\tau,t) = \widetilde{A}_{mp}(t) - \widetilde{A}_{mp}(\tau) = \mathcal{S}_{br}^{\varepsilon}(\tau,t)$ since $\mathcal{S}_{br}^{\varepsilon}(t,t) = 0$ by definition of Eqs.~\eqref{eq:servicesampleburstestimate} and~\eqref{eq:effectiveserviceburstrespones}. By insertion into Eq.~\eqref{eq:estimateminimalprobing} it holds for all $\tau \in [0,t]$ that
\begin{equation}
\mathcal{S}_{mp}^{\varepsilon}(\tau,t) = \mathcal{S}_{br}^{\varepsilon}(\tau,t) - B^{\varepsilon}(t) .
\label{eq:estimateaccuracy}
\end{equation}
We conclude that $B^{\varepsilon}(t)$ observed by minimal probing at the end of the probe is a measure of accuracy that separates the generally conservative estimate of minimal probing from the possibly too optimistic estimate of burst probing.

To investigate $B^{\varepsilon}(t)$, we consider the backlog expression $B(t) = \sup_{\tau \in [0,t]} \{ A(\tau,t) - S(\tau,t) \}$ that follows from Eq.~\eqref{eq:linearsystem}. By insertion of the probe defined in Eq.~\eqref{eq:minimalprobestochastic} we have $B(t) = \sup_{\tau \in [0,t]} \{ \mathcal{S}_{br}^{\varepsilon}(\tau,t) - S(\tau,t) \}$. Next, we substitute $\mathcal{S}_{br}^{\varepsilon}(\tau,t) = \inf_{\psi \in \Psi_t} \{S_{\psi}(0,t) - S_{\psi}(0,\tau)\}$ from Eq.~\eqref{eq:effectiveserviceburstrespones} so that for any sample path $\varphi \in \Psi_t$ it holds that
\begin{align*}
B_{\varphi}(t) &= \sup_{\tau \in [0,t]} \Bigl\{ \inf_{\psi \in \Psi_t} \{S_{\psi}(0,t) - S_{\psi}(0,\tau)\} - S_{\varphi}(\tau,t) \Bigr\} \\
&\le \sup_{\tau \in [0,t]} \{ S_{\varphi}(0,t) - S_{\varphi}(0,\tau) - S_{\varphi}(\tau,t) \} \\
&= S_{\varphi}(0,t) - \inf_{\tau \in [0,t]} \{S_{\varphi}(0,\tau) + S_{\varphi}(\tau,t) \} = \Delta_{\varphi}(0,t) ,
\end{align*}
i.e., $B_{\varphi}(t)$ is bounded by the maximal deviation of $S_{\varphi}(0,t)$ from additivity, defined as $\Delta_{\varphi}(0,t)$ in Eq.~\eqref{eq:deviationfromadditivity}. If $S(\tau,t)$ is additive, it follows that $B_{\varphi}(t) = 0$ for all $\varphi \in \Psi_t$. Since $\mathsf{P}[\Psi_t] \ge 1-\varepsilon$, it holds that $B^{\varepsilon}(t) = 0$ and $\mathcal{S}_{mp}^{\varepsilon}(\tau,t)$ recovers $\mathcal{S}_{br}^{\varepsilon}(\tau,t)$ exactly. The same applies if $S(\tau,t)$ is sub-additive.

For a first example, we consider the latency-rate service curves of deterministic sleep scheduling in Fig.~\ref{fig:latencyrate}, where we also include the results of minimal probing. For the case of a transient latency depicted in Fig.~\ref{fig:transientlatencyrate}, minimal probing recovers the estimate of burst probing, which confirms that the estimate is exact. This is due to the additivity of the service.

In case of a stationary latency, as in Figs.~\ref{fig:stationarylatencyrate} and~\ref{fig:transientstationarylatencyrate}, the estimates of minimal probing and burst probing differ by a constant offset that is equal to the backlog at the end of the minimal probe $B(t)=10$, as derived by Eq.~\eqref{eq:estimateaccuracydeterministic}. The deviation indicates that the estimate from burst probing is overly optimistic. This is a result of the super-additivity of the service that has a maximal deviation from additivity defined by Eq.~\eqref{eq:deviationfromadditivity} of $\Delta(0,t) = R T = B(t)$.

Unlike burst probing, minimal probing correctly identifies the stationary latency in the region marked with {\it sl} in Figs.~\ref{fig:stationarylatencyrate} and~\ref{fig:transientstationarylatencyrate}. It matches the analytical service curve up to the region marked with $\Delta$, where it provides a conservative estimate with an accuracy of $\Delta(0,t)$. Finally, it reproduces the flat area in the region marked with {\it tl} in Figs.~\ref{fig:transientlatencyrate} and~\ref{fig:transientstationarylatencyrate} that is due to the transient latency.

\begin{figure}
\hspace{-10pt}
\subfigure[Transient latency]{
\includegraphics[width=0.51\columnwidth]{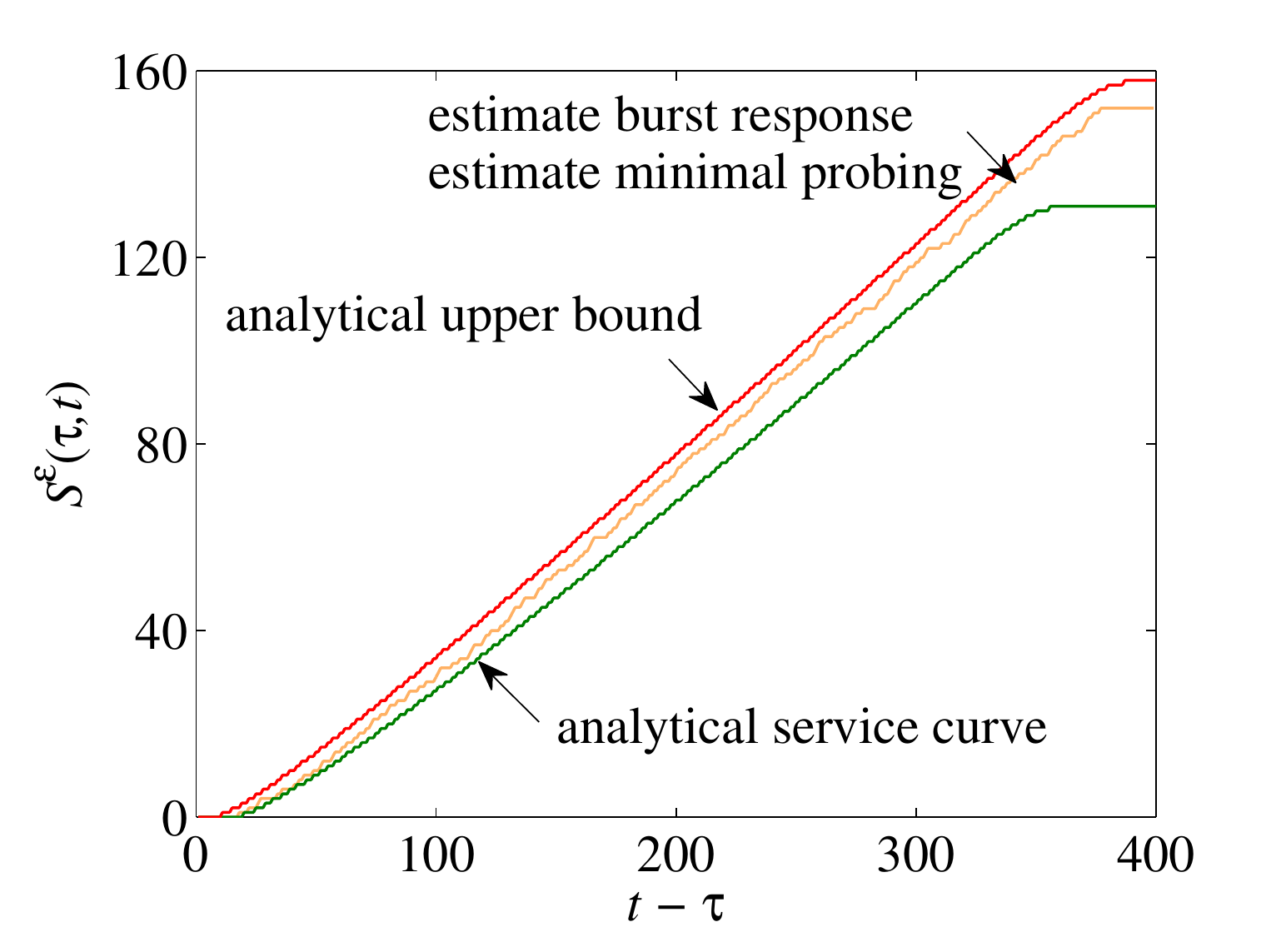}
\label{fig:minimalprobingtransient}
}
\hspace{-10pt}
\subfigure[Transient and stationary latency]{
\includegraphics[width=0.51\columnwidth]{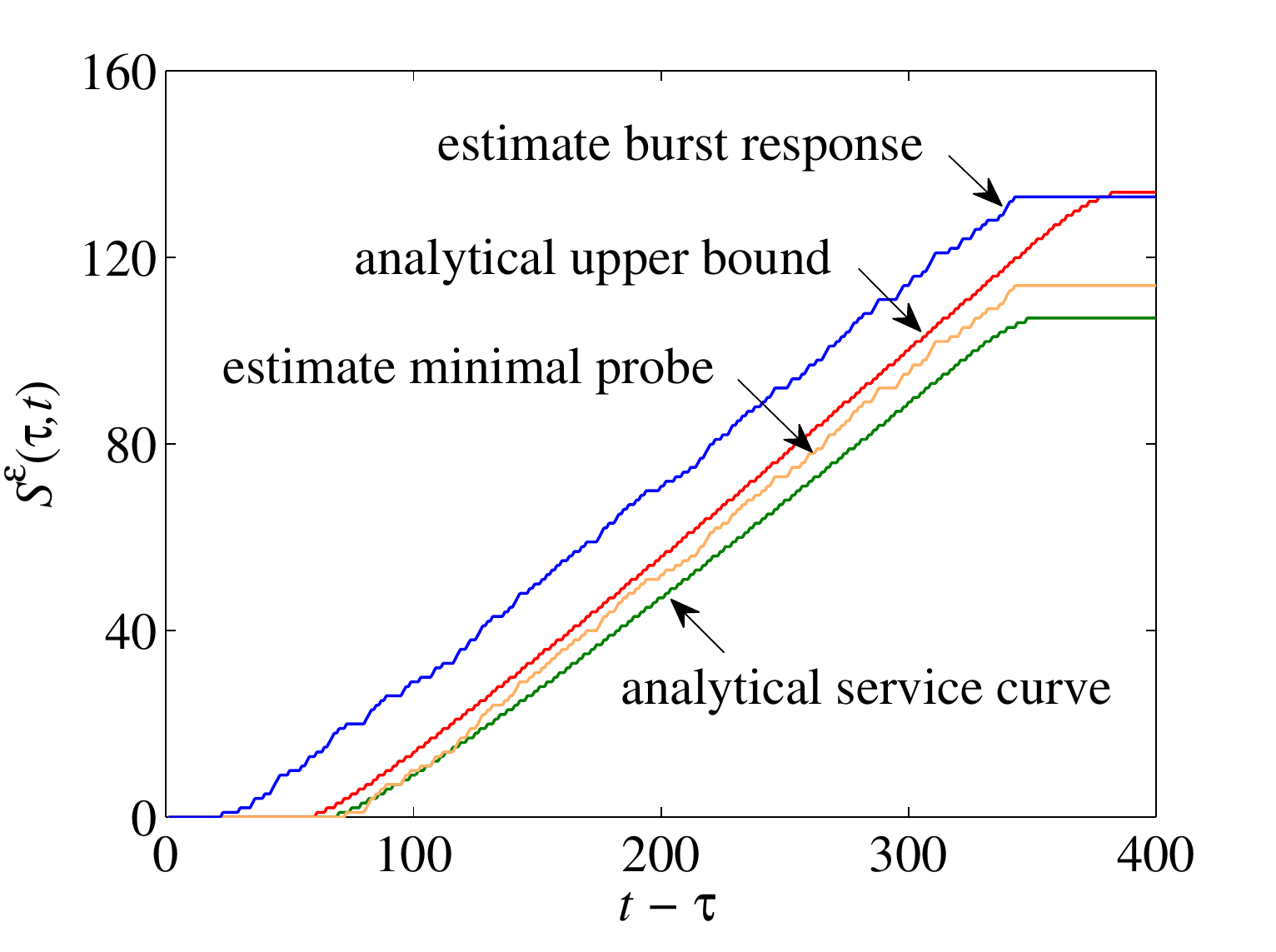}
\label{fig:minimalprobingtransientstationary}
\hspace{-10pt}
}
\caption{Service curve estimates of random sleep scheduling plus a stationary latency. The estimate of minimal probing stays between the analytical curves, whereas burst probing exceeds the upper bound in case of a stationary latency.}
\label{fig:sc_minimalprobing}
\end{figure}
Next, we regard random sleep scheduling as in Fig.~\ref{fig:servicecurveestimates} where we find that the estimates of minimal probing and burst probing in Fig.~\ref{fig:minimalprobingtransient} match. As before, we include as a reference an analytical upper bound as well as an analytical service curve that provides the lower guarantee specified by Eq.~\eqref{eq:servicecurve}. Further, in Fig.~\ref{fig:minimalprobingtransientstationary} we include a stationary latency of 50. In this case, the estimates of burst probing and minimal probing differ: Both identify the correct rate, i.e., the slope, as well as the transient latency, expressed by the flat area in the upper right. In the lower left, a service of zero is caused by outages of the Bernoulli increment process as well as by the stationary latency that is recovered, however, only by minimal probing. Due to the super-additivity of the service process, the estimate of burst probing is too optimistic and exceeds the analytical upper bound. Minimal probing, on the other hand, provides a valid service curve that resides between the analytical reference curves.

Finally, Fig.~\ref{fig:minimalprobing} quantifies the distribution of $B(t)$ that is observed by minimal probing at $t=400$ for a network of $n = 1 \dots 4$ systems in series, each with random sleep scheduling as in Fig.~\ref{fig:servicecurveestimates}. The network service process is additive for $n=1$, but not for $n > 1$ as confirmed by Fig.~\ref{fig:superadditivity}. Since in general, it is not known whether the service process is additive or not, estimates of burst probing are not reliable. Using minimal probing, we can either if $B^{\varepsilon}(t) = 0$ confirm the estimate of burst probing or otherwise obtain a conservative estimate with a defined accuracy that is given by $B^{\varepsilon}(t)$. In Fig.~\ref{fig:minimalprobing} we observe that $B^{\varepsilon}(t) > 0$ for $\varepsilon = 10^{-3}$ and $n > 1$, i.e., the estimate from burst probing is not confirmed. $B^{\varepsilon}(t)$ is, however, small, i.e, minimal probing is accurate.
%
%
\section{Estimation of Cellular Sleep Scheduling}
\label{sec:measurements}
We use the minimal probing method from Sec.~\ref{sec:minimalprobing} to estimate service curves of cellular networks with sleep scheduling, specifically DRX. Compared to state-of-the-art steady-state solutions of semi-Markov DRX models~\cite{yang:modelingumtspowersaving, zhou:performanceltedrx, bhamber:analyticLTEpowersavingburstytraffic, wu:performancedrx, zhou:ltedrxm2m}, our non-stationary service curve approach provides transient performance measures, such as the transient overshoot and corresponding relaxation time. Further, the service curve applies for arbitrary traffic arrivals, including TCP traffic, and is not limited to Markovian systems. While we investigate sleep scheduling of the mobile that is woken up for transmission of pending uplink data, we expect that similar studies can be performed for the downlink as well as for green cellular networks with base station sleeping~\cite{tabassum:basestationsleeping}.

In the following, we show our cellular measurement setup (Sec.~\ref{sec:measurementsetup}). We present estimates of service curves that explain characteristics of the cellular data service, including transient delays caused by sleep scheduling, capacity limits, and service outages, e.g., due to the radio channel (Sec.~\ref{sec:transientlteservice}). We conclude this section with a comparison of service curves obtained for different technologies, i.e., 2G (EDGE), 3G (HSPA), and 4G (LTE) (Sec.~\ref{sec:comparisonEDGEHSPA}), as well as for different times of day showing typical diurnal behavior (Sec.~\ref{sec:diurnalcharacteristics}).
%
%
\subsection{Measurement Setup}
\label{sec:measurementsetup}
\begin{figure}
\centering
\includegraphics[width=0.8\columnwidth]{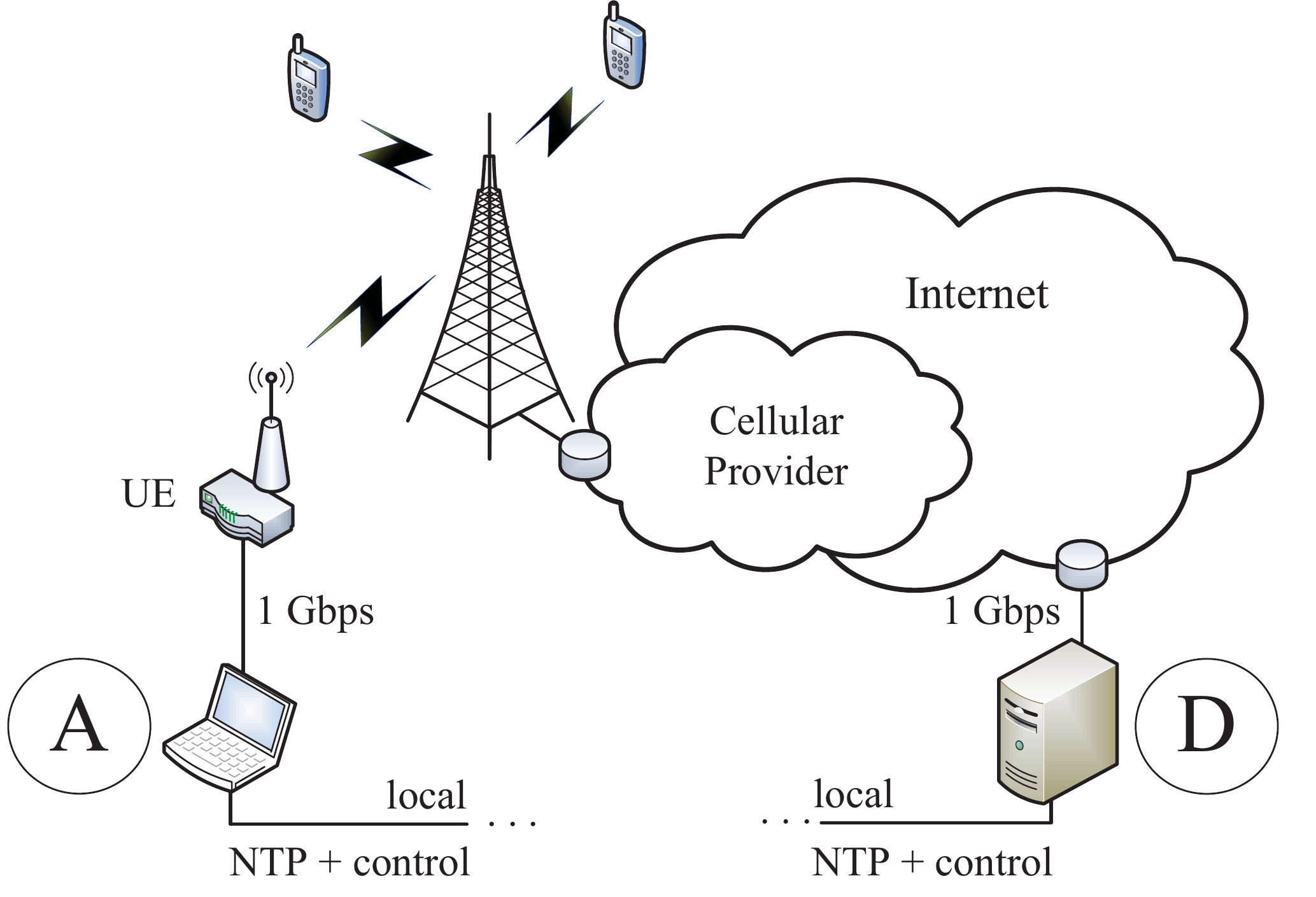}
\caption{The measurement setup comprises a cellular data connection from client (A) to server (D) for estimation, and a separated local control network.}
\label{fig:testbed}
\end{figure}

The main components of our measurement setup are displayed in Fig.~\ref{fig:testbed}. We operate in our lab a cellular client (A) and a wired server (D), that are connected to the Internet by a major German commercial cellular provider and the German National Research and Education Network, respectively. For the cellular connection, the user equipment (UE) is a stationary category 3 Teldat RS232j-4G modem for EDGE and LTE and a Teltonika HSPA+ RUT500 modem for HSPA. The nominal uplink rates as stated by the network provider are 220 kbps for EDGE, 5,76 Mbps for HSPA, and 50 Mbps for LTE. The wired connections are 1 Gbps Ethernet links.

In addition to the cellular data connection, we maintain a separated local control network for the client and the server that permits time synchronization in the order of a few ms using the Network Time Protocol (NTP). Further, it enables operating the client and the server remotely. To automate the measurements we use the tool \textit{sshlauncher}\footnote{https://github.com/bozakov/sshlauncher}, that facilitates repeated execution of distributed network experiments. We repeat each measurement 100 times to obtain a statistical basis.

We use the UDP traffic generator \textit{rude\&crude}\footnote{http://rude.sourceforge.net/} for transmission of probe traffic by the client. To measure the probe arrivals $A(t)$ and departures $D(t)$, packet traces are captured by \textit{libpcap} at the client and the server, respectively. We choose packets of 1400 Bytes size for HSPA and LTE and 500 Bytes for EDGE to accommodate the low uplink data rate.
%
%
\subsection{Transient Service of LTE}
\label{sec:transientlteservice}
We evaluate the performance impact of the transient phase that occurs if an uplink transmission is requested while the LTE UE is dormant, i.e., the radio resource control protocol is in idle state, see Fig.~\ref{fig:sleepschedule}. After each measurement, a sufficiently long pause ensures that the UE enters idle state again. For the network under observation, this amounts to 10.5 sec~\cite{becker:lte}.
\subsubsection{Transient Overshoot and Relaxation Time}\label{sec:backlogcellular}
\begin{figure}
\hspace{-10pt}
\subfigure[Mean]{
\includegraphics[width=0.51\columnwidth]{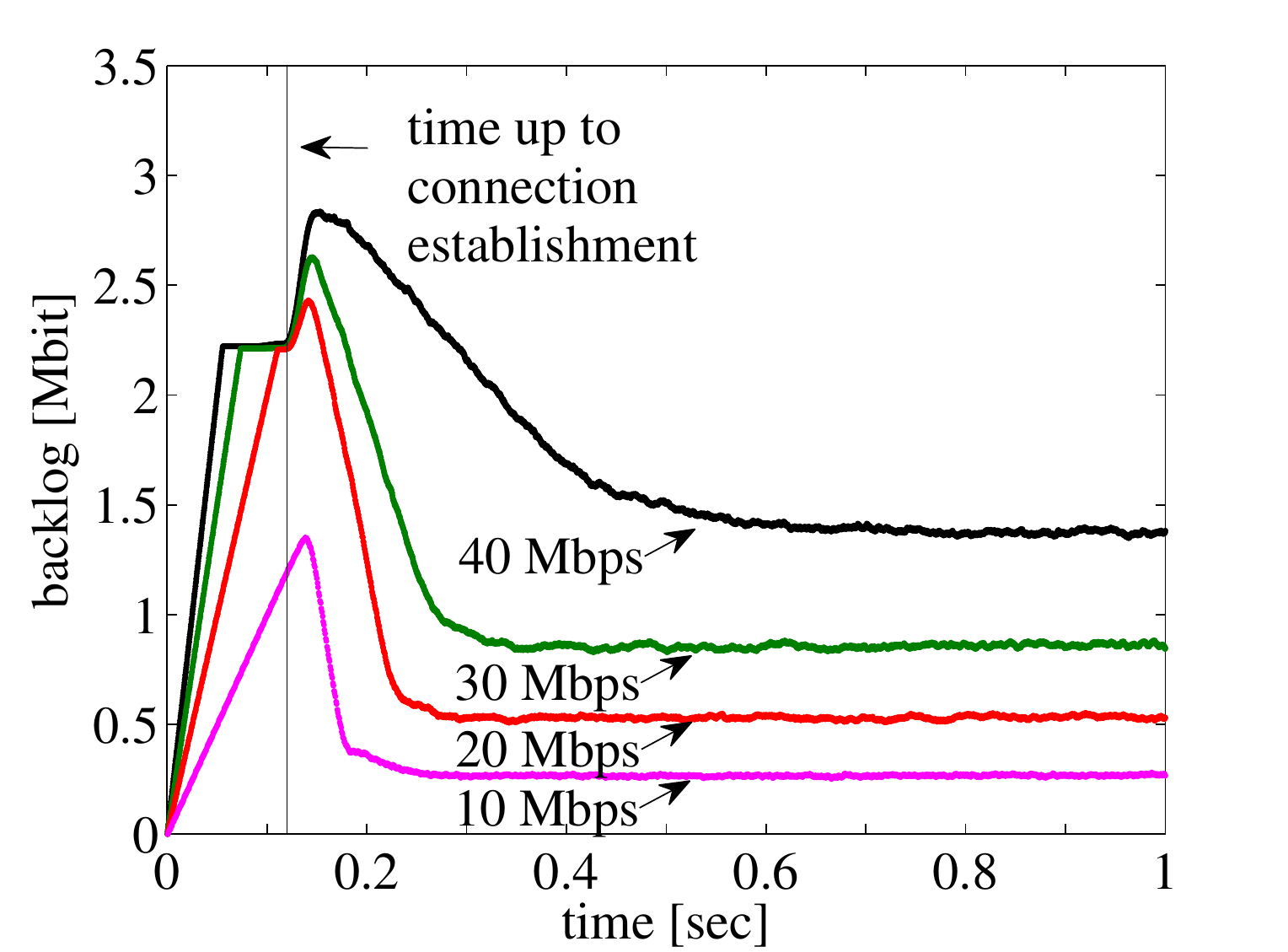}
\label{fig:lte_mean}
}
\hspace{-10pt}
\subfigure[0.95-Quantile]{
\includegraphics[width=0.51\columnwidth]{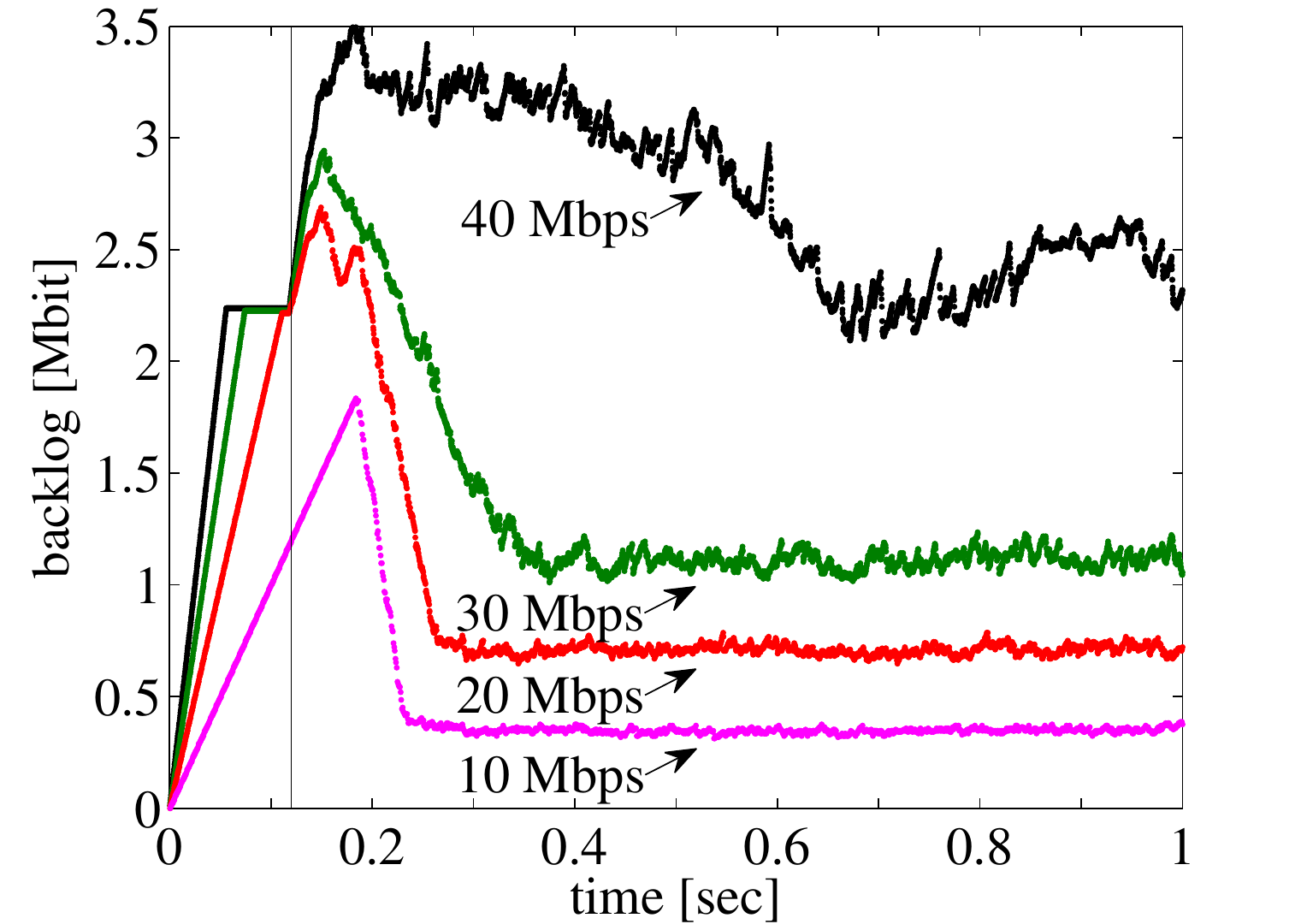}
\label{fig:lte_quantile}
\hspace{-10pt}
}
\caption{Transient backlog of LTE for CBR traffic.}
\label{fig:backlog_lte}
\end{figure}
First, we consider the backlog process that is induced by constant bit rate (CBR) traffic with rates from 10 to 40 Mbps. Corresponding mean backlogs and 0.95-quantiles are depicted in Fig.~\ref{fig:backlog_lte}. All curves show the typical transient overshoot and approach stationarity after a certain relaxation time, similar to the model in Fig.~\ref{fig:load_qt}. Next, we investigate the different phases in detail.

At startup, connection establishment is triggered and data are queued in a buffer at the UE. For CBR traffic the backlog increases linearly with a slope that corresponds to the traffic rate, e.g., after 120 ms, marked by a vertical line in Fig.~\ref{fig:backlog_lte}, the backlog amounts to 1.2 Mb for a rate of 10 Mbps.

While we do not observe packet loss at 10 Mbps, packet loss is measured at rates of 20 Mbps and above and occurs deterministically once the backlog reaches 2.2 Mb, respectively, 200 packets. To see the deterministic behavior, note that both the mean and the 0.95-quantile show an identical plateau of 2.2 Mb that extends until 120 ms. Experiments with different packet sizes substantiate a buffer limit in the UE of about 200 packets, as packet loss occurs regardless of the packet size once the backlog reaches 200 packets.

Interestingly, the backlog starts to grow again after 120 ms, that is the time to establish the connection~\cite{becker:lte}. The increase of the backlog, computed as $B(t) = A(t) - D(t)$, is caused by packets that have been served from the buffer but are still in transmission while new packets enter the buffer at the offered traffic rate. For the mean backlog, the effect lasts for about 25 ms, that corresponds to the one-way delay (OWD) observed in~\cite{becker:lte}, until packets depart from the network. Regarding the 0.95-quantile, the effect extends to up to 50 ms.

Afterwards, the backlog is depleted at a rate that is determined as the difference of the service rate and the traffic arrival rate. For traffic rates that are close to the capacity limit of about 45 Mbps, this causes significantly prolonged relaxation times until the backlog overshoot is cleared. Further, we observe an increasing volatility for higher traffic rates, as can be seen from the 0.95-quantile that grows more strongly than the mean.

Eventually, the backlog approaches stationarity, where it is mostly caused by packets in transmission. Hence, the stationary mean backlog can be approximately determined as the product of the OWD of 25 ms and the traffic rate, e.g., for a rate of 10 Mbps a mean backlog of 0.25 Mb applies. The 0.95-quantile depicts larger backlogs that correspond to delays of up to 50 ms.
%
%
\subsubsection{Non-stationary Service Curves}
\label{sec:servicecellular}
\begin{figure}
\hspace{-10pt}
\subfigure[0.95-Quantile of the backlog]{
\includegraphics[width=0.51\columnwidth]{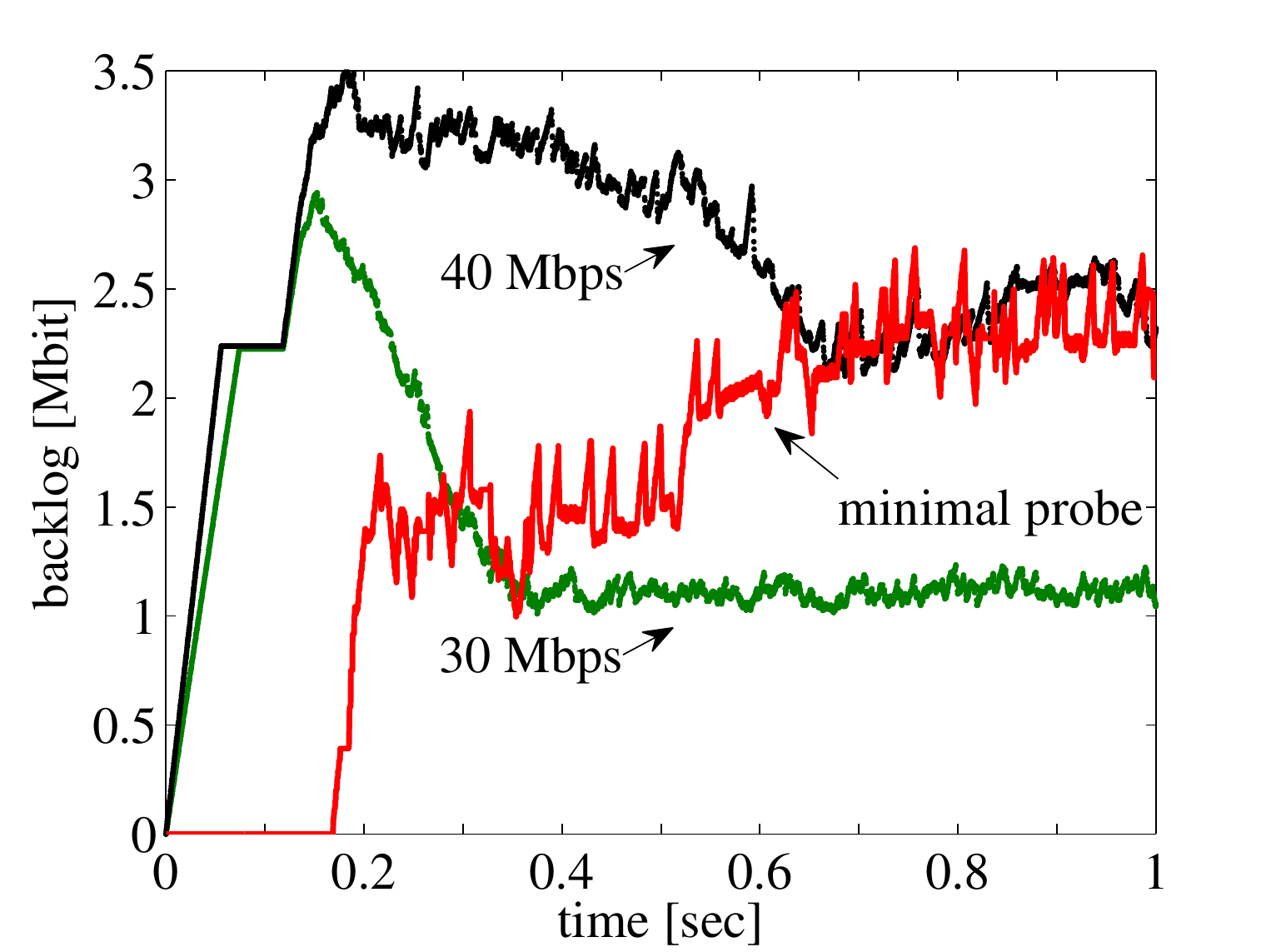}
\label{fig:minimalprobe_backlog_lte}
}
\hspace{-10pt}
\subfigure[Service curve estimates, $\varepsilon=0.05$]{
\includegraphics[width=0.51\columnwidth]{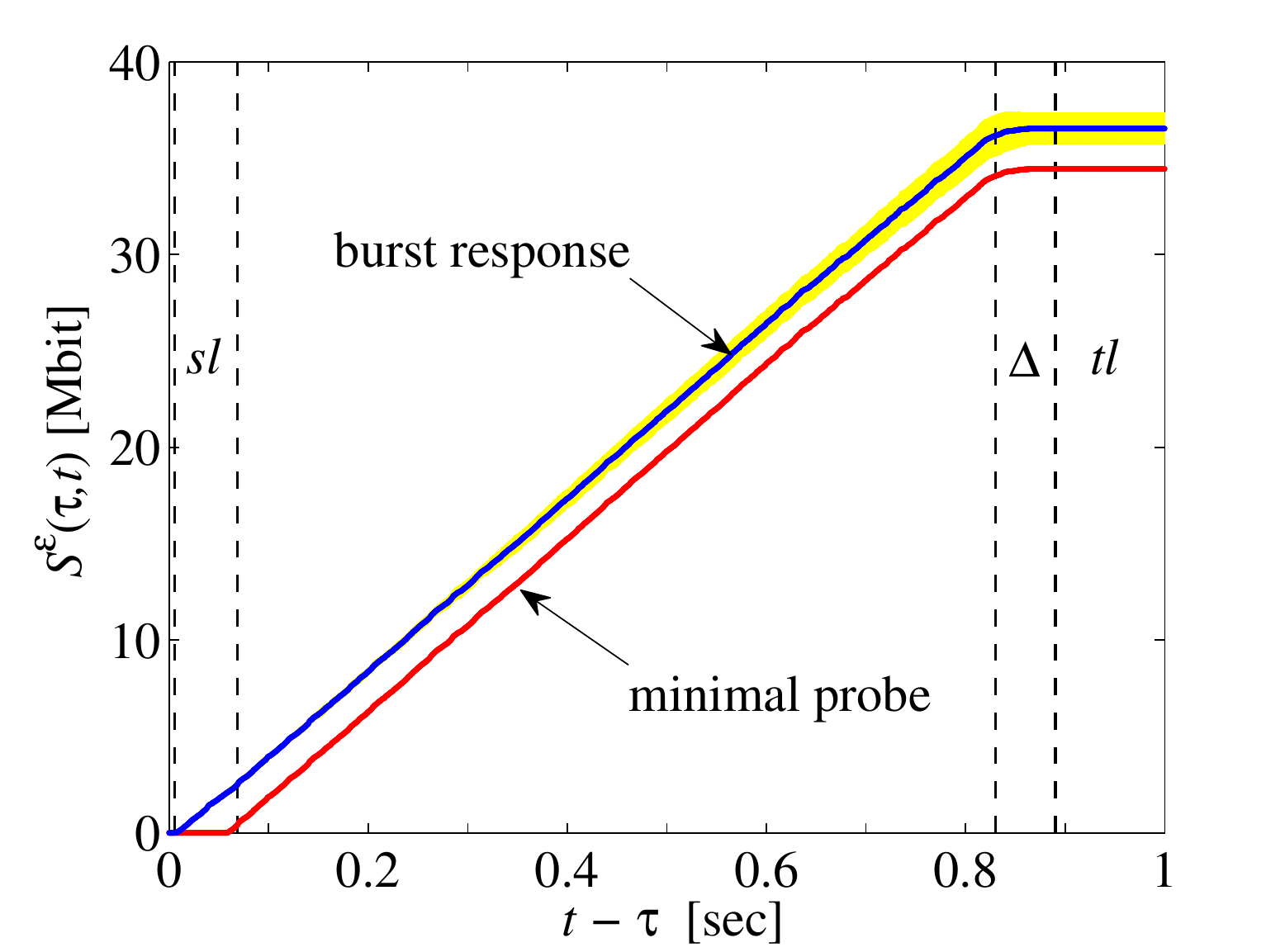}
\label{fig:servicecurveestimates_lte}
\hspace{-10pt}
}
\caption{Backlog of minimal probing and LTE service curve estimates.}
\label{fig:servicecurve_lte}
\end{figure}
Next, we identify the transient service of the LTE network by a non-stationary service curve, that is obtained by minimal probing as defined in Sec.~\ref{sec:minimalprobing}. Compared to the transient backlogs shown for individual CBR traffic arrivals in Fig.~\ref{fig:backlog_lte}, the service curve is a single characteristic function of the system that provides transient performance measures for any type of traffic arrivals.

The measurement method operates in two phases: first, an estimate of the minimal probe $\widetilde{A}_{mp}(\tau)$ is obtained from the burst response; secondly, $\widetilde{A}_{mp}(\tau)$ is used to estimate a service curve $\mathcal{S}^{\varepsilon}_{mp}(\tau,t)$. The accuracy of $\mathcal{S}^{\varepsilon}_{mp}(\tau,t)$ is determined as the backlog $B^{\varepsilon}(t)$ that is induced by $\widetilde{A}_{mp}(\tau)$. The accuracy can also be visualized using a service curve estimate $\mathcal{S}^{\varepsilon}_{br}(\tau,t)$ from the burst response, where $\mathcal{S}^{\varepsilon}_{br}(\tau,t) = \mathcal{S}^{\varepsilon}_{mp}(\tau,t) + B^{\varepsilon}(t)$ from Eq.~\eqref{eq:estimateaccuracy} serves as the upper bound.

In practice, we emulate the burst function $\delta(t)$ by sending packets at a high rate that exceeds the nominal capacity of 50 Mbps for a duration of $t$. We choose $t=1$ sec and measure 100 sample paths of the burst response to obtain $\mathcal{S}^{\varepsilon}_{br}(\tau,t)$ for $\varepsilon = 0.05$ from Eq.~\eqref{eq:effectiveserviceburstrespones}. An estimate of the minimal probe $\widetilde{A}_{mp}(\tau)$ follows from $\mathcal{S}^{\varepsilon}_{br}(\tau,t)$ by Eq.~\eqref{eq:minimalprobestochastic}. We add a single packet to the start of the probe that acts as a trigger to initiate the wake-up procedure of the UE. We make use of the minimal probe to measure the backlog $B(t)$ at the end of the probe. After collecting 100 backlog samples, we select the 0.95-quantile $B^\varepsilon(t)$. A service curve $\mathcal{S}^{\varepsilon}_{mp}(\tau,t)$ for $\varepsilon = 0.05$ is estimated from Eq.~\eqref{eq:estimateminimalprobing} by insertion of $\widetilde{A}_{mp}(\tau)$ and $B^\varepsilon(t)$.

The advantage of the minimal probe is that it is adapted to the system's transient service characteristics. As a consequence, minimal probing does not sent further packets during connection establishment, such that the transient overshoot and the following relaxation time are eliminated. The effect is shown in Fig.~\ref{fig:minimalprobe_backlog_lte}, where a representative sample path of the backlog of minimal probing is compared to the 0.95-backlog quantile of 30 Mbps and 40 Mbps CBR traffic, respectively. After the initial waiting time, the minimal probe has an average rate of 44 Mbps that results in a 0.95-backlog quantile of about 2.1~Mb at the end of the probe.

In Fig.~\ref{fig:servicecurveestimates_lte}, we present the mean of ten estimates of $\mathcal{S}^{\varepsilon}_{mp}(\tau,t)$ and $\mathcal{S}^{\varepsilon}_{br}(\tau,t)$ obtained by minimal probing and the burst response, respectively. We also include the 0.95-confidence interval of $\mathcal{S}^{\varepsilon}_{br}(\tau,t)$, depicted as a yellow area, that confirms stable estimates. For clarity, we omit the confidence interval of $\mathcal{S}^{\varepsilon}_{mp}(\tau,t)$, as it provides little additional information. Further, we note that the estimate $\mathcal{S}^{\varepsilon}_{mp}(\tau,t)$ shows a good accuracy, determined as the backlog $B^{\varepsilon}(t)$ of the minimal probe, that separates the upper bound $\mathcal{S}^{\varepsilon}_{br}(\tau,t)$ from the lower service estimate $\mathcal{S}^{\varepsilon}_{mp}(\tau,t)$. The deviation is a consequence of the super-additivity of the service, see Sec.~\ref{sec:accuracy}.

Notably, the service curve estimates show the same distinct features as illustrated in Figs.~\ref{fig:transientstationarylatencyrate} and~\ref{fig:sc_minimalprobing}, previously:
\begin{itemize}
\item {\bf Service outages:} For intervals $t-\tau \le 8$ ms, both service curve estimates $\mathcal{S}^{\varepsilon}_{mp}(\tau,t)$ and $\mathcal{S}^{\varepsilon}_{br}(\tau,t)$ are equal to zero, indicating service outages on short time-scales, e.g., due to the characteristics of the radio channel.
\item {\bf Stationary latency:} The region marked with {\it sl} expresses a stationary latency of about 50 ms. As in Fig.~\ref{fig:transientstationarylatencyrate}, $\mathcal{S}^{\varepsilon}_{mp}(\tau,t)$ identifies this region correctly, whereas $\mathcal{S}^{\varepsilon}_{br}(\tau,t)$ overestimates the service and attributes the stationary latency to the region marked with $\Delta$. The effect is due to the super-additivity of the service that is caused by the stationary latency, see Sec.~\ref{sec:superadditiveprocesses}.
\item {\bf Transient latency:} The region marked with {\it tl} shows a transient latency of about 120 ms that is due to sleep scheduling.
\item {\bf Capacity limit:} The upward segment at the center has a slope of 44 Mbps. It denotes the effective capacity limit with respect to $\varepsilon$. The almost constant slope evidences a stable transmission rate for intervals of $t-\tau \ge 58$ ms.
\end{itemize}
%
%
\subsection{Comparison with HSPA and EDGE}
\label{sec:comparisonEDGEHSPA}
We compare the performance of the LTE DRX mode with the preceding technologies HSPA and EDGE. For evaluation, we estimate non-stationary service curves using the measurement method as in Sec.~\ref{sec:servicecellular}. Taking into account smaller capacities and higher latencies, we reduce the probe traffic rate and extend the measurement duration accordingly.

\begin{figure*}
\begin{center}
\subfigure[LTE]{
\includegraphics[width=0.51\columnwidth]{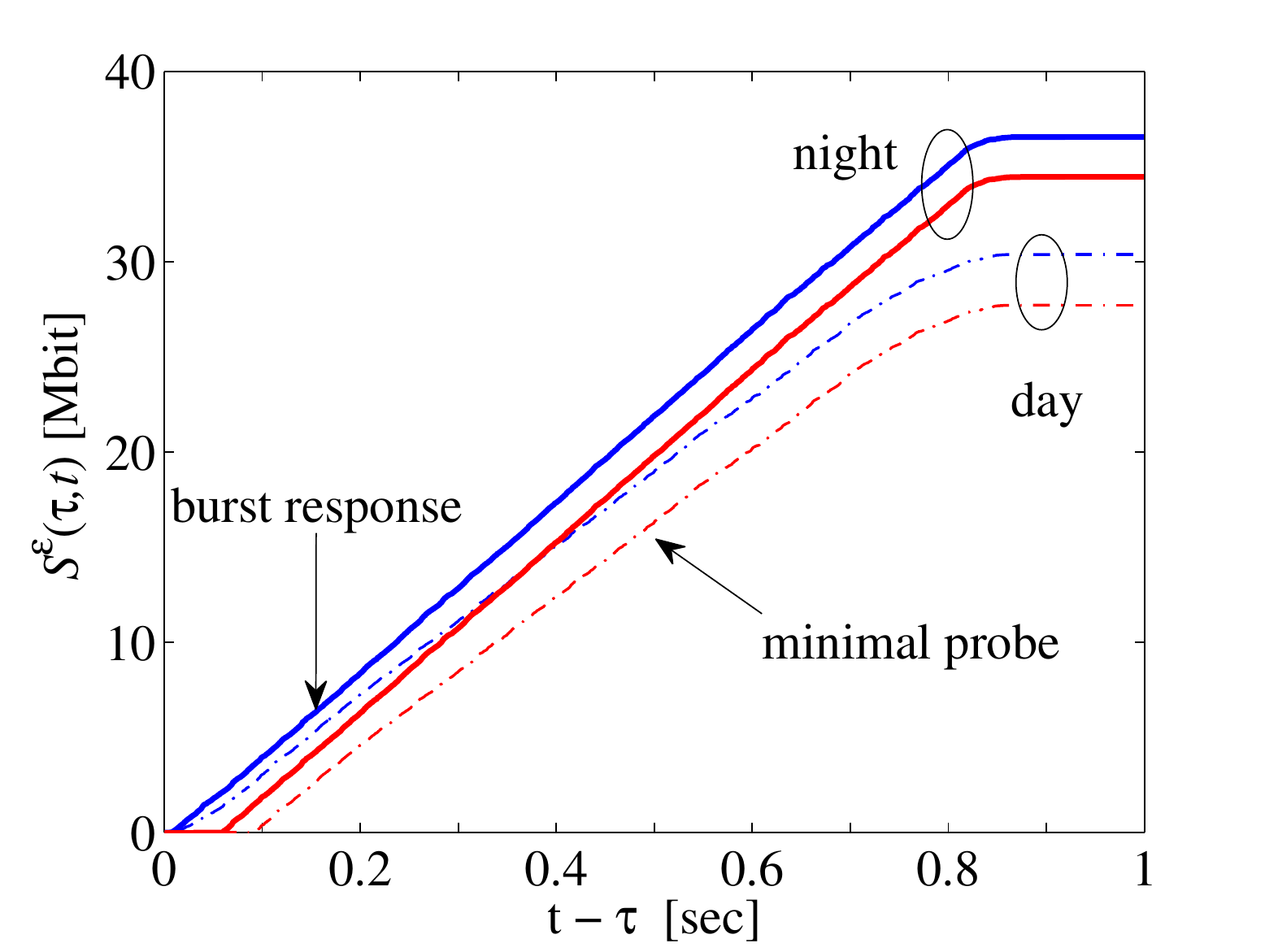}
\label{fig:lte}
}
\subfigure[HSPA]{
\includegraphics[width=0.51\columnwidth]{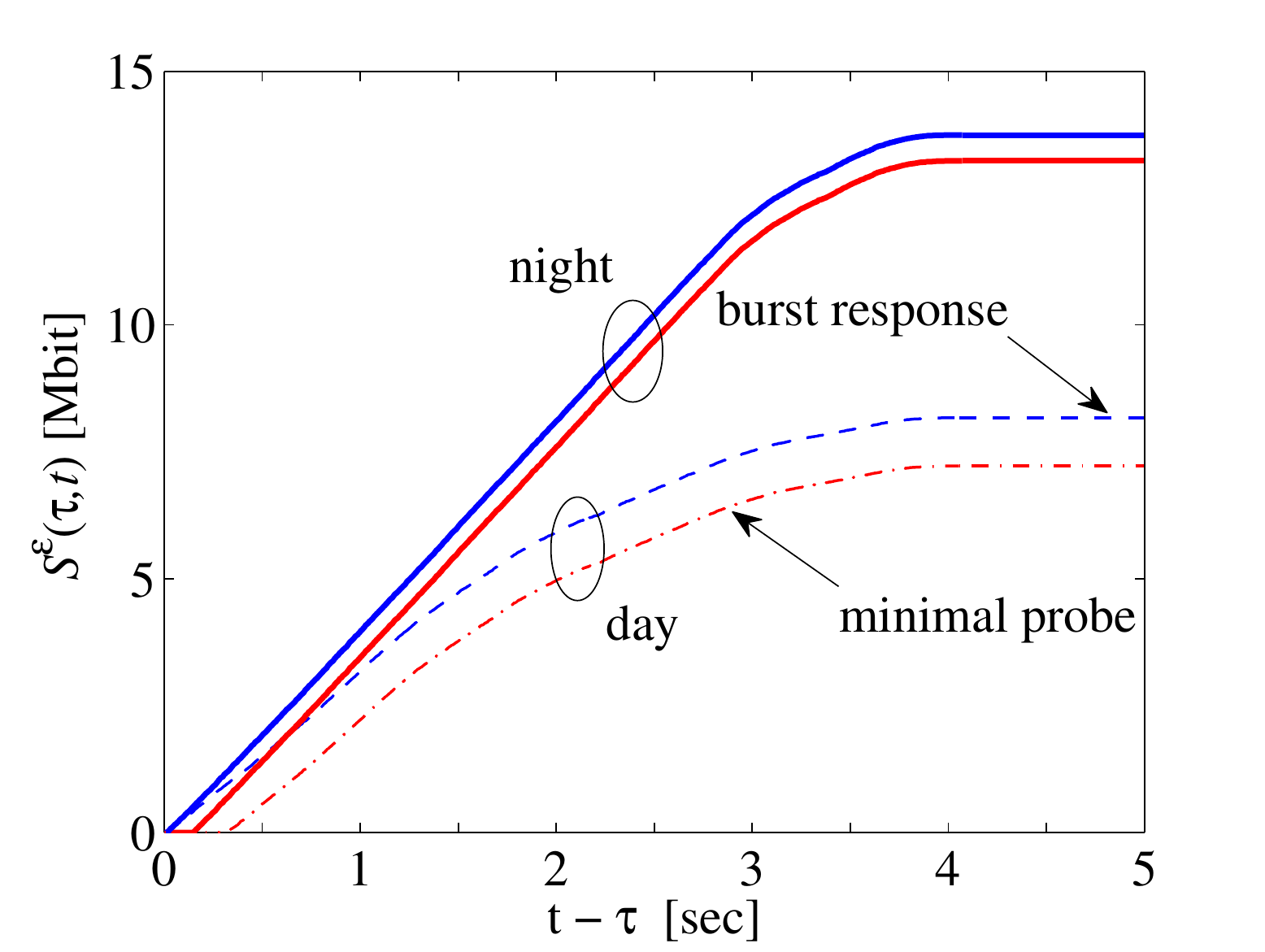}
\label{fig:hspa}
}
\subfigure[EDGE]{
\includegraphics[width=0.51\columnwidth]{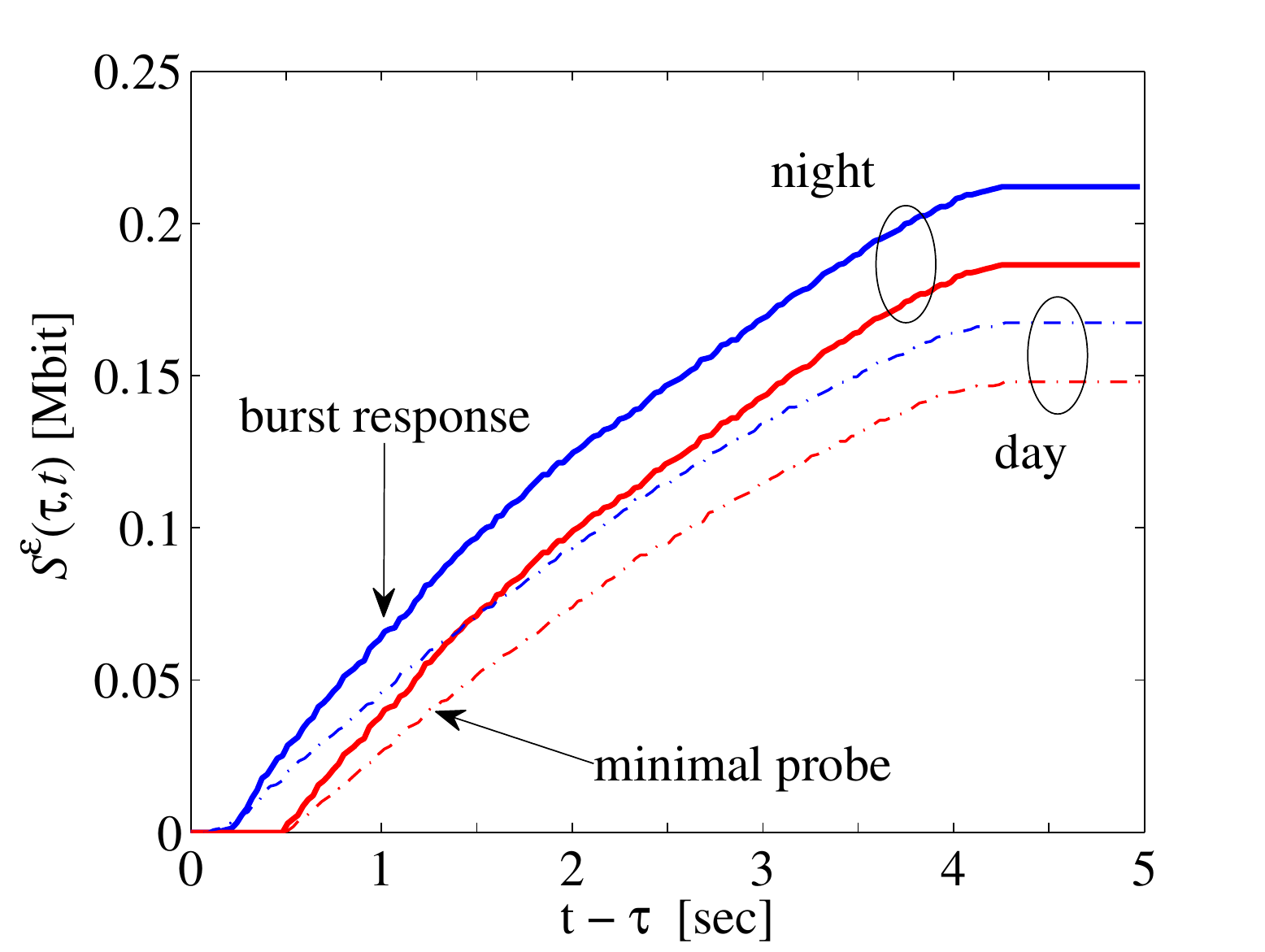}
\label{fig:edge}
}
\caption{Service curve estimates of LTE, HSPA, and EDGE. Solid lines show estimates obtained during the night, and dashed lines during the day, respectively.}
\label{fig:technologyevolution}
\end{center}
\end{figure*}
Fig.~\ref{fig:technologyevolution} evaluates service curve estimates of LTE, HSPA, and EDGE that are obtained for each technology during the day and during the night respectively. First, we consider only the solid curves that apply for the night.

Compared to LTE, the service curves of HSPA and EDGE exhibit the same shape, however, with less pronounced segments, particularly for EDGE. Further, the estimates $\mathcal{S}^{\varepsilon}_{mp}(\tau,t)$ of minimal probing and $\mathcal{S}^{\varepsilon}_{br}(\tau,t)$ of burst probing show a larger relative difference for EDGE, indicating a lower accuracy. For HSPA the accuracy is good, like in case of LTE.

Next, we pay closer attention to the service curve estimates of HSPA in Fig.~\ref{fig:hspa}. Following the same reasoning as in Sec.~\ref{sec:servicecellular}, we find service outages on short time-scales that result in a service of zero for intervals $t-\tau \le 15$~ms, a stationary latency of 130 ms, a transient latency of approximately 1 sec, and a capacity limit of close to 4 Mbps. Further, we notice that after the transient latency, the capacity limit is approached more slowly than in case of LTE, expressed by the bent segment in the region $3 \le t-\tau \le 3.9$~sec. The segment emphasizes once more the advantage of minimal probing that features a correspondingly reduced rate during this time.

In case of EDGE, the service curve estimates show an even stronger bend than in case of HSPA that reveals less explicit parameters. The estimates indicate service outages of up to 200 ms, a stationary delay of 300 ms, a transient delay of 500 ms, and an effective capacity limit of about 70 kbps.

\begin{figure}[b]
\hspace{-10pt}
\subfigure[HSPA]{
\includegraphics[width=0.51\columnwidth]{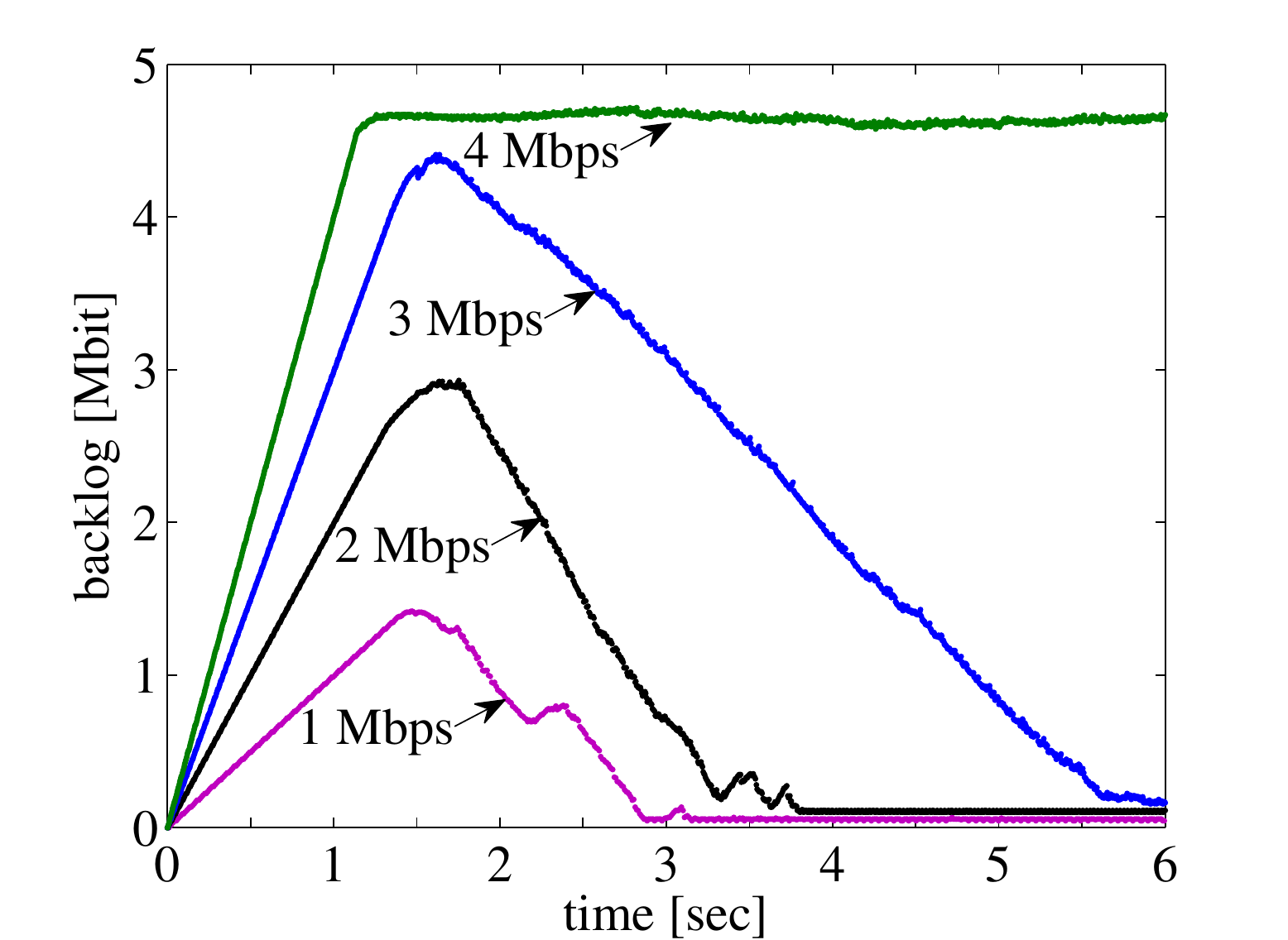}
\label{fig:hspa_quantile}
}
\hspace{-10pt}
\subfigure[EDGE]{
\includegraphics[width=0.51\columnwidth]{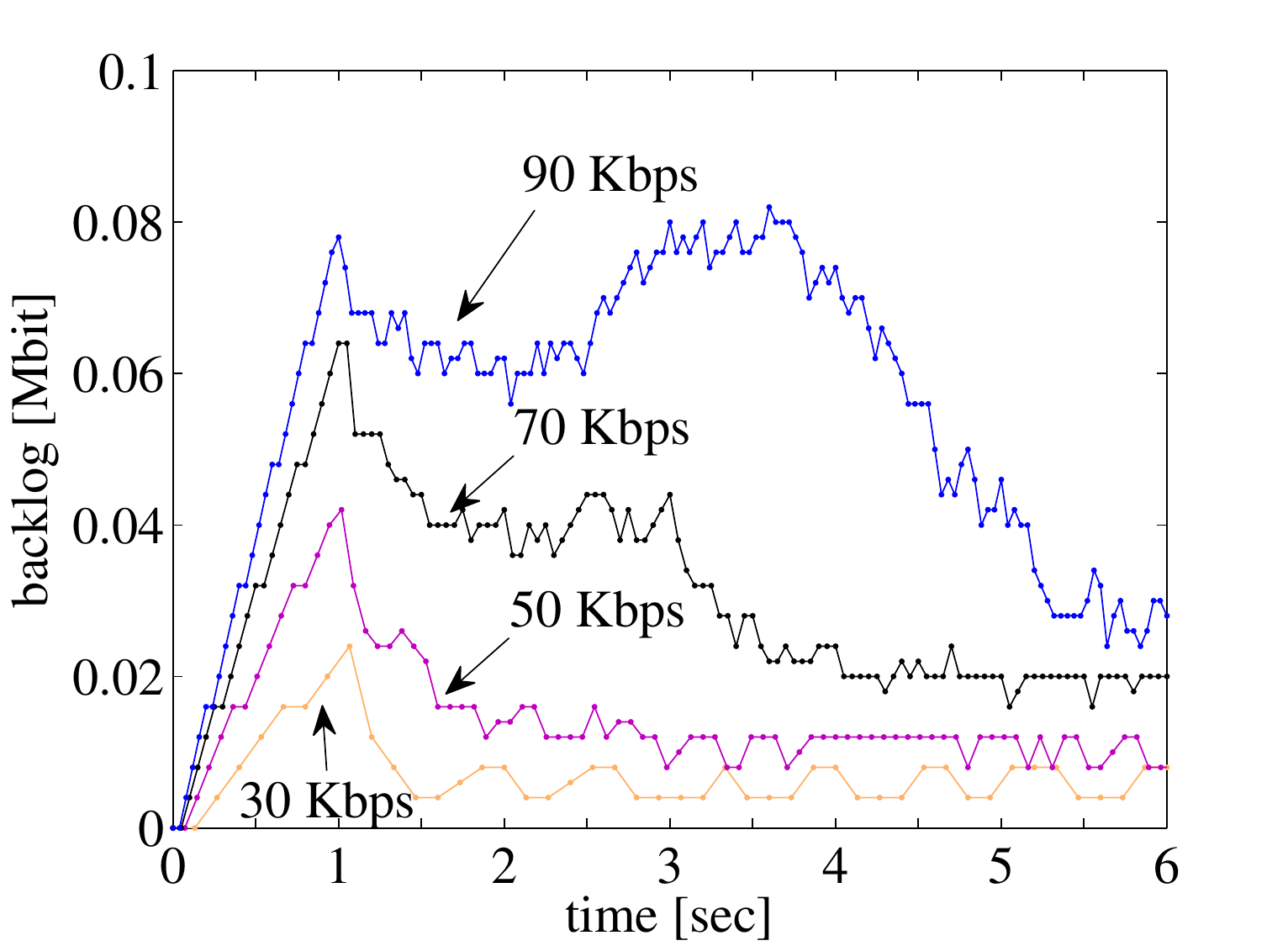}
\label{fig:gprs_quantile}
\hspace{-10pt}
}
\caption{Transient 0.95-backlog quantiles of HSPA and EDGE for CBR traffic.}
\label{fig:backlog_hspa_edge}
\end{figure}
Corresponding 0.95-quantiles of the backlog for CBR traffic with different rates are shown in Fig.~\ref{fig:backlog_hspa_edge}. Clearly, the long transient delay causes a backlog overshoot that can be an order of magnitude larger than the stationary backlog. Compared to LTE, the backlog of HSPA shows a less sharp peak that has a rounded top in the region around 1.5 sec. The effect is due to the slow ramp-up of the transmission rate after connection establishment. The backlog is cleared after a relaxation time of, e.g., about 5 sec for a traffic rate of 3 Mbps.
%
%
\subsection{Diurnal Characteristics}
\label{sec:diurnalcharacteristics}
We also include service curve estimates that are obtained during the day in Fig.~\ref{fig:technologyevolution}, plotted as dashed lines. Compared to the night, all daytime measurements show the same general trend, i.e., a reduction of the service, that is most pronounced for HSPA. The effect may be attributed to the utilization of the network by other users that is known to follow a diurnal pattern. Further, we notice a greater volatility of the estimates at daytime, except for EDGE, that is reflected by a reduced accuracy, i.e., a larger deviation $B^{\varepsilon}(t)$ of the lower estimate $\mathcal{S}^{\varepsilon}_{mp}(\tau,t)$ from the upper bound $\mathcal{S}^{\varepsilon}_{br}(\tau,t)$. Notably, the transient plus stationary latency, expressed by the flat area in the upper right of the curves, is, however, only marginally affected by the time of measurement.

For an example, we consider the case of LTE where the effective capacity limit is reduced from 44 Mbps at night to 38 Mbps during the day while $B^{\varepsilon}(t)$ increases from 2.1~Mb to 2.7~Mb. An analysis of the backlog for selected traffic rates, see Fig. \ref{fig:backlog_lte_day_night}, confirms the observations. While there is little difference between day and night for a traffic rate of 30 Mbps, we notice a significant increase of the backlog for 40 Mbps.
\begin{figure}[b]
\hspace{-10pt}
\subfigure[Backlog 30 Mbps]{
\includegraphics[width=0.51\columnwidth]{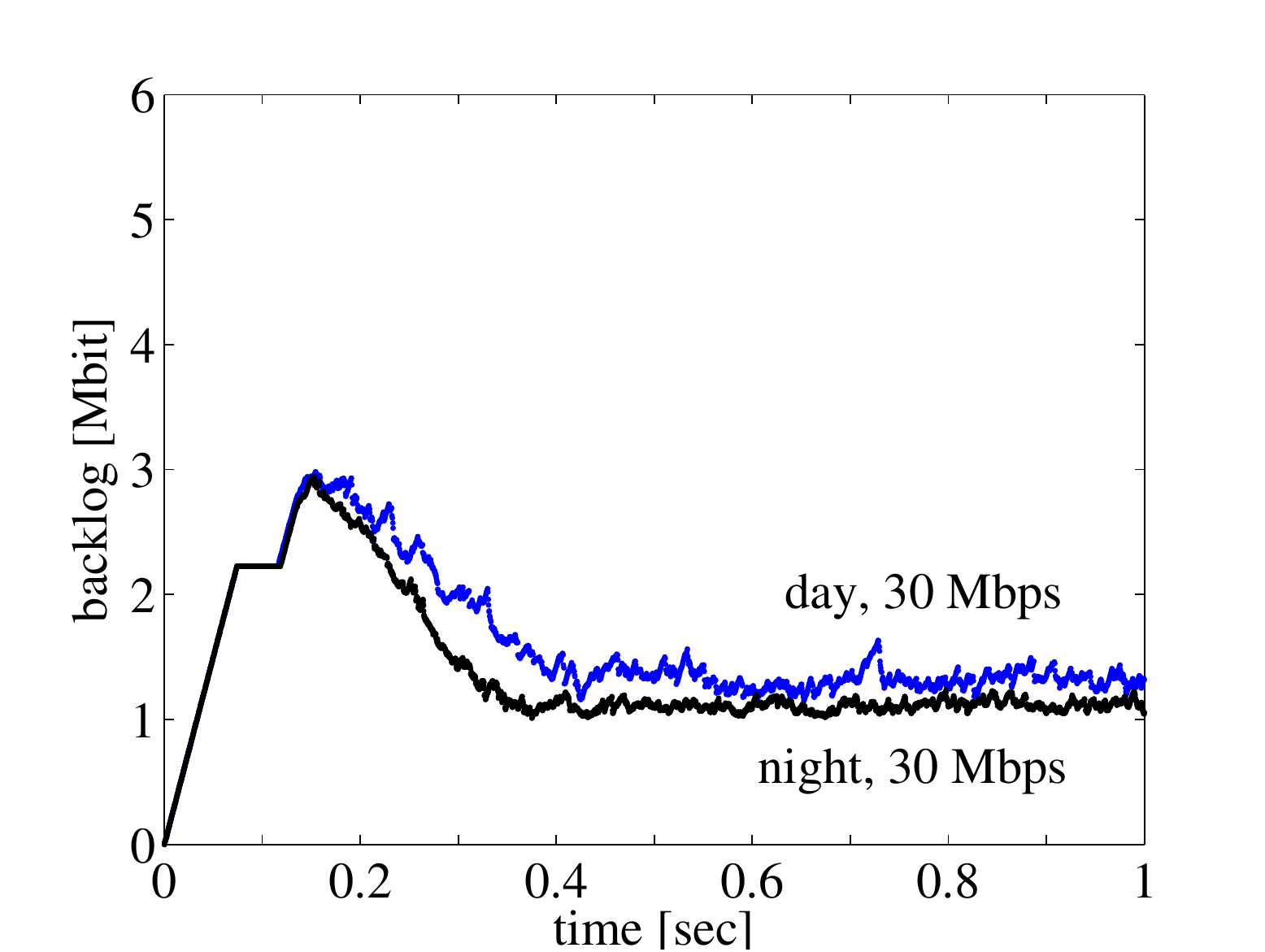}
\label{fig:lte_day_night_30mbps}
}
\hspace{-10pt}
\subfigure[Backlog 40 Mbps]{
\includegraphics[width=0.51\columnwidth]{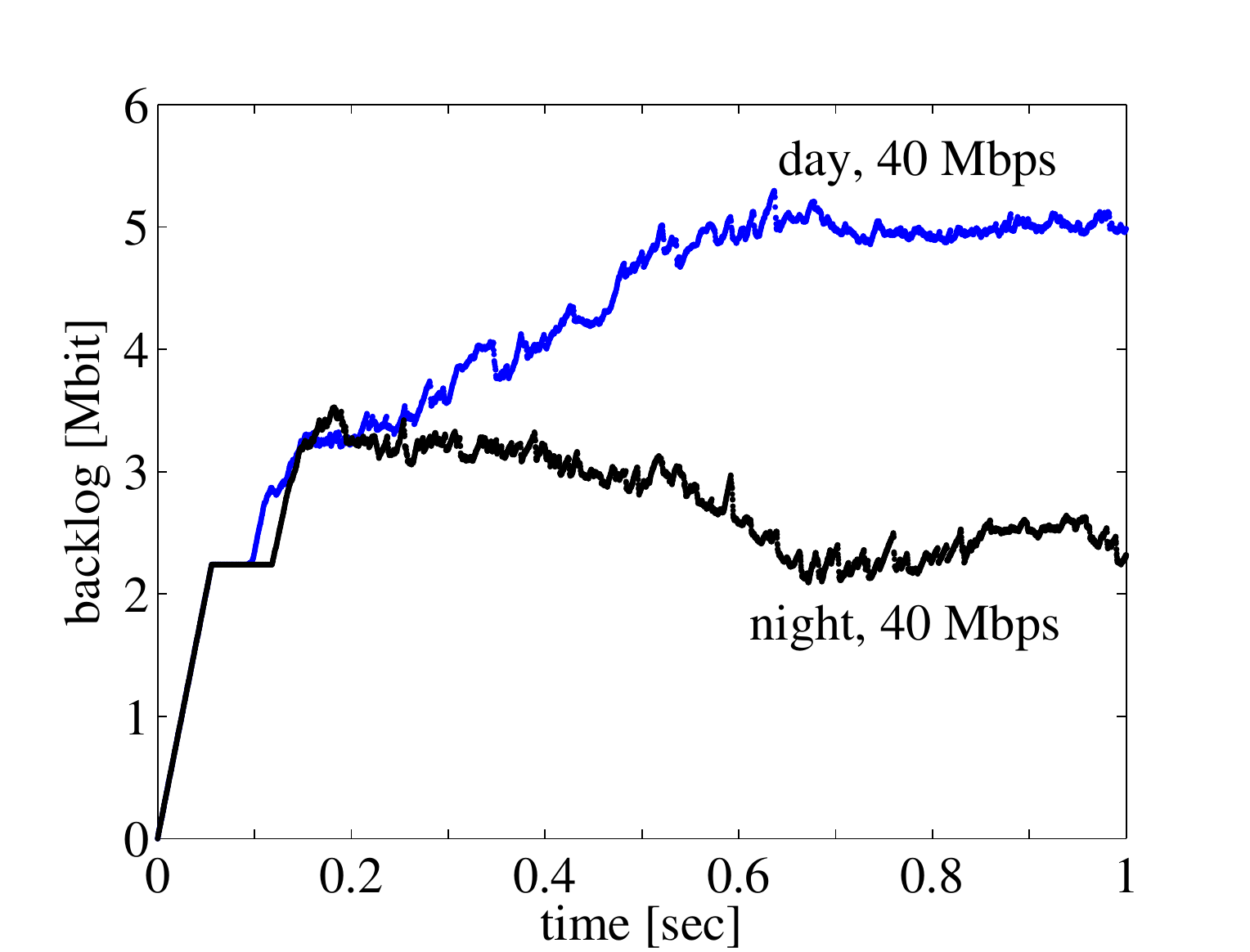}
\label{fig:lte_day_night_40mbps}
\hspace{-10pt}
}
\caption{Transient 0.95-backlog quantiles of LTE during day and night.}
\label{fig:backlog_lte_day_night}
\end{figure}
%
%
\section{Conclusions}
\label{sec:conclusion}
This work contributed a notion of non-stationary service curves that enables the analysis of transient phases. We contributed models of systems with sleep scheduling and provided insights into their transient behavior. Further, we considered measurement-based methods for identification of systems, using a black-box model. We discovered that existing probing methods cannot accurately estimate non-stationary service curves. The difficulties were related to the non-convexity and super-additivity of the service. We devised a novel two-phase probing method. First a minimal probe is estimated that is adapted to the network. In a second step, the minimal probe is used to obtain a service curve estimate with a defined accuracy. Taking advantage of the estimation method, we reported results from a comprehensive measurement study of cellular networks with sleep scheduling, including EDGE, HSPA, and LTE. The service curve estimates showed characteristic features of the cellular data service that explained the observation of significant transient overshoots and long relaxation times. While delays in the range of seconds have been measured for EDGE and HSPA, it has been found that LTE achieves an improvement by an order of magnitude. We believe that the general non-stationary service curve approach and the modelling and measurement-based identification methods lay a foundation for transient analysis that has the potential to provide insights into a variety of relevant other systems.
%
%
\section*{Appendix}
\subsubsection{Derivation of Eq.~\eqref{eq:statisticalenvelope}}
We consider the parameterized envelope
\begin{equation*}
\mathcal{A}^{\varepsilon}(t) = \rho_A t + \sigma_A .
\end{equation*}
with rate parameter $\rho_A > 0$ and burstiness parameter $\sigma_A \ge 0$. To obtain $\rho_A$ and $\sigma_A$, we insert $\mathcal{A}^{\varepsilon}(t)$ into Eq.~\eqref{eq:statisticalenvelopedefinition} and derive
\begin{align*}
1 - &\mathsf{P}[A(\tau,t) \le \rho_A (t-\tau) + \sigma_A,\, \forall \tau \in [0,t] ] \\
=&\mathsf{P}\biggl[ \max_{\tau \in [0,t-1]} \{ A(\tau,t) - \rho_A (t-\tau) \} > \sigma_A \biggr] \\
=&\mathsf{P}\biggl[ \max_{\tau \in [1,t]} \bigl\{ e^{\theta (A(t-\tau,t) - \rho_A \tau)} \bigr\} > e^{\theta\sigma_A} \biggr] ,
\end{align*}
for $\theta > 0$, where we used that $A(t,t)=0$. Next, we fix $t > 0$ and consider the process $U(\tau) = e^{\theta (A(t-\tau,t) - \rho_A \tau)}$, for $\tau \in [0,t]$. It follows that $U(\tau+1) = U(\tau) e^{\theta (A(t-\tau-1,t-\tau) - \rho_A)}$ and using the independence the conditional expectation becomes
\begin{align*}
&\mathsf{E}[U(\tau+1)|U(\tau),U(\tau-1),\dots, U(1)] \\
=& U(\tau) \mathsf{E}[e^{\theta(A(t-\tau-1,t-\tau)}] e^{-\theta \rho_A} .
\end{align*}
Since for iid increments $\mathsf{E}[e^{\theta(A(t-\tau-1,t-\tau)}] = \mathsf{M}_A(\theta,1)$, it follows that $\mathsf{E}[U(\tau+1)|U(\tau),U(\tau-1),\dots U(1)] = U(\tau)$ is a martingale if $e^{\theta \rho_A} = \mathsf{M}_A(\theta,1)$. Hence, we obtain parameter $\rho_A = \ln \mathsf{M}_A(\theta,1)/\theta$ so that $\rho_A t = \ln \mathsf{M}_A(\theta,t)/\theta$. By application of Doob's martingale inequality~\cite[Theorem 3.2, p. 314]{doob:stochasticprocesses} and the reformulation from~\cite{jiang:noteonsnetcalc} we have for non-negative martingales $U(\tau)$ for $\tau \ge 1$ that
\begin{equation*}
x \mathsf{P} \left[\max_{\tau \in [1,t]} \{ U(\tau \} \ge x \right] \le \mathsf{E}[U(1)] .
\end{equation*}
With $\mathsf{E}[U(1)] = 1$ and $x = e^{\theta \sigma_A}$ it follows that
\begin{equation*}
\mathsf{P} \left[\max_{\tau \in [0,t]} \{ A(\tau,t) - \rho_A(\theta) (t-\tau)\} > \sigma_A \right] \le e^{-\theta \sigma_A} .
\end{equation*}
Finally, we let $\varepsilon = e^{-\theta \sigma_A}$ to obtain $\sigma_A = -\ln\varepsilon/\theta$.
\subsubsection{Derivation of Eq.~\eqref{eq:statisticalserviceenvelope}}
The derivation of Eq.~\eqref{eq:statisticalserviceenvelope} extends~\cite{Fidler:2014:CDE} to non-stationary processes. It employs basic steps from the stochastic network calculus~\cite{li:effectivebandwidthcalculus2, ciucu:networkservicecurvescaling2}. We use the complementary formulation of Eq.~\eqref{eq:bivariateenvelope} to define
\begin{equation*}
\xi := \mathsf{P}[\exists \tau \in [0,t]:  S(\tau,t) < \mathcal{S}^{\varepsilon}(\tau,t)] \le \varepsilon .
\end{equation*}
We prove that $\mathcal{S}^{\varepsilon}(\tau,t)$ defined in Eq.~\eqref{eq:statisticalserviceenvelope} satisfies Eq.~\eqref{eq:bivariateenvelope} by showing that $\xi \le \varepsilon$. Using the union bound and Chernoff's lower bound $\mathsf{P}[X \le x] \le e^{\theta x} \mathsf{M}_X(-\theta)$ for $\theta \ge 0$ it holds that
\begin{equation*}
\xi \le \! \sum_{\tau=0}^{t-1} \mathsf{P}[S(\tau,t) < \mathcal{S}^{\varepsilon}(\tau,t)] \le \! \sum_{\tau=0}^{t-1} e^{\theta(\tau,t) \mathcal{S}^{\varepsilon}(\tau,t)} \mathsf{M}_S(-\theta,\tau,t) ,
\end{equation*}
where $\theta(\tau,t) \ge 0$ is a set of free parameters. The case where $\tau=t$ is omitted since $S(t,t) = 0$ and $\mathcal{S}^{\varepsilon}(t,t) \le 0$ by definition. By insertion of $\mathcal{S}^{\varepsilon}(\tau,t)$ from Eq.~\eqref{eq:statisticalserviceenvelope} it follows that
\begin{equation*}
\xi \le \rho\varepsilon \! \sum_{\tau=0}^{t-1} \! e^{-\rho(t-\tau)} = \rho\varepsilon \! \sum_{\upsilon=1}^{t} \! e^{-\rho\upsilon} \le \rho\varepsilon \! \int_{0}^{\infty} \!\!\! e^{-\rho y} dy = \varepsilon
\end{equation*}
where each summand is bounded by $e^{-\rho\upsilon} \le \int_{\upsilon-1}^{\upsilon} e^{-\rho y} dy$ since $e^{-\rho\upsilon}$ is decreasing. Finally, letting $t \rightarrow \infty$ and solving the integral completes the proof that $\xi \le \varepsilon$ for all $t \ge 0$.
\subsubsection{Proof of Lem.~\ref{lem:additivity}}
By definition of $h$ we have
\begin{align}
&h(s,t) + h(t,u) = f \otimes g (s,t) + f \otimes g (t,u) \nonumber \\
&= \inf_{\tau \in [s,t]} \inf_{\upsilon \in [t,u]} \{ f(s,\tau) + f(t,\upsilon) + g(\tau,t) + g(\upsilon,u) \}.
\label{eq:doubleinfadditivity}
\end{align}

i) Given $f$ and $g$ are super-additive. From Eq.~\eqref{eq:doubleinfadditivity} we have
\begin{align}
&h(s,t) + h(t,u) \nonumber \\
&\le \inf_{\tau \in [s,t]} \inf_{\upsilon \in [t,u]} \{ f(s,\upsilon) - f(\tau,t) + g(\tau,t) + g(\upsilon,u) \} \nonumber \\
&= \inf_{\upsilon \in [t,u]} \{ f(s,\upsilon) + g(\upsilon,u) \} + \inf_{\tau \in [s,t]} \{ g(\tau,t) - f(\tau,t) \} \nonumber \\
&\le \inf_{\upsilon \in [t,u]} \{ f(s,\upsilon) + g(\upsilon,u) \}.
\label{eq:subadditivitypart1}
\end{align}
In the first line, we estimated $f(s,\tau) + f(\tau,t) + f(t,\upsilon) \le f(s,\upsilon)$ due to the super-additivity of $f$. In the second line, we rearranged the infima, and in the third line, we estimated $\inf_{\tau \in [s,t]} \{ g(\tau,t) - f(\tau,t) \} \le g(t,t) - f(t,t) = 0$ since $f(t,t),g(t,t)=0$ for all $t \ge 0$. Similarly, using the super-additivity of $g$ we derive from Eq.~\eqref{eq:doubleinfadditivity} that
\begin{equation}
h(s,t) + h(t,u) \le \inf_{\tau \in [s,t]} \{ f(s,\tau) + g(\tau,u) \} .
\label{eq:subadditivitypart2}
\end{equation}
Combining Eq.~\eqref{eq:subadditivitypart1} and Eq.~\eqref{eq:subadditivitypart2} we obtain
\begin{equation*}
h(s,t) + h(t,u) \le \inf_{\tau \in [s,u]} \{ f(s,\tau) + g(\tau,u) \} = h(s,u) ,
\end{equation*}
which proves the super-additivity of $h$.

ii) For the special case of additive univariate functions $f(s,t) = f(t-s)$ and $g(s,t) = g(t-s)$, that depend only on the difference $t-s$ and not on the absolute values of $s$ and $t$, it follows that $h(s,t) = f \otimes g(t-s) = h(t-s)$ is also univariate. Using the additivity of $f$ and $g$, Eq.~\eqref{eq:doubleinfadditivity} yields that
\begin{align*}
& h(t-s) + h(u-t) \\
&= \inf_{\tau \in [s,t]} \inf_{\upsilon \in [t,u]} \{ f(\tau-s+\upsilon-t) + g(t-\tau+u-\upsilon) \} \\
&= \inf_{\varsigma \in [s+t,t+u]} \{ f(\varsigma-s-t) + g(t+u-\varsigma) \} \\
&= \inf_{\varsigma \in [0,u-s]} \{ f(\varsigma) + g(u-s-\varsigma)\} = h(u-s)
\end{align*}
which proves the additivity of $h$.\qed
\subsubsection{Proof of Lem.~\ref{lem:additivitymin}}
By definition of $h$ we have
\begin{align*}
&h(s,t) + h(t,u) \\
&= \min \{f(s,t), g(s,t)\} + \min \{f(t,u), g(t,u)\} \\
&\le \min \{f(s,u), g(s,u), f(s,t)+g(t,u), g(s,t)+f(t,u)\} \\
&\le h(s,u).
\end{align*}
In the second line we used the super-additivity of $f$ and $g$ and in the third line $\min\{f(s,u),g(s,u)\}=h(s,u)$ and $\min\{h(s,u),x\} \le h(s,u)$ for any $x$.\qed
\pagebreak
\bibliographystyle{IEEEtran}
\bibliography{IEEEabrv,IEEEfidler}
\end{document}